\newcommand{\smallO}[1]{\ensuremath{\mathop{}\mathopen{}{\scriptstyle\mathcal{O}}\mathopen{}\left(#1\right)}}
\newcommand{\reels}{\mathbb{R}} 
\newcommand{\relatifs}{\mathbb{Z}}
\newcommand{\esp}{\mathbb{E}} 
\newcommand{\proba}{\mathbb{P}}
\newcommand{\espe}[1]{\mathbb{E}\left[#1\right]}
\newtheorem{theoreme}{Theorem}
\newtheorem{proposition}[theoreme]{Proposition}
\newtheorem{conjecture}[theoreme]{Conjecture}
\newtheorem{assumption}{Assumption}
\newtheorem{definition}[theoreme]{Definition} 
\newtheorem{lem}[theoreme]{Lemma}
\newtheorem{remarque}[theoreme]{Remark}
\newtheorem{nota}[theoreme]{Notation}
\newtheorem{consequence}[theoreme]{Consequence}
\newtheorem{mainresult}[theoreme]{Main Result}
\newcommand{\supp}{{\mathrm{supp}}}
\DeclareRobustCommand\circled[1]{\tikz[baseline=(char.base)]{
		\node[shape=circle,draw,inner sep=2pt] (char) {#1};}}
\begin{document}

\preprint{APS/123-QED}

\title{Mean-field analysis of a neural network with stochastic STDP}

\author{Pascal Helson}
\email{pascal.helson@u-bordeaux.fr}
\affiliation{UMR 5293, IMN, Univ. Bordeaux, CNRS, Bordeaux, France.}

\author{Etienne Tanré}%
\affiliation{Université Côte d'Azur, Inria, CNRS, LJAD, France}%
\author{Romain Veltz}%
\affiliation{Université Côte d’Azur, Inria, France}

\begin{abstract}
Analysing biological spiking neural network models with synaptic plasticity has proven to be challenging both theoretically and numerically. In a network with \(N\) all-to-all connected neurons, the number of synaptic connections is on the order of $N^2$, making these models computationally demanding.
Furthermore, the intricate coupling between neuron and synapse dynamics, along with the heterogeneity generated by 
plasticity, hinder the use of classic theoretical tools such as mean-field or slow-fast analyses. 
To address these challenges, we introduce
a new variable which we term a typical 
neuron $X$. Viewed as a post-synaptic neuron, $X$ is composed of the activity state $V$, the time since its last spike $S$, and the 
empirical distribution $\xi$ of the triplet $V$, $S$ and $W$ (incoming weight) associated to the pre-synaptic neurons.
In particular, we study a stochastic spike-timing-dependent plasticity (STDP) model of connection in a probabilistic Wilson-Cowan 
spiking neural network model, which features binary neural activity.
Taking the large $N$ limit, we obtain from the empirical distribution of the typical neuron a simplified yet accurate representation of the original spiking network. This mean-field limit is a piecewise deterministic Markov process (PDMP) of McKean-Vlasov type, 
where the typical neuron dynamics depends on its own distribution. We term this analysis McKean-Vlasov mean-field (MKV-MF).
Our approach not only reduces computational complexity but also provides insights into the dynamics of this spiking neural 
network with plasticity. The model obtained is mathematically exact and capable of tracking transient changes. This analysis 
marks the first exploration of MKV-MF dynamics in a network of spiking neurons interacting with STDP. 
\end{abstract}

\maketitle

During the last few decades, synaptic plasticity has been widely studied and its importance in the study of neural networks is far from being completely understood. Many fascinating
questions remain on the network structure formation \cite{ocker2015self},
its stability \cite{mongillo2017intrinsic,ocker2019training} and its function \cite{brzosko2019neuromodulation}. 
From a modelling point of view, one of the most studied plasticity rule is the so-called spike-timing-dependent plasticity (STDP). 
Biological neural network models with STDP pose challenges for both theoretical and numerical analyses. The numerical bottleneck arises from 
the $N^2$ scaling of synapses' number when a large number $N$ of neurons is considered. Furthermore, the intricate coupling between the dynamics of the spiking
neurons and the synapses, along with plasticity-induced heterogeneity pose many difficulties.

One approach to address
these challenges is to assume that plasticity is very slow compared to neural activity, and thus to leverage slow-fast theory. In that context, given a
pair of spikes, the model of synaptic weight dynamics can be deterministic or stochastic. In the deterministic case, each pair leads to a tiny
(infinitesimal) increase in weights \cite{kempter1999hebbian} that usually depends on the time between spikes according to biological experiments of 
STDP \cite{bi1998synaptic}; see \cite{robert2021stochastic,robert2021stochastic_scaling} for a general mathematical formulation of STDP rules and their 
scaling analysis on one synapse dynamics. Hence, we may observe a significant (macroscopic) change of the weights only after several of such spike pairs. 
In the stochastic case, every pair of spikes can lead to a macroscopic weight change with a probability function of the spike time differences~\cite{o2005graded}: 
in this context, the slow property of the plasticity is a consequence of the small value of the probability to change 
for each pair of spikes~\cite{appleby2005synaptic,appleby2006stable,helson2017new,helson2021plasticity}. 
While these models show similarities when averaging across realisations, their differing definitions require the use of separate analytical approaches (deterministic versus probabilistic).

However, for both stochastic and deterministic cases, results remain incomplete when considering neural networks with STDP. Indeed, they involve intractable quantities; 
respectively the stationary distribution of the neural network \cite{helson2017new,helson2021plasticity} and complex spike statistics in heterogeneous 
networks \cite{ocker2015self}.
In addition, they both still involve $N$ neurons as no mean-field (MF) has been carried out on them. To be complete here, a MF limit was derived in a spiking network assuming slow plasticity depending on the global activity (in contrary to pairwise activity dependence in STDP) and using probabilistic tools \cite{galtier2012multiscale} or deterministic tools (PDE analysis on the Fokker-Planck equations) \cite{perthame2017distributed}.

Finally, the fact that plasticity dynamics is slow compared to the neural dynamics is controversial. Indeed, it has been shown that synaptic plasticity 
can occur at very different timescales. For example, the shrinkage and enlargement of dendritic spines are respectively of the order of dozen of minutes and one 
minute \cite{kasai2021spine}. In addition, receptor resources (responsible for the strength of a synaptic connection) can get in/out the cell at the order of 
milliseconds \cite{choquet2013dynamic}. On the pre-synaptic side, bouton formation occurs on the order of a few minutes \cite{vasin2019two}. 
Finally, it was recently shown that the Drosophila mushroom body is segmented in different clusters where plasticity has different timescales regulated 
by dopamine; see \cite{aso2016dopaminergic} for the original paper and \cite{fulton2024common} for a review of recent advances on this topic. 
All of this tends towards multiple plasticity timescales within the brain and not only slow ones compared to neural dynamics \cite{rodrigues_stochastic_2023}.

Hence, the slow-fast assumption is not universally applicable in the brain~\cite{zenke2017temporal,lansner2023fast}, and it is unclear how to
analyse such complex systems without relying on the latter. The numerical implementation of STDP in a neural network model has already been done in many previous studies; see for 
example~\cite{masquelier2008spike,masquelier2009competitive,gilson2010emergence, gilson2011stability,ocker2015self}.
However, rigorous mathematical analysis of such complex interaction has yet to be done. In addition, no simplified model -- such as the one obtained with MF -- has been proposed, hence leaving the computational bottleneck unsolved.
In the setting proposed by Pechersky et al.~\cite{pechersky2017stochastic}, the synaptic plasticity and the neural dynamics operate at comparable timescales.
The weights in their plastic spin model form a transient Markov chain that almost surely diverges to infinity, limiting the biological plausibility of this framework.
In non-transient cases, one natural mathematical method to use is a MF approximation which has 
already been used to analyse some interacting neural networks without plasticity; see for instance 
\cite{montbrio2015macroscopic,de2015hydrodynamic, delarue2015particle,fournier2016toy,delattre2016statistical,chevallier2017mean,farkhooi2017complete}.

However, when dealing with spiking networks with plastic interactions, traditional MF theory tools become ineffective
due to the heterogeneity and dynamic nature of these interactions. As a result, a belief has emerged suggesting that MF theory is unsuitable for analysing networks
with STDP~\cite{huang2023new,lorenzi2023multi}. Despite this, a MF limit can still be derived by making certain compromises, such as assuming a symmetric STDP curve, and
employing theories like the Ott-Antonsen ansatz \cite{duchet2023mean}. Additionally, it has been suggested that MF theory does not account for spike correlations as it
dilutes the impact of individual spike pairs by a factor of $\frac{1}{N}$ and mainly depend on the mean of synaptic weights \cite{robert2021spontaneous}. Thereby, MF theory does not
take into account the synaptic weights' heterogeneity \cite{farkhooi2017complete}. This explains in part the lack of results in this area. Another reason is that MF limits have been derived using ansatz (like the famous Ott-Antonsen ansatz \cite{montbrio2015macroscopic,duchet2023mean}) or trying to find closed equations on the first order statistics of the variable of interest; the latter methods break down when considering heterogeneous and plastic networks.

Here, we use another approach based on deriving the limit equation for the empirical distribution of the variable of interest \cite{sznitman_topic}; we call it the McKean-Vlasov mean-field (MKV-MF) limit. Hence, our analysis marks the first
exploration of MKV-MF dynamics in a spiking neural network of interacting neurons with STDP; where, in response to the previous critics on MF, weights are heterogeneous and spike correlations are part of the time since last spike distributions. More specifically, we study in this article a probabilistic Wilson-Cowan neural network model in which neurons elicit spikes and interact through a stochastic STDP rule. The approach we developed is broadly applicable and may be adapted to other neural network models, for example featuring more biophysically realistic membrane potential dynamics.

The proposed framework is not limited to the study of STPD; it can also be used to derive MF models of other systems composed of interacting units such as spiking lasers \cite{dolcemascolo2020effective}, Ising models with plastic interactions \cite{pechersky2017stochastic} or epidemic processes on complex networks \cite{pastor2015epidemic}. In addition, it opens the door to new mathematical questions such as establishing the uniqueness of the solution to an equation similar to the limit system we derived
and confirming the convergence to a deterministic limit distribution solution of the latter \cite{sznitman_topic}. Moreover, studying the limit system would provide insight in the initial model. In particular, this study opens new avenues for preventing weight divergence without relying on soft or hard bounds. In addition, our approach significantly reduces simulation costs, crucial for models involving synaptic plasticity in neural networks. A pertinent example is the recent development of adaptive deep brain stimulation (DBS). Therefore, modelling the effects of DBS on the synaptic weights of stimulated neurons using STDP could aid in identifying stimulation patterns that lead to long-term changes \cite{duchet2023mean}. The potential consequences are huge as DBS is used to alleviate symptoms in many brain diseases like movement disorders, depression, obsessive-compulsive disorders (OCD), Alzheimer's disease and epilepsy \cite{lozano2019deep}.

The paper is organised as follows. In section \ref{sec:Presentation of the model}, we first present the microscopic model (initial neural network description) before presenting the first hints on the macroscopic model. In particular, we introduce the \textit{new} variables describing the neural network from which we perform a MF analysis in section \ref{sec:Study of the empirical measure and its limit}. Hence, the latter is devoted to the derivation of the limit equation from the analysis of the empirical distribution of the \textit{new} neural network description. 
Thereby, a McKean-Vlasov equation \cite{mckean1966class} is derived on a typical neuron which is composed of: the neuron state, the time from its last
spike and the distribution of the triplet composed of the neuron state, the time since its last spike and its pre-synaptic weights. We finally perform numerical simulations on these last equations in section \ref{sec:Simulations} and compare the resulting activity to the finite system one (ground truth).

\section{Presentation of the model}\label{sec:Presentation of the model}
The model is a special case of the one studied in \cite{helson2017new,helson2021plasticity}.
We first present it before deriving its new description based on the typical neuron definition.
\subsection{The microscopic model}\label{sec:micro_model}
We study a network of $N$ binary neurons, $V_t^{i,N} \in \{0,1\}$, all-to-all connected via the matrix of synaptic weights, 
$W_t^N \in \relatifs^{N^2}$. 
Neuron $i$ is said to \textit{spike} when its membrane potential, $V_t^{i,N}$, 
{jumps from $0$ to $1$}. This occurs at rate $\alpha(I_t^{i,N})$ where
\begin{equation}\label{eq:20210215}
	I_t^{i,N} \coloneqq
	\frac{1}{N} \sum_j W_{t}^{ij,N} V_{t}^{j,N}
\end{equation}
is the incoming synaptic current into neuron $i$ at time $t$. The function $\alpha$ is positive and bounded (usually a sigmoid function). 

We denote $S_t^{i,N}\geq0$ the time spent since the last spike of the neuron $i$. We are interested in the PDMP such that for all $i$ the dynamics between jumps is
\begin{itemize}
    \item $dS_t^{i,N}=dt$.
\end{itemize}
The jumps are
\begin{itemize}
\item First, the jumps of $(V_t^{i,N},S_t^{i,N})$
\begin{align*}
(0,S_t^{i,N}) &\overset{\alpha(I_{t}^{i,N})}{\longrightarrow}  (1,0),\\
(1,S_t^{i,N}) &\overset{\beta > 0}{\longrightarrow}  (0,S_t^{i,N}),
\end{align*}
    \item \textbf{at a spiking time \(\tau_{i_0}\) of neuron \(i_0\)}, we have 
    \(V_{{\tau_{i_0}}^-}^{i_0,N} = 0\), \(V_{{\tau_{i_0}}}^{i_0,N} = 1\) and $S_{{\tau_{i_0}}}^{i_0,N}=0$. 
    The weights $W_{{\tau_{i_0}}^-}^{ji_0,N}$ and $ W_{{\tau_{i_0}}^-}^{i_0j,N}$ jump independently for all $j$:
    \begin{align*}
    W_{{\tau_{i_0}}^-}^{ji_0,N} &\xrightarrow[]{p^-(S_{{\tau_{i_0}}^-}^j,W_{{\tau_{i_0}}^-}^{ji_0,N})} W_{{\tau_{i_0}}^-}^{ji_0,N}-1 \\ 
    W_{{\tau_{i_0}}^-}^{i_0j,N} &\xrightarrow[]{p^+(S_{{\tau_{i_0}}^-}^j,W_{{\tau_{i_0}}^-}^{i_0j,N})} W_{{\tau_{i_0}}^-}^{i_0j,N}+1,
    \end{align*}
\end{itemize}
where $p^{\pm}$ are functions (representing jump probabilities) from $\reels^+ \times \relatifs$ to $[0,1]$. Thus, the firing rate is given by \(\alpha(I_t^{i,N})\) and the neurons return to their resting potential $0$ at constant rate $\beta > 0$. The weight $\frac{1}{N}W_t^{ij,N}$ represents the effect of a neuron $j$ on the neuron $i$ at time $t$.
Note that we do not assume that the diagonal elements \( W_t^{ii,N} \) are null. Instead, we assume that they follow dynamics similar to that of \( W_t^{ij,N} \) for \( i \neq j \). 
This means that whenever neuron \( i \) spikes, \( W_t^{ii,N} \) can either stay the same, increase, or decrease. Regardless of these changes, these weights do not affect the dynamics because they appear in the firing rates of neuron \( i \) only as \( W^{ii}_t V^i_t \). Since, when potentially firing, neuron \( i \) is at rest, \( V^i_t = 0 \), it makes the latter product null.

The neuron model (stochastic Wilson-Cowan model) has been widely studied \cite{bressloff2010metastable,benayoun2010avalanches}. It reproduces many biological features of a network such as oscillations, bi-stability, metastability, and memory formation \cite{litwin2014formation}. Thus, we obtain a model rich enough to reproduce biological phenomena, but still simple enough to be mathematically tractable and easy to be simulated with thousands of neurons.

We end the description of the microscopic model by giving assumptions
on the initial conditions.
\begin{assumption}\label{hyp:cdt-init}
For any number of neurons \(N\), the initial conditions
$\Big(\big(V_0^{i,N},S_0^{i,N},(W_0^{ij,N})_{1\leq j \leq N}\big)\Big)_{1\leq i \leq N}$ are assumed to be independent and identically distributed. 
The distribution $\rho_0$ of the second component $S_0^{i,N}$ does not depend on $N$, is absolutely continuous with respect to the Lebesgue measure 
and has a bounded density.
However, the third component $W$ belongs to $\mathbb Z^{N^2}$ and thus  its distribution depends on $N$.
\end{assumption}
Under this assumption, the neural network possesses two important properties that
are extensively used throughout this paper.
First, the times from the last spike are almost 
surely distinct, see Lemma~\ref{lem:supp-s-distinct}.
Second, 
at any time \(t\geq 0\),
their distributions are also absolutely continuous with respect to the Lebesgue measure (admits a probability density), see Lemma~\ref{lem:densite-S-t}.

\subsection{Towards the macroscopic model}
\subsubsection*{Definitions of the auxiliary variables}
In the following, with a slight abuse of notation, we also call
\textit{neuron} the process describing the neuron. The neuron $i$, $X_t^{i,N}$, includes the state of the neuron $V_t^{i,N}$, its time since last spike $S_t^{i,N}$ and the empirical distribution \(\xi_t^{i,N}\) of the triplets \((V_t^{j,N},S_t^{j,N},W_t^{ij,N})_{1\leq j \leq N}\), that is the joint distribution of the incoming weights and the neural state $(V,S)$ associated to them.
\begin{definition}\label{def:mesure-xi}
	We define the following random variable on the space of probability distributions on 
\begin{equation}\label{eq:Em}
	E_m \coloneqq  \{0,1\} \times \reels^+ \times \relatifs:
\end{equation}
\[
\forall i \in \{1,\cdots,N\}, \qquad
\xi_t^{i,N} \coloneqq   \frac{1}{N} \sum\limits_j \delta_{ ( V_t^{j,N}, S_t^{j,N}, W_t^{ij,N} ) },
\]
where $\delta$ is the Dirac delta distribution.

In particular, denoting by $\mathcal P_N(G)$ the set of atomic probability distributions on $G$ with $N$ distinct atoms 
with uniform weights $\frac{1}{N}$, we have that for all outcome $\omega \in \Omega$ 
$\xi_t^{i,N}(\omega) \in \mathcal P_N(E_m)$.

Using the function $I:\mathcal P(E_m) \rightarrow \reels$:
\begin{equation}\label{def:I_syn-current}
    \forall \xi \in \mathcal P(E_m), \quad
I(\xi) \coloneqq  \esp^\xi(WV) = \int_{E_m}  w v \ \xi(dv, ds, dw)
,
\end{equation}
we note that the knowledge of \(\xi_t^{i,N}\) is enough to obtain the 
incoming synaptic current \( I_t^{i,N} = I(\xi^{i,N}_t) \).

We can now re-write the model with the new definition of a \textit{neuron}.
The neurons are now described by the following triplets, for all $i \in \{1,\cdots,N\},$
\begin{align*}
	X_t^{i,N} \coloneqq  (V_t^{i,N}, S_t^{i,N}, \xi_t^{i,N})
\end{align*}
and the system state is then given by
\begin{align*}
    X^N_t \coloneqq  (X^{1,N}_t,\cdots, X^{N,N}_t).
\end{align*}
\end{definition}
Now that we have defined the auxiliary variables that describe the neural network,
we can define the empirical distribution associated to the \textit{new} neural network description, $(\mu_t^N)_{t \geq 0}$. Importantly, it is defined on a space that does not depend on $N$.
\begin{definition}\label{def:mesure-mu}
    The empirical distribution
    \begin{equation}\label{def:mu-xi}
    \mu_t^N \coloneqq  \frac{1}{N} \sum\limits_i \delta_{X_t^{i,N}} = \frac{1}{N} \sum\limits_i 
    \delta_{(V_t^{i,N}, S_t^{i,N}, \xi_t^{i,N})}
    \end{equation}
    is a random probability distributions over the space of probability distribution on
    \begin{equation}\label{eq:E}
    	E \coloneqq  \{0,1\} \times \reels^+ \times \mathcal{P}(E_m).
    \end{equation}
    Hence, for all $\omega \in \Omega$,
    \(\mu_t^N(\omega) \in \mathcal P_N(E) \subset \mathcal P(E) \).
\end{definition}
In the following, we keep in mind the randomness of $\mu_t^N$
and $(\xi_t^{i,N})_{1 \leq i \leq N}$ and we alleviate the notations by referring to the
outcomes $\omega \in \Omega$ only when it helps understanding. From the definition of
$(\mu_t^N)_{t \geq 0}$ and the ones of
$(\xi_t^{1,N})_{t \geq 0}, \cdots, (\xi_t^{N,N})_{t \geq 0}$, we note that they have
the same joint distribution on $V$ and $S$.
\begin{remarque}\label{rem:egalite-mu-xi}
    For all $i \in \llbracket 1, N \rrbracket$ and $t \geq 0$,
    we have by Definitions~\ref{def:mesure-xi} and~\ref{def:mesure-mu} that
    for all $(v,A) \in \{0,1\} \times \mathcal B (\reels^+)$,
    \[
    \mu_t^N (v, A, \mathcal{P}(E_m) ) = \xi_t^{i,N} ( v, A, \relatifs ).
    \]
    In what follows, we denote this property by:
    $\mu_t^N ( \cdot, \cdot, \mathcal{P}(E_m) ) = \xi_t^{i,N} ( \cdot, \cdot, \relatifs )$.
\end{remarque}

\subsubsection*{Dynamics of the system}
Under Assumption~\ref{hyp:cdt-init}, the initial \textit{neurons} $X_0^{1,N},\cdots,X_0^{N,N}$ 
are $N$ independent copies. This assumption is at the heart of our ability to describe the dynamics of $X_t^N$.
\begin{remarque}\label{rem:20210215}
    Even though the knowledge of \((\xi_t^{i,N})_{1\leq i\leq N}\) is sufficient to evaluate the spiking rates of the neurons \((X_t^{i,N})_{1\leq i\leq N}\) , they are empirical distributions and thus do not contain the \textit{labels}  of the pre-synaptic neurons. 
	In particular, as soon as one neuron, say neuron \(i_0\), has a jump of \(V^{i_0,N}_t\), we have to propagate this jump (and its potential consequences, such as a reset of \(S^{i_0,N}_t\) and changes in the synaptic weights) to every \((\xi_t^{i,N})\). For any \(i\), we know that one atom of \(\xi_t^{i,N}\) corresponds to the neuron \(i_0\) but we have to find it. Thanks to Lemma~\ref{lem:supp-s-distinct}, we identify this atom with the value \(S^{i_0,N}_t\) which is unique for every \textit{neuron} \(X^{i_0,N}_t\).
\end{remarque}
To do so, we denote by $\mathcal{T}_t$ the set of times $\tau\in [0,t]$ such that  $V_{\tau}^{i,N} \neq V_{\tau^-}^{i,N}$ 
for some \(i\in\{1,\cdots,n\}\). 
First, in between two consecutive jumps at times $\tau$ and $\tilde{\tau}$, $\tau<\tilde{\tau}$, 
the time since the last spike of all neurons increases linearly: for all $i \in \{1,\cdots, N\}$ and $t\in [\tau,\tilde{\tau}[$,
\begin{align*}
    S_t^{i,N} &= S_t^{i,N}+t-\tau,
    \\
    \xi_t^{i,N} &= \frac{1}{N} \sum_{ (\tilde{V}, \tilde{S}, \tilde{W}) \in \supp(\xi_{\tau}^{i,N}) } \delta_{ (\tilde{V}, \tilde{S}+t-\tau, \tilde{W})}.
\end{align*}
Now considering the jump part, we denote by $\mathcal{T}_t^{i} \subset \mathcal{T}_t$ the spiking times of neuron $i$ and 
$\tilde{\mathcal{T}}_t^{i} \subset \mathcal{T}_t$ the return to resting
potential of neuron $i$ ($V^i: 1 \to 0$). Among these jumping times, $\mathcal{T}_t^{ij,\pm} \subset \mathcal{T}_t^i$ stands for the spiking times of
neuron $i$ such that $W^{ij}$ potentiates/increases ($\mathcal{T}_t^{ij,+}$) and $W^{ji}$ depreciates/decreases ($\mathcal{T}_t^{ij,-}$).
Hence, for all $i \in \{1,\cdots,N\},$ and $\tau \in \mathcal{T}_t$,
\begin{align}\label{eq:dyn_xi}
    \begin{split}
    V_{\tau}^{i,N} & = 1 - V_{\tau^-}^{i,N},
    \\
    S_{\tau}^{i,N} & = S_{\tau^-}^{i,N} (1- V_{\tau^-}^{i,N}),
    \\
    \Delta \xi_{\tau}^{i,N} & = \xi_{\tau}^{i,N} - \xi_{\tau^-}^{i,N}
    \\
    & = 
    \frac{1}{N}\sum_{ (\tilde{V}, \tilde{S}, \tilde{W}) \in \supp(\xi_{\tau^-}^{i,N}) }
    \Biggl\{ 
    \\
    & 
    \sum_{j} 
    \mathbbm 1_{\{ \tau \in \mathcal{T}_t^{j}, \tilde S = S_{\tau^-}^{j,N} \}}
    \big[
    \delta_{\big(
    	1,0, \tilde W 
    	- \mathbbm 1_{\{ \tau \in \mathcal{T}_t^{ji,-}\}}
    	\big)} 
        -
            \delta_{\big(
    	0,\tilde S, \tilde W 
    	\big)} 
        \big]
    \\
    & 
    + \sum_{j}
    \mathbbm 1_{\{ \tau \in \mathcal{T}_t^{ij,+}, \tilde S = S_{\tau^-}^{j,N} \}}
    \big[\delta_{\big(
    	\tilde V,\tilde S, \tilde W 
    	+ 1
    	\big)}
- \delta_{\big(
    	\tilde V,\tilde S, \tilde W 
    	\big)}
        \big]
    \\
    & 
    + \sum_j \mathbbm 1_{\{\tau \in \tilde{\mathcal{T}}_t^j, \tilde S = S_{\tau^-}^{j,N} \}}
    \big[
    \delta_{\big(0,\tilde S, \tilde W\big)}
    -
    \delta_{\big(1,\tilde S, \tilde W\big)}    
    \big]
    \Biggr\},
    \end{split}
\end{align}
where the characteristic function $\mathbbm 1_{\{x \in A\}}$ is equal to $1$ if $x \in A$ and $0$ otherwise. 
The first line contains the jumps related to the spike of neuron $j \neq i$; reset of $S^j$ to $0$ and potential depression of $W^{ji}$.
The second line contains the potentiations of the weights $(W^{ij})_{j \neq i}$, this happens when the neuron $i$ spikes. 
The final line is the return of the neurons to their resting potential, from $V^j=1$ to $V^j=0$. Note that we don't see anymore the functions $p^\pm$ which are in fact hidden in the jumps $\tau^{ij,\pm}$. We describe them rigorously when giving the dynamics of \(\mu^N_t\). The latter follows directly from the one of \(X^N_t\) as the only difference is that there is no label anymore (as for the \(\xi\)). They can be recovered again thanks to the time since the last spike $\tilde{S}$ present in each atom $\tilde{X} = (\tilde{V},\tilde{S},\tilde{\xi})$ of \(\mu^N_t\).

\section{Main results}\label{sec:Study of the empirical measure and its limit}
Our first important result is the Markov property of the process $(\mu_t^N)_{t \geq 0}$ defined on $\mathcal P_N(E)$, see Proposition~\ref{prop:mu-Markov}
in the Supplementary Material.

The main result of this paper is to provide the possible deterministic limit
$(\mu_t^*)_{t \geq 0}$ of $(\mu_t^N)_{t \geq 0}$ when $N$ tends to infinity. In particular, we consider a process with distribution $(\mu_t^*)_{t \geq 0}$ as \textit{typical neuron} and denote it by $(X_t^*)_{t \geq 0} = (V_t^*,S_t^*,\xi_t^*)_{t \geq 0}$.
\begin{widetext}
    \begin{mainresult}\label{th:equation limite systeme de particules avec plasticite}
    Under the assumptions~\ref{hyp:cdt-init},~\ref{ass:lim-mu-star-1},~\ref{ass:lim-mu-star-2} and~\ref{ass:compabilite EDP} detailed in the Supplementary Material,
    the limit objects \(\mu^*_t\) and \((V_t^*,S_t^*,\xi_t^*)_{t \geq 0}\) satisfy the following McKean-Vlasov SDE:
    \begin{enumerate}
        \item \(\mu^*_t\) is the distribution of \((V_t^*,S_t^*,\xi_t^*)\): \(\mu^*_t = \mathcal{L}(V_t^*,S_t^*,\xi_t^*)\),
		\item $V_t^*: 0 \xrightleftharpoons[\beta]{\alpha(I(\xi_t^*))} 1$. We denote by $\mathcal{T}_{t}^*$ the spikes times, 
        that is the times  on $[0,t]$ when \(V\) jumps from \(0\) to \(1\),
		\item \( S_t^* = S_0^* + t - \sum_{\tau \in \mathcal{T}_t^*} S_{\tau^-}^* \),
        \item $\xi_t^*$ admits a density in $s$ such that $\xi_t^*(v,ds,w)= \xi_t^*(v,s,w)ds$ (abuse of notation) and between the spikes
        \begin{align*}
    		&\left\{
    		\begin{array}{ll}
    		\partial_t{\xi_t^*}(0,s,w) = -\partial_s {\xi_t^*}(0,s,w)
    		+ \beta{\xi_{t}^*}(1,s,w)
    		- \int_{ \mathcal{P}(E_m)}
    		\alpha\big(I(\xi')\big) 
    		\frac{{\xi_t^*}(0,s,w)}{{\xi_t^*}(0,s,\relatifs)}
    		\mu_t^*(0,s,d\xi')
    		\\
    		\\
    		{\xi_t^*}(0,0,w) \hspace{0.4em}= 0,
    		\end{array}
    		\right.
    		\\
    		\\
    		&\left\{
    		\begin{array}{ll}
    		\hspace{-0.5em}
    		\partial_t{\xi_t^*}(1,s,w) \hspace{-0.2em}= -\partial_s {\xi_t^*}(1,s,w)
    		- \beta {\xi_t^*}(1,s,w)
    		\\
    		\\
    		{\xi_t^*}(1,0,w)  \hspace{0.2em} = \int_{\reels^+ \times \mathcal{P}(E_m)}
    		\alpha\big(I(\xi')\big)
    		\frac{{p^-(S_t^*,w+1)\xi_t^*}(0,s',w+1) 
    			+ (1-p^-(S_t^*,w)){\xi_t^*}(0,s',w)}{{\xi_t^*}(0,s',\relatifs)}
    		\mu_t^*(0,s',d\xi')ds'.
    		\end{array}
    		\right.
    		\end{align*}
        \item At a spike time $\tau \in \mathcal{T}_\infty^*$, $\xi_{\tau^-}^*$ jumps to $\nu^+\{\xi_{\tau^-}^*\}$ where for all PDF $\xi$ on $E_m$,
		\[
		\nu^+\{\xi\} (v,s,w) \coloneqq  p^+(s,w-1) \xi(v,s,w-1) + (1-p^+(s,w)) \xi(v,s,w).
		\]
	\end{enumerate}
\end{mainresult}
\end{widetext}

We now give some details on this limit system. The second point details the membrane potential (spiking) activity of the neuron. Note that the spiking jump depends on the distribution $\xi_t^*$ through the function $I$ defined in \eqref{def:I_syn-current}. The third point is the linear increase of the time since last spike between the spikes of the neuron.
The fourth point gives the partial differential equations (PDE) between the jumps (spikes).
The PDE on $\xi_t^*$ is made up of three parts. First, the linear increase of $s$ which is the term in $-\partial_s$. Second, the mass transport from \({\xi_t^*}(1,\cdot,\cdot)\) to ${\xi_t^*}(0,\cdot,\cdot)$ at rate $\beta$. Finally, the mass transport from ${\xi_t^*}(0,\cdot,\cdot)$ to ${\xi_t^*}(1,\cdot,\cdot)$ at a rate that depends on $\mu_t^*$ with the reset of $s$ in $0$
giving the boundary condition in $s=0$; see the ${\xi_t^*}(1,0,\cdot)$ term. This corresponds to a spike of a pre-synaptic neuron hence leading to the depression of the associated weight (outgoing weight for this pre-synaptic neuron). Then, the fifth and last point. When the \textit{typical neuron} spikes, $V_t^*:0 \to 1$, the weights' potentiation occurs; meaning a transfer of mass from ${\xi_t^*}(\cdot,\cdot,w-1)$ to ${\xi_t^*}(\cdot,\cdot,w)$.

Note that all the changes of $(\xi_t^*)_{t \geq 0}$ are transfers of mass on the space on which it is defined; hence showing that its total mass is conserved and thus $(\xi_t^*)_{t \geq 0}$ remains a probability measure (of mass $1$) over time. In addition, this system of equations involves the distribution of its solution making $(X_t^*)_{t \geq 0}$ follows a McKean–Vlasov SDE. 
In particular, the process $(X_t^*)_{t \geq 0}$ solves the SDE
\begin{align*}
X_t^*
= X_0^* + \int_0^t
b(X_u^*,\mu_u^*) du
+ \int_0^t\int_{\reels^+} h(X_{u^-}^*,z) \zeta^*(dz,du)
\end{align*}
where $\zeta^*$ is a Point Poisson measure on $\reels^+ \times \reels^+$
with intensity $dzdu$, the probability distribution \(\mu_u^*\) is the distribution of the process \(X^*_u\), 
and the functions $b$ and $h$ are
defined according to the Main Result \ref{th:equation limite systeme de particules avec plasticite}: 
\begin{align*}
	h(X_{u^-}^*,z) & = \Big(1,-S_{u^-}^*,\nu^+(\xi_{u^-}^*)- \xi_{u^-}^*\Big)\mathbbm 1_{ \big\{ z \leq \alpha\big(I(\xi_{u^-}^*)\big) \mathbbm 1_{ \{V_{u^-}^* = 0\} }\big\} }
    \\
    & \qquad
    + (-1,0,0) \mathbbm 1_{ \big\{ z \leq \beta \mathbbm 1_{ \{V_{u^-}^* = 1\} }\big\} }
    \\
	b(X_u^*,\mu_u^*) & = \Big(0,1,-\partial_s {\xi_u^*}  + \beta \delta_0 \otimes {\xi_u^*}(1,\cdot,\cdot) - \beta \delta_1 \otimes {\xi_u^*}(1,\cdot,\cdot)
	\\
    & \qquad
    +\delta_1 \otimes \nu^{-,1} (S_u^*,{\xi_u^*}, \mu_{u}^* ) - 
	\delta_0 \otimes \nu^0 ( {\xi_u^*}, \mu_{u}^* )
	\Big)
\end{align*}
where $\nu^+$, $\nu^{0}$ and $\nu^{-,1}$ are defined respectively in \eqref{eq:def-nuplus}, \eqref{eq:defnutilde0} and \eqref{eq:def nu_moins_1}.
Note that the process $(X_t^*)_{t \geq 0}$ lives on the peculiar space $E=\{0,1\} \times \reels^+ \times \mathcal P(E_m)$ which
contains the space of distributions on $E_m$.
We give the computations leading to such a limit candidate in the Supplementary Material where we first study the system without plasticity \ref{supp:The case without plasticity} and then with \ref{supp:The case with plasticity} plasticity.

\begin{remarque}\label{rem:echangeable}
Classically, the proof of the convergence of the
particle system to the limit dynamics is done when the particles are exchangeable.
By Assumption~\ref{hyp:cdt-init}, the distributions $\xi_0^{1,N}, \cdots, \xi_0^{N,N}$ have the same distribution.
This assumption added to the fact that the variables $V_t^{i,N}$, $S_t^{i,N}$ and $W_t^{ij,N}$ have for all $i,j,$ the same dynamics,
implies that for all $i$ and $t \geq 0$, the $\xi_t^{i,N}$ are equal in distribution.
	We conclude this remark with the crucial following point:
	using Assumption~\ref{hyp:cdt-init}, we deduce that the distribution of $X_t^N$ is 
	exchangeable, 
    which means that 
	for any permutation $\sigma$ of $\{1,\cdots,N\}$, $X_t^{\sigma,N} =
	(X_t^{\sigma(1),N},\cdots,X_t^{\sigma(N),N})$ has the same distribution as $X_t^N$.
\end{remarque}

\section{Numerical comparison with the neural network}
\label{sec:Simulations}

We simulate the stochastic STDP model. In particular, we
illustrate our results by comparing the simulation of the finite size
neural network versus the simulation of the limit system.

We use a sigmoid for the function $\alpha$ and the classical STDP curves for $p^+$ and $p^-$,
\begin{align*}
    &\alpha(x) = \frac{\alpha_M-\alpha_m}{1+e^{\sigma(\theta - x)}} + \alpha_m
    \\
	&\text{and} \quad
	p^{+/-}(s,w) = A_{+/-} e^{-\frac{s}{\tau_{+/-}}}\mathbbm{1}_{ \llbracket w_{min}, w_{max}\rrbracket}(w),
\end{align*}
with the following parameters:
\begin{align*}
& N = 5000,\ \alpha_m = 0.05\ ms^{-1},\ \alpha_M = \beta = 1\ ms^{-1},
\\
&
\tau_+ = 1.5\ ms,\tau_- = 2\ ms,\ A_+ = 0.8,\ A_- = 0.6,
\\
&
dt = 0.05\ ms,\ \sigma = 1.5\ mV^{-1},\ \theta = 0\ mV,
\\
&
w_{max} = - w_{min} = 10.
\end{align*}

For the limit system, we simulate $N$ neurons $X_t^{1,*},\cdots,X_t^{N,*}$ having the same dynamics as $X_t^*$ (see Main Result \ref{th:equation limite systeme de particules avec plasticite} and Remark \ref{rem:echangeable}), except that instead of $\mu_t^*$
we use $\mu_t^{*,N} = \frac 1N \sum_i \delta_{X_t^{i,*}}$. Hence, we do not
need to compute $\mu_t^*$. For instance, we use $\xi_t^{i,*}$ to
compute the $I_t^{i,*} = I(\xi_t^{i,*})$. For more details on the code, see
the Supplementary Material \ref{ann:ch-4}.

The results obtained with these simulations are compared with those obtained from a finite-size neural network:
$(V_t^{i,N},S_t^{i,N},W_t^{ij,N})_{1 \leq i,j \leq N, t\geq 0}$.
The mathematical objects illustrated in the Figure \ref{esperance-W-I} are the empirical mean and distributions of different variables. 
\begin{definition}\label{def:empirical_dist_mean}
    Consider a finite ensemble of size $M$ that we denote by $Z = {Z^1, \cdots, Z^M}$, its empirical mean is
    \[
        \bar{Z} = \frac{1}{M} \sum_{i=1}^M Z^i,
    \]
    and its empirical distribution is
    \[
        \hat{P}_Z^M = \frac{1}{M} \sum_{i=1}^M\delta_{Z^i}.
    \]
\end{definition}
We compare the temporal evolution of the empirical mean of the potential (Figure~\ref{sub:esperance-V-final-500ms}), the time since the last spike (Figure~\ref{sub:esperance-S-final-500ms}) and synaptic weight (Figure~\ref{sub:esperance-W-final-500ms}). In particular, 
\begin{align*}
    &Z \in \{V_t^N,V_t^*,\xi_t^*(\cdot,\reels^+,\relatifs)\} \text{ in Figure~\ref{sub:esperance-V-final-500ms}},
    \\
    &Z \in \{S_t^N,S_t^*,\xi_t^*(\{0,1\},\cdot,\relatifs)\} \text{ in Figure~\ref{sub:esperance-S-final-500ms},}
    \\
    &Z \in \Big\{\{W_t^{ij,N}\}_{i,j},\{\xi_t^{i,N}(\{0,1\},\reels^+,\cdot)\}_{i}\Big\} \text{ in Figure~\ref{sub:esperance-W-final-500ms}.}
\end{align*} 
Moreover, we compare, at the end of the simulation $t=500ms$,
the distributions of the intensities of the incoming currents
onto the neurons (Figure~\ref{sub:distribution-I-final-500ms}),
the distributions of the time since the last
spikes for neurons in state $V=0$ (Figure~\ref{sub:distribution-S-final-V0-500ms}) and $V=1$ (Figure~\ref{sub:distribution-S-final-V1-500ms}). 
The latter are estimated in three ways: using the empirical distribution from the original system and the MF system with $N$ typical neurons (see Definition \ref{def:empirical_dist_mean}), and directly from the distribution $\xi_t^{i,*}$ for a given neuron $i=1$. The distance between the curves for long times in the Figure~\ref{sub:esperance-S-final-500ms} is due to the approximation done in $\xi_t^{i,*}$ as we have to bound it in $S$ (see Supplementary Material \ref{ann:ch-4} for more details); bound which does not exist in the original model. The slight shift in the synaptic current follows from this error of approximation which leads to an accumulation in the upper bound as it can be seen in the Figure~\ref{sub:distribution-S-final-V0-500ms}.

Overall, the simulations' results show a good agreement between the initial network and the limit one. 
In the timeseries plots, we see that on average, the MF variables follow the ground truth closely. We also observe that the whole distribution of these variables is captured. Interestingly, we see that the synaptic currents have a wide distribution (far from the initial condition which is a Dirac distribution), showing that each neuron receives different inputs even though the latter is an expectation, see equation~\eqref{def:I_syn-current}.
\begin{figure*}
    \subfloat[]{
    \includegraphics[width=0.45\textwidth]{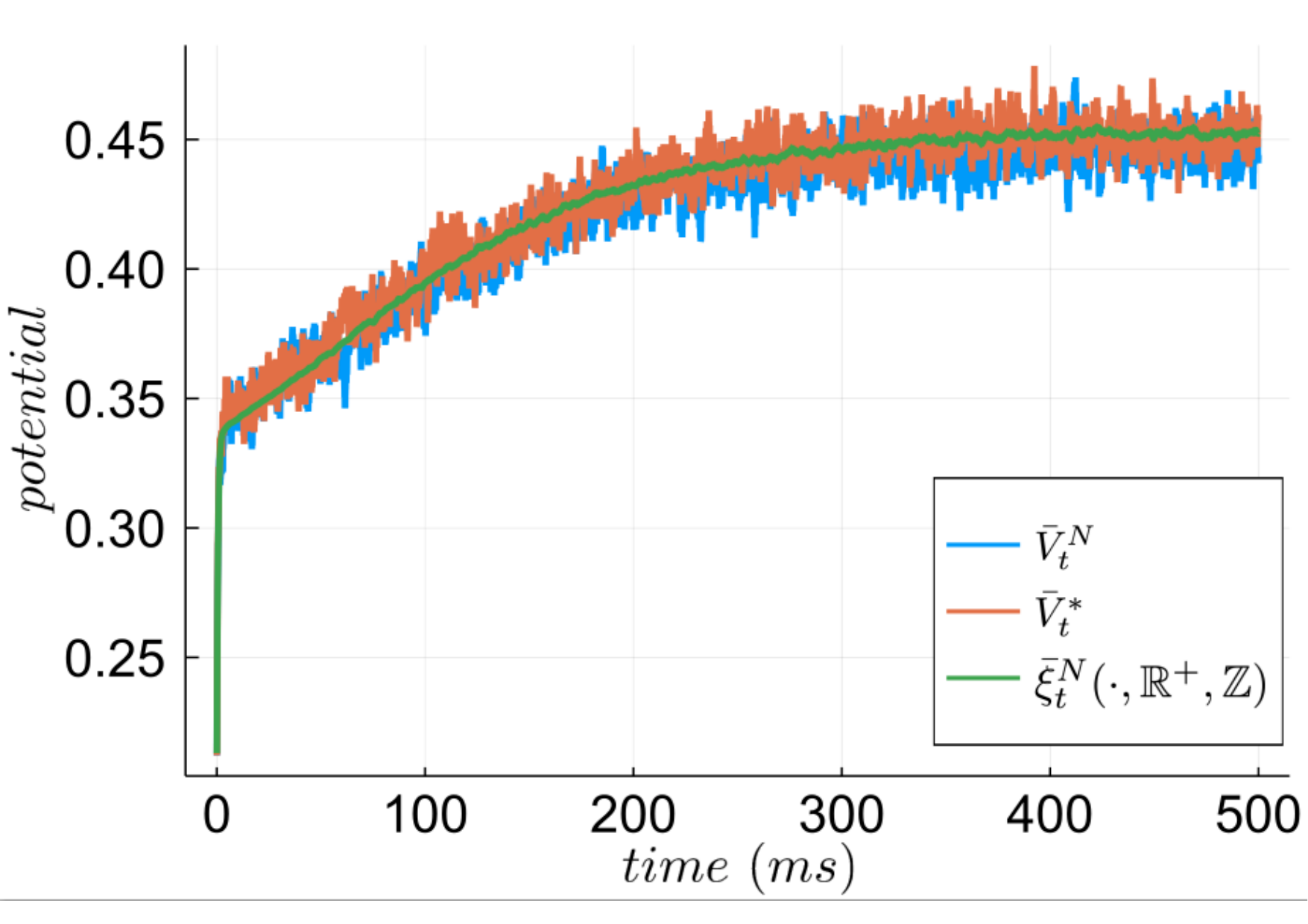}
    \label{sub:esperance-V-final-500ms}}
    \subfloat[]{
    \includegraphics[width=0.45\textwidth]{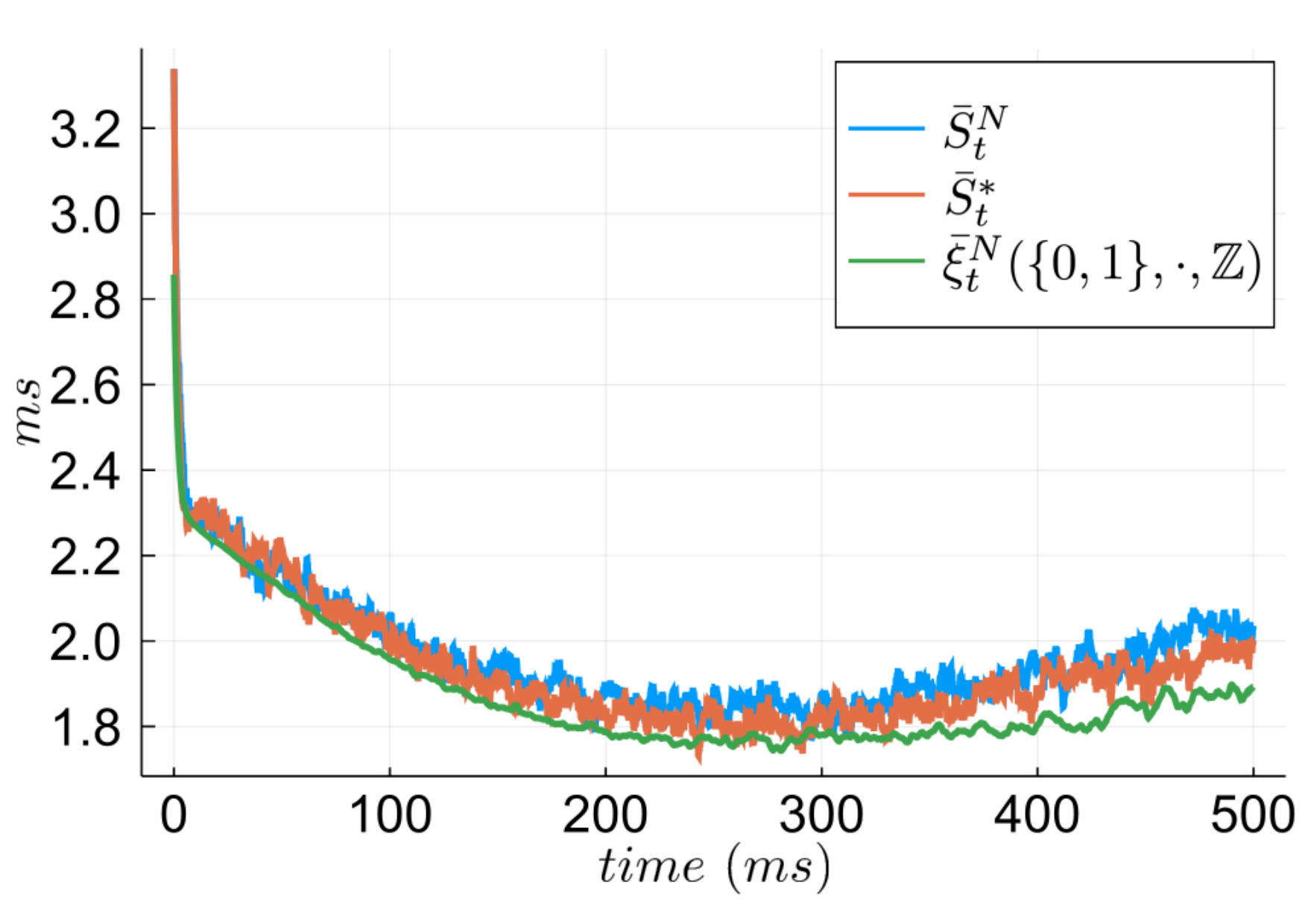}
    \label{sub:esperance-S-final-500ms}}
    \\
    \subfloat[]{
    \includegraphics[width=0.45\textwidth]{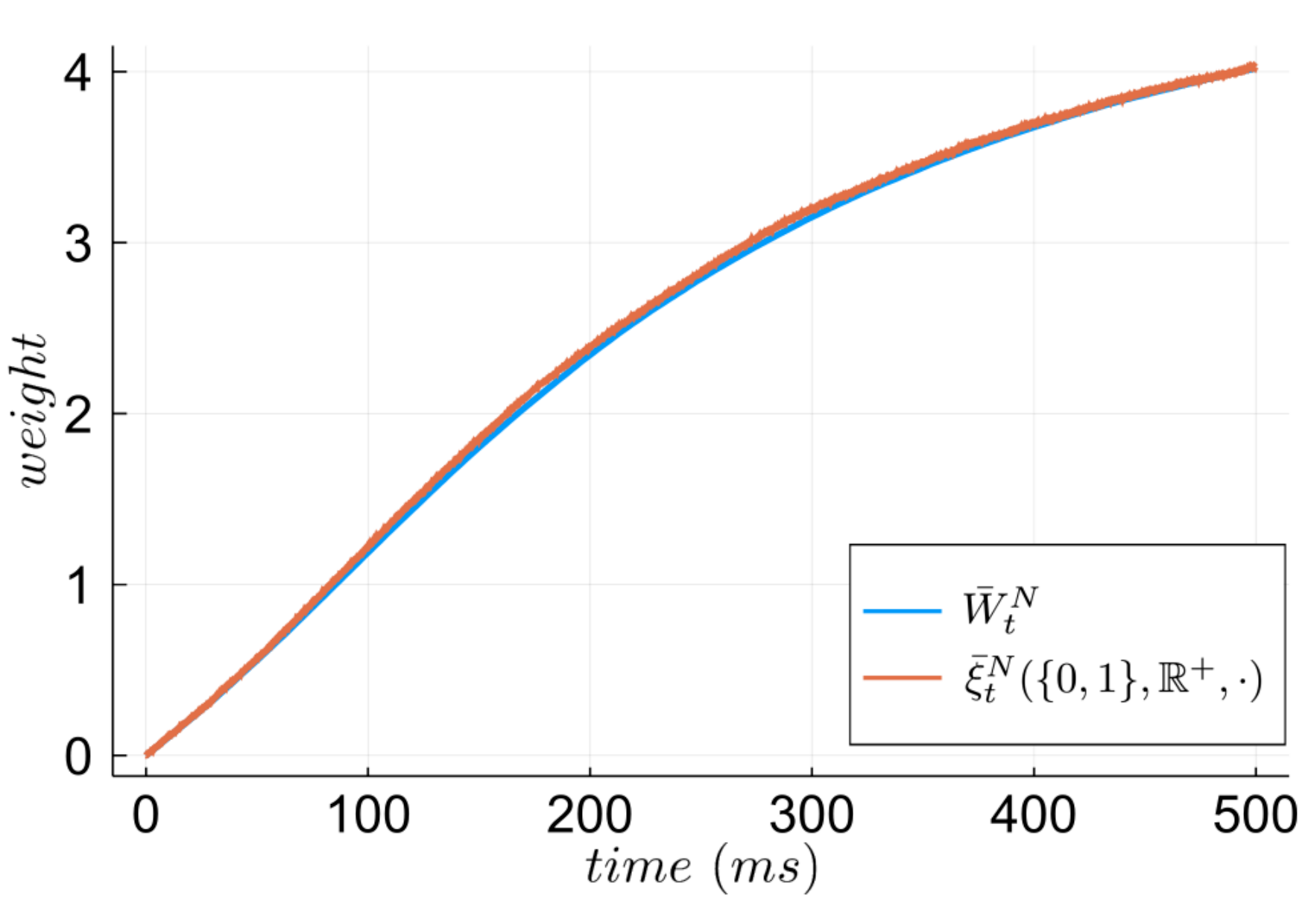}
    \label{sub:esperance-W-final-500ms}}
    \subfloat[]{
    \includegraphics[width=0.45\textwidth]{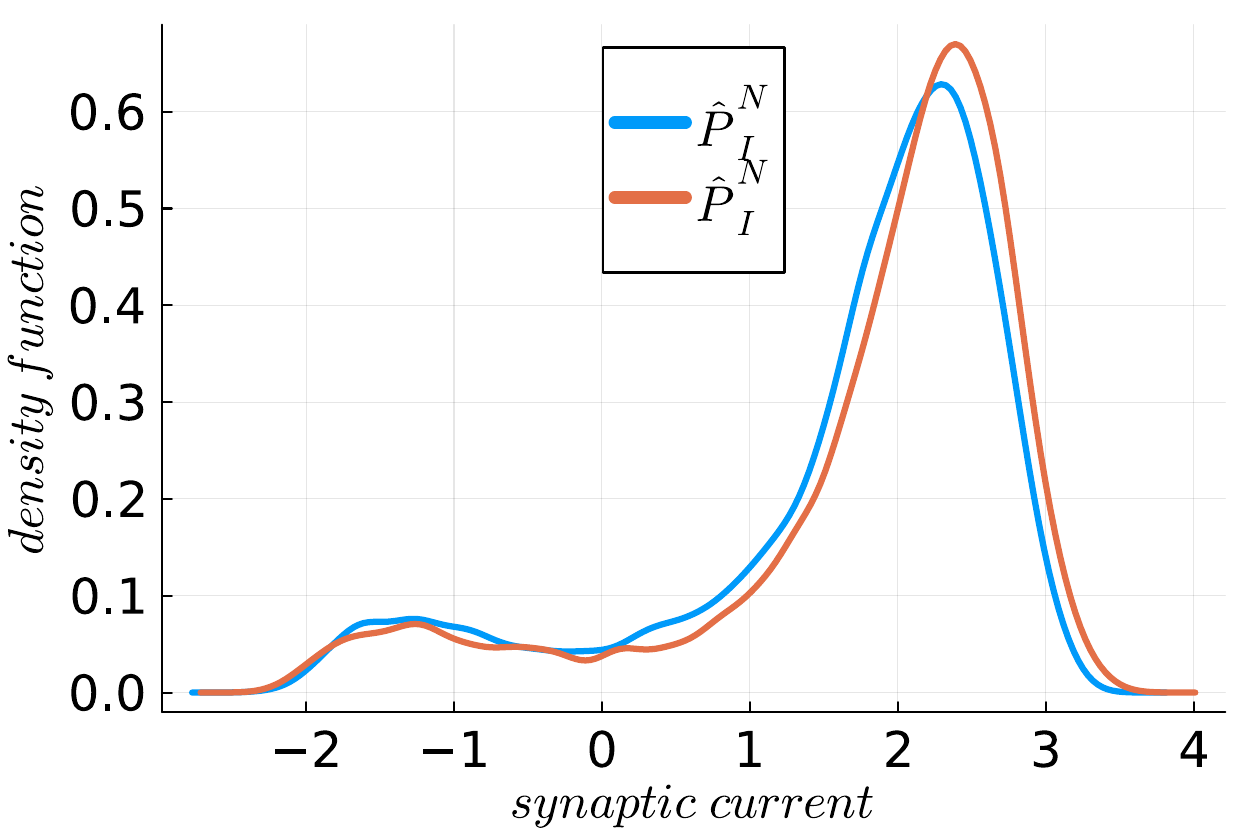}
    \label{sub:distribution-I-final-500ms}}
    \\
    \subfloat[]{
        \includegraphics[width=0.45\textwidth]{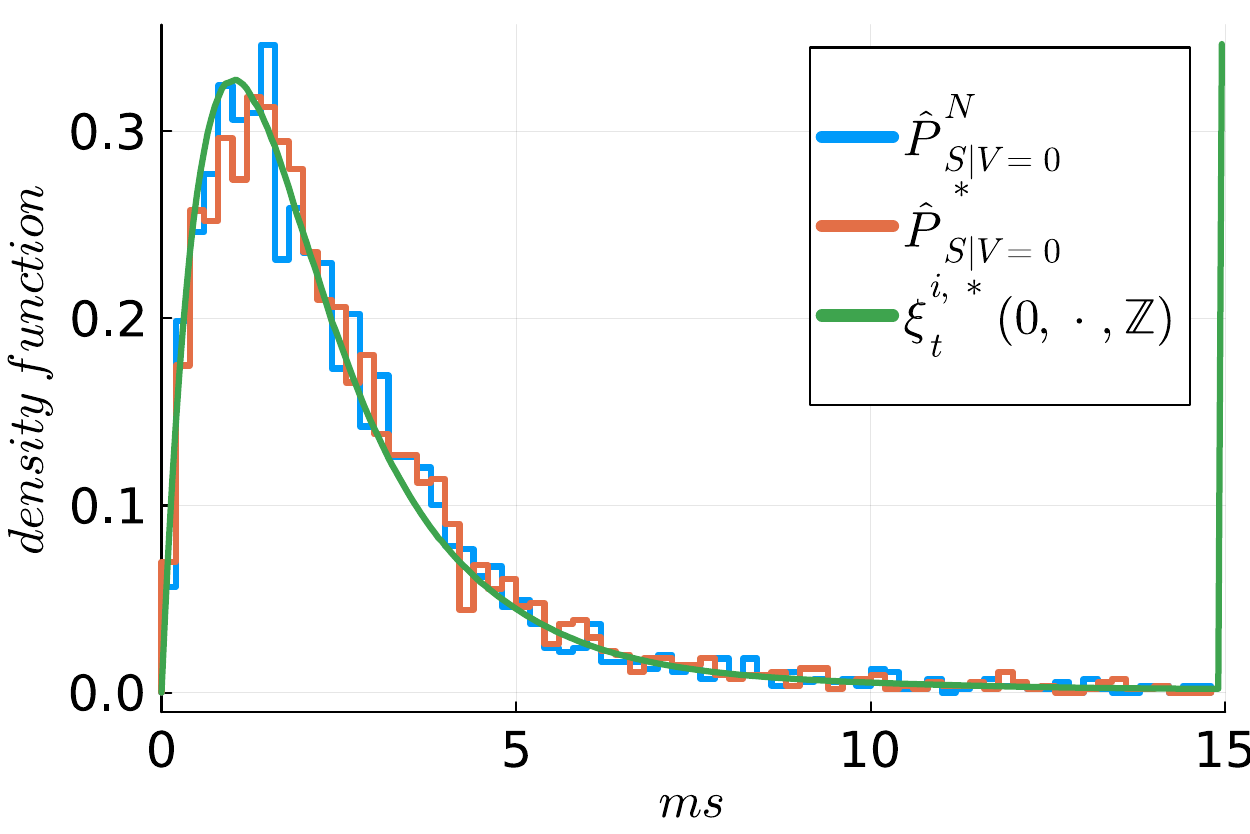}
        \label{sub:distribution-S-final-V0-500ms}}
    \subfloat[]{\includegraphics[width=0.45\textwidth]{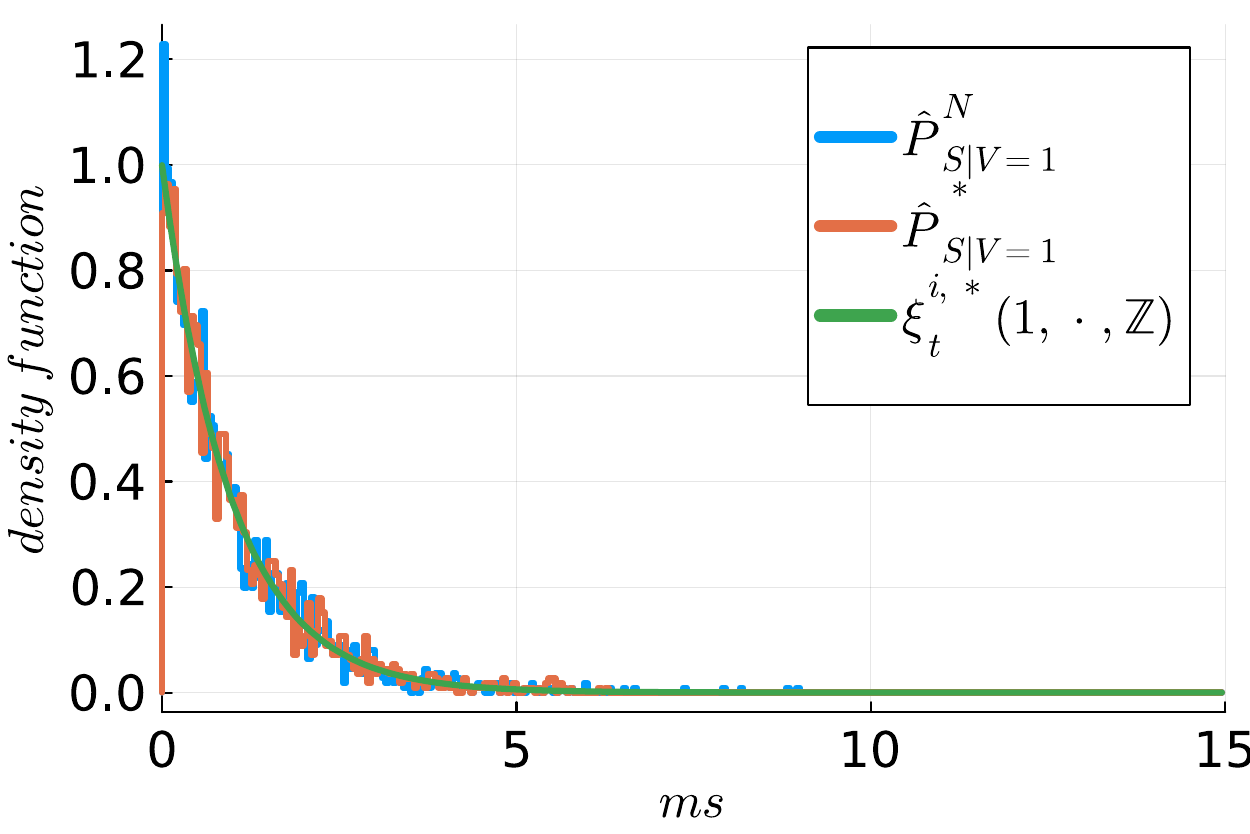}
    \label{sub:distribution-S-final-V1-500ms}}
    \caption{\label{esperance-W-I}
        \textbf{Comparisons between the MKV-MF model and the initial model.}
        (\ref{sub:esperance-V-final-500ms}-c): Empirical mean (see Definition \ref{def:empirical_dist_mean} and below) of the potentials \eqref{sub:esperance-V-final-500ms}, time since last spike \eqref{sub:esperance-S-final-500ms}, and the weights \eqref{sub:esperance-W-final-500ms} across time. \eqref{sub:distribution-I-final-500ms}: Empirical distributions (see Definition \ref{def:empirical_dist_mean}) at time $t=500ms$ of the synaptic currents onto the neurons. (\ref{sub:distribution-S-final-V0-500ms}-f):  Distributions (see Definition \ref{def:empirical_dist_mean} and below) at time $t=500ms$ of the time since last spike for neurons in state $V=0$ \eqref{sub:distribution-S-final-V0-500ms} and $V=1$ \eqref{sub:distribution-S-final-V1-500ms}.
        }
\end{figure*}

\section{Discussion}
We derived a McKean-Vlasov mean-field (MKV-MF)
equation from a neural network model with stochastic STDP. To our knowledge, this is the first MF model of a network with STDP that does not rely on the symmetric STDP curve \cite{duchet2023mean} or the slow-fast framework where the plasticity is slow compared to the neuronal dynamics \cite{perthame2017distributed}; assumptions in conflict with experimental observations \cite{zenke2017temporal, lansner2023fast, choquet2013dynamic, kasai2021spine, fulton2024common}. When considering spiking neural networks without plasticity, MFs have been derived with, for example, the Ott-Antonsen ansatz \cite{montbrio2015macroscopic} when weights are homogeneous. In the case of heterogeneous fixed weights, methods based on the estimation of the cross-correlation have been used \cite{ocker2017statistics}.
On the other hand, MKV-MFs have already been shown to be needed in such networks with both binary neurons and synaptic weights \cite{farkhooi2017complete} and more recently in integrate-and-fire neurons using the theory of dense graph limits (graphons) \cite{jabin2024non}.

This MF opens the door for the development of new mathematical tools necessary to prove convergence to the limit system and its analysis. The next steps would then be to show that the limit system has a unique solution and to prove the tightness \cite{sznitman_topic}
of the empirical distributions to conclude on their convergence 
to a unique deterministic limit. Finally, studying the limit system would provide insight into the finite-sized model and a particularly interesting study would be its long time behaviour \cite{cormier2020long}.
We are currently working on a simplified version of this model for these analytical derivations.

We believe that the model cannot be additionally reduced as we tried different model descriptions without success. A crucial element is the knowledge of the last spike time of each neuron.
We could have used the outgoing (instead of incoming) weights in $\xi$ but this leads to much more complex equations; see for example the synaptic current definition \eqref{def:I_syn-current}. To insist on this, in the simulations, when starting with a Dirac distribution of synaptic currents, we end up with a wide distribution of them. This happens even though these currents are defined as expectations; see \eqref{def:I_syn-current}. It shows that the limit model has some strong heterogeneity that our MKV-MF is able to capture.

In addition, the MF is quite flexible as we have only weak assumptions on the model parameters. For example, we could easily extend the results to account for Dale's law by making sure the initial weights follow it and then preventing them to change sign using the weight dependence on $p^\pm$. Such flexibility is not allowed in the MF recently derived on rate-based neural network models with plasticity \cite{clark2024theory}.

Finally, the simulation complexity of the limit system is here on the order of $N$ rather than $N^2$ originally. Hence, we have drastically reduced the simulation costs; despite the typical neuron being infinite dimensional.
Additionally, it closely follows the finite size ground truth system both on average and distribution-wise as we showed in Figure \ref{esperance-W-I}.

\section{Conclusion}
Our approach paves the way for new mathematical tools to analyse complex neural systems more efficiently. 
It effectively addresses the combination of two of the most intricate phenomena in neuroscience: spikes and synaptic plasticity. This combination has been inadequately analysed in recurrent network settings until now. Using advanced mathematical tools, we reveal the MKV-MF limit of a biological spiking neural network with STDP.

\section{Acknowledgements}
PH acknowledges funding from Digital Future and the Ecole Doctorale en Sciences Fondamentales et Appliquées (ED.SFA, ED 364). PH expresses deep gratitude for the insightful discussions with Gonzalo Uribarri regarding the manuscript. Additionally, PH is particularly thankful for the mathematical derivations discussed with Milica Tomašević and Quentin Cormier, made possible through a BOUM project funded by the SMAI.

\widetext
\newpage
\twocolumngrid

\section*{Supplementary Material}
\setcounter{equation}{0}
\setcounter{figure}{0}
\setcounter{table}{0}
\setcounter{section}{0}
\renewcommand{\theequation}{S\arabic{equation}}
\renewcommand{\thefigure}{S\arabic{figure}}
\renewcommand{\thesection}{S\arabic{section}}

\section{Proofs}
\subsection{Density and Markovianity of the empirical distribution}
\subsubsection{Properties of \texorpdfstring{$S$}{TEXT}}
Under Assumption~\ref{hyp:cdt-init}, the neural network possesses two important properties that
are extensively used throughout this paper.
First, the times from the last spike are almost 
surely distinct.
\begin{lem}\label{lem:supp-s-distinct}
	Grant Assumption~\ref{hyp:cdt-init}. Consider the process
	$X_t^N$ solution of the microscopic model described in Section \ref{sec:micro_model}.
	Then, for any \(t\geq 0\) and for all \(i\neq j \in \llbracket 1,N\rrbracket\),
	we have almost surely $S_t^{i,N}\neq S_t^{j,N}$.
\end{lem}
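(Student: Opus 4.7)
The plan is to prove the lemma in two steps: first establish that the $S^{i,N}_0$ are almost surely pairwise distinct at initialisation, and then show that pairwise distinctness is preserved through both the smooth dynamics and the jumps, with the crucial point being that no two neurons ever spike at the same time almost surely.

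First I would handle the initial time $t=0$. By Assumption~\ref{hyp:cdt-init}, the variables $S_0^{1,N}, \ldots, S_0^{N,N}$ are i.i.d.\ with a density $\rho_0$ with respect to Lebesgue measure on $\reels^+$. Hence, for any $i\neq j$, the pair $(S_0^{i,N},S_0^{j,N})$ has density $\rho_0\otimes\rho_0$ on $\reels^+\times\reels^+$, so the diagonal $\{s_i=s_j\}$ has Lebesgue measure zero and therefore $\Proba{S_0^{i,N}=S_0^{j,N}}=0$. A finite union over pairs $(i,j)$ gives almost sure pairwise distinctness at $t=0$.

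Next I would argue by induction over the ordered sequence of jump times $0<\tau_1<\tau_2<\cdots$ of the PDMP. Between two consecutive jumps, every $S^{i,N}_t$ satisfies $dS^{i,N}_t=dt$ with common slope, so each pairwise difference $S^{i,N}_t - S^{j,N}_t$ is constant; distinctness at $\tau_k$ therefore propagates to the entire interval $[\tau_k,\tau_{k+1})$. At a jump time $\tau=\tau_{k+1}$, the only jumps that affect the $S$-coordinates are the spikes $V^{i_0,N}:0\to 1$, which reset $S^{i_0,N}$ to $0$; the return jumps $V^{i,N}:1\to 0$ and the weight jumps leave all $S^{i,N}$ unchanged. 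So a collision $S^{i,N}_{\tau}=S^{j,N}_{\tau}$ can occur only if either (i) two neurons $i_0\neq j_0$ spike simultaneously at $\tau$, giving $S^{i_0,N}_\tau=S^{j_0,N}_\tau=0$, or (ii) a single neuron $i_0$ spikes at $\tau$ while some other neuron $j$ already satisfies $S^{j,N}_{\tau^-}=0$, which again forces $\tau$ to be a spike time of $j$.

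Thus everything reduces to showing that, almost surely, no two neurons spike at the same time. Here I would invoke the standard Poisson thinning construction of the PDMP: since $\alpha$ is bounded by some $\alpha_M$ and $\beta$ is a constant, the spike times of neuron $i$ form a subset (via thinning) of the jump times of an independent Poisson process $\mathcal{N}^i$ on $\reels^+$ of rate $\alpha_M+\beta$ used to drive neuron $i$. Two independent Poisson processes $\mathcal{N}^i$, $\mathcal{N}^j$ have no common jump times almost surely, because
\begin{equation*}
\Proba{\mathcal{N}^i \text{ and } \mathcal{N}^j \text{ share a jump time in } [0,T]} \leq \int_0^T (\alpha_M+\beta)\,du \cdot 0 = 0
\end{equation*}
by disintegration (conditionally on a jump of $\mathcal{N}^i$ at time $u$, the probability that $\mathcal{N}^j$ jumps at exactly $u$ is zero). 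A union bound over pairs $(i,j)\in\llbracket 1,N\rrbracket^2$ and over a countable exhaustion of $\reels^+$ by intervals $[0,T]$ rules out simultaneous spikes on $\reels^+$. Combining this with the induction above yields the result for all $t\geq 0$.

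The main obstacle is the last step, formalising that no two neurons can spike simultaneously despite the fact that their spiking intensities are correlated through the shared synaptic weights. The cleanest way to handle this, which I would adopt, is the independent-Poisson driving representation sketched above: it decouples the simultaneity question from the interaction and reduces it to the elementary fact that independent Poisson processes have disjoint supports a.s.
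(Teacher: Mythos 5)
Your proof is correct and follows essentially the same route as the paper's: almost-sure distinctness at $t=0$ from the density of $\rho_0$, preservation of pairwise differences between jumps since $d(S^{i,N}_t-S^{j,N}_t)=0$, and the almost-sure absence of simultaneous spikes. The only difference is that you spell out the last step via the independent-Poisson thinning construction, which the paper simply asserts; this is a welcome elaboration rather than a divergence.
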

\begin{proof}
	By assumption, for all $i$, the distribution of $S_0^{i,N}$ admits a density and hence,
	the $(S_0^{i,N})_{0\leq i \leq N}$ are almost surely distinct.
	Between the jumps of $(V_t^{i,N})_{0\leq i \leq N}$ from $0$ to $1$, 
	we have \(d(S_t^{i,N} - S_t^{j,N}) = 0\).
	Finally, the probability that two different neurons spike at the exact same time is zero.
\end{proof}
Second, at any time \(t\geq 0\), the distribution of \(S^{i,N}_t\) is absolutely continuous with respect to the Lebesgue measure \(\lambda\) and has a density \(\rho_t\) almost surely finite.
\begin{lem}\label{lem:densite-S-t}
	We denote $\lVert \rho_0 \rVert_{\infty}$ the upper bound of $\rho_0$ and
	we assume that Assumption~\ref{hyp:cdt-init} holds.
	Then, for any $t \geq 0$,
	the distribution $\rho_t$ of $S_t^{i,N}$ also admits a density. In particular, for any $A
	\in \mathcal B(\reels^+)$,
	\[
	\rho_t(A) \leq ( \|\alpha\|_\infty + \lVert \rho_0 \rVert_{\infty} ) \lambda(A).
	\]
\end{lem}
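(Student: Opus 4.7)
The plan is to decompose the event $\{S_t^{i,N} \in A\}$ according to whether neuron $i$ has fired on $[0,t]$ or not, bound each piece by a constant multiple of $\lambda(A)$, and sum. This captures the two mechanisms producing the value $S_t^{i,N}=s$: either no spike of $i$ has occurred and the initial density $\rho_0$ has merely been translated by $t$, or the last spike time in $[0,t]$ equals $t-s$, in which case the density comes from that of the spiking time.

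First, on the event $\mathcal{N}_t^i \coloneqq \{\text{neuron } i \text{ does not spike on } [0,t]\}$, the dynamics $dS^{i,N}_u = du$ holds with no reset, so $S_t^{i,N} = S_0^{i,N}+t$. Pushing forward the initial law of $S_0^{i,N}$ by the translation $s \mapsto s+t$ directly gives
\[
\proba\bigl(S_t^{i,N} \in A,\ \mathcal{N}_t^i\bigr) \leq \proba\bigl(S_0^{i,N} \in A - t\bigr) = \int_{(A-t)\cap \reels^+} \rho_0(s)\,ds \leq \lVert \rho_0 \rVert_\infty \, \lambda(A).
\]

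Second, on the complementary event let $\tau^*$ denote the last spike time of neuron $i$ in $[0,t]$, so that $S_t^{i,N} = t - \tau^*$. The spike process of neuron $i$ has stochastic intensity $\alpha(I_u^{i,N})\mathbbm 1_{\{V_u^{i,N}=0\}}$, bounded pointwise by $\|\alpha\|_\infty$. By the compensation formula, the expected number of spikes of neuron $i$ falling inside any Borel set $B \subset [0,t]$ is therefore at most $\|\alpha\|_\infty \lambda(B)$. Since $\{\tau^*\in B\}$ implies at least one spike of $i$ in $B$, Markov's inequality yields $\proba(\tau^* \in B) \leq \|\alpha\|_\infty \lambda(B)$. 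Taking $B = t - A$ and using $\lambda(t-A)=\lambda(A)$ gives $\proba\bigl(S_t^{i,N} \in A,\ (\mathcal{N}_t^i)^c\bigr) \leq \|\alpha\|_\infty \lambda(A)$; summing the two bounds yields $\rho_t(A) \leq (\|\alpha\|_\infty + \lVert \rho_0 \rVert_\infty)\lambda(A)$, which is the claim.

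The only real subtlety is the intensity bound: one must justify that, despite the self-exciting and network-coupled nature of the spike process of neuron $i$, its stochastic intensity is dominated pointwise by the deterministic constant $\|\alpha\|_\infty$. This follows immediately from the boundedness of $\alpha$ assumed in Section~\ref{sec:micro_model}, the gating factor $\mathbbm 1_{\{V^{i,N}=0\}}$ only further lowering the rate, so the compensation formula applies as used. Everything else is routine integration of densities and rearrangement of events.
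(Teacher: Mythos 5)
Your proof is correct and follows essentially the same route as the paper: the same dichotomy (no spike of neuron $i$ on $[0,t]$, giving the translated initial density and the $\lVert \rho_0 \rVert_\infty \lambda(A)$ bound, versus a last spike landing in $t-A$, giving the $\|\alpha\|_\infty \lambda(A)$ bound). The only cosmetic difference is that you bound $\proba(\text{at least one spike in } t-A)$ via the compensation formula and Markov's inequality, whereas the paper writes the same first-moment bound as $\esp[1-\exp(-\int_{t-A}\mathbbm 1_{\{V_u^{i,N}=0\}}\alpha(I_u^{i,N})du)]\leq \esp[\int_{t-A}\alpha(I_u^{i,N})du]$; both yield the identical constant.
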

\begin{proof}
	We have for all $A \in \mathcal B(\reels^+)$, $t > 0$,
	\[
	\proba(S_t^{i,N} \in A) = \proba(S_t^{i,N} \in A \cap [0,t[) 
	+ \proba(S_t^{i,N} \in A \cap [t,+\infty[).
	\]
	First, denoting $A-t = \{ x \in \reels^+, x + t \in A \}$, one has
	\[
	\left\{S_t^{i,N} \in A \cap [t,+\infty[ \right\} \subset 	\left\{S_0^{i,N} \in A - t \right\}.
	\]
	Thus, 
	\begin{align*}
	    \mathbb{P}		\left(S_t^{i,N} \in A \cap [t,+\infty[ \right)&\leq \mathbb{P}		\left(S_0^{i,N} \in A - t \right)
        \\
        &\leq \lVert \rho_0 \rVert_{\infty}  \lambda(A - t ) = \lVert \rho_0 \rVert_{\infty} \lambda(A).
	\end{align*}
	Second, the event
	\(
	\left\{S_t^{i,N} \in A \cap [0,t[ \right\} 
	\)
	means that the last spike of the neuron \(i\) occurred at a time in \(t - A\).
    The probability of this event is less than the probability that there is at least one spike of the neuron $i$ in $t - A$. 
	So, one has,
	\begin{align*}
	\proba(S_t^{i,N} \in & A \cap [0,t[)
    \\
	&  \leq \mathbb{E}\left[1-\exp\left(
	-\int_{t-A} \mathbbm 1_{ \{ V_u^{i,N} = 0 \} } \alpha(I_u^{i,N})du
	\right)\right]
    \\
	&\leq \mathbb{E}\left[\int_{t-A} \alpha(I_u^{i,N})du\right]\\
&	\leq \|\alpha\|_\infty \lambda(t -A) =  \|\alpha\|_\infty \lambda(A)
	.
	\end{align*}
\end{proof}
\begin{widetext}
\subsubsection{\texorpdfstring{$(\mu_t^N)_t$}{TEXT} is Markov}
\begin{proposition}\label{prop:mu-Markov}
The random process $(\mu_t^N)_{t \geq 0}$ is a Markov process on $\mathcal P_N(E)$.
\end{proposition}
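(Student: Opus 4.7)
The plan is to lift the Markov property of the labelled particle system $(X_t^N)_{t\geq 0} = (X_t^{1,N},\ldots,X_t^{N,N})_{t\geq 0}$, which is a PDMP on $E^N$ by construction, to its symmetric projection $(\mu_t^N)_{t\geq 0}$ via the permutation-equivariance of the dynamics. First I would verify that the transition kernel $P_r^X$ of $X^N$ satisfies $P_r^X(\sigma\cdot x, \sigma\cdot A) = P_r^X(x, A)$ for every $\sigma \in \mathfrak{S}_N$ and every Borel $A\subseteq E^N$, where $\sigma$ acts by relabelling the neuron indices (and correspondingly both indices of the weight matrix $W^{ij,N}$ inside each $\xi^{i,N}$). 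This is immediate from the description in Section~\ref{sec:micro_model} and the jump equations~\eqref{eq:dyn_xi}, since all rates and jump targets are built symmetrically in the indices. Combined with Assumption~\ref{hyp:cdt-init}, this yields the exchangeability of $X_t^N$ recorded in Remark~\ref{rem:echangeable}.

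Second, I would fix $s\geq t\geq 0$ and a bounded measurable test function $\phi:\mathcal{P}_N(E)\to\reels$. By the Markov property of $X^N$ and its time-homogeneity,
$$\esp\bigl[\phi(\mu_s^N)\,\big|\,\mathcal{F}_t^X\bigr] = F_{s-t}(X_t^N), \qquad F_r(x):=\esp\bigl[\phi(\mu_r^N)\,\big|\,X_0^N=x\bigr],$$
with $\mathcal{F}_t^X=\sigma(X_u^N:u\leq t)$. Because the push-forward $\pi_N : x\mapsto \frac{1}{N}\sum_i\delta_{x_i}$ is $\mathfrak{S}_N$-invariant and $P_r^X$ is $\mathfrak{S}_N$-equivariant, $F_r$ is itself symmetric: $F_r(\sigma\cdot x)=F_r(x)$ for every $\sigma$. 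By Lemma~\ref{lem:supp-s-distinct}, the components $S_t^{i,N}$ are almost surely pairwise distinct, so the $N$ atoms of $\mu_t^N$ are almost surely distinct and $\pi_N$ restricts to a Borel bijection between the $\mathfrak{S}_N$-orbits of configurations with distinct atoms and $\mathcal{P}_N(E)$. This factorises $F_r = G_r\circ\pi_N$ for some measurable $G_r:\mathcal{P}_N(E)\to\reels$, and conditioning further onto the smaller filtration $\mathcal{F}_t^\mu:=\sigma(\mu_u^N:u\leq t)\subseteq\mathcal{F}_t^X$ yields
$$\esp\bigl[\phi(\mu_s^N)\,\big|\,\mathcal{F}_t^\mu\bigr] = G_{s-t}(\mu_t^N),$$
which is exactly the claimed Markov property on $\mathcal{P}_N(E)$.

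The main obstacle is the measurability bookkeeping behind the factorisation $F_r = G_r\circ\pi_N$: one has to equip $\mathcal{P}_N(E)$ with the Borel $\sigma$-algebra inherited from the weak topology on $\mathcal{P}(E)$ and exhibit a measurable selection of a representative $N$-tuple from each orbit, which is where the almost-sure distinctness from Lemma~\ref{lem:supp-s-distinct} is crucial (without it one would have to work modulo a null set of ties). A more computational but fully self-contained alternative would be to write down the PDMP generator $\mathcal{L}^N$ of $X^N$ from~\eqref{eq:dyn_xi}, apply it to test functions of the form $\Phi(x_1,\ldots,x_N)=\phi\bigl(\frac{1}{N}\sum_i\delta_{x_i}\bigr)$ and observe that the result is again of the same form; this route bypasses the measurable-selection issue and has the added benefit of producing an explicit generator for $\mu^N$ that is useful for the derivation of the McKean--Vlasov limit in Main Result~\ref{th:equation limite systeme de particules avec plasticite}.
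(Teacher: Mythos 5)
Your argument is correct, but it takes a genuinely different route from the paper. The paper's proof is \emph{constructive}: it exhibits the dynamics of $(\mu_t^N)$ directly in terms of its current value --- the total jump rate $\lambda_{tot}$, the deterministic flow between jumps, and the post-jump state are all written as explicit functions of $\mu_{t_0}^N$ alone, the key device being that the spiking neuron's atom inside every $\xi$ in the support of $\mu_{\tau^-}^N$ can be located through its unique value of $S$ (Lemma~\ref{lem:supp-s-distinct} and Remark~\ref{rem:20210215}). Your proof instead lifts to the labelled system, invokes $\mathfrak{S}_N$-equivariance of the kernel, and pushes forward through $\pi_N$; this is the classical exchangeable-particle argument, it is cleaner and more modular, and your closing suggestion (apply the generator of $X^N$ to cylindrical functions $\phi(\frac1N\sum_i\delta_{x_i})$) is in fact very close to what the paper does in the subsequent subsections when it computes the generator of $\mu^N$ on functions $\Phi(\mu)=\langle\mu,\Psi\rangle$. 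What the paper's version buys is exactly that explicit generator and the simulation recipe; what yours buys is a clean separation between the dynamical input (equivariance) and the measure-theoretic bookkeeping (factorisation through orbits).

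One soft spot you should shore up: you take as given that $X^N=(X^{1,N},\dots,X^{N,N})$ ``is a PDMP on $E^N$ by construction.'' It is not --- the model of Section~\ref{sec:micro_model} is a PDMP on $\{0,1\}^N\times(\reels^+)^N\times\relatifs^{N^2}$, and $X^N$ is its image under the map that replaces each row of the weight matrix by the empirical measure $\xi^{i,N}$. That this image is itself Markov is not automatic; it holds because, on the almost-sure event of Lemma~\ref{lem:supp-s-distinct}, the map $(V,S,W)\mapsto X^N$ is injective (each $W^{ij,N}$ is recovered as the third coordinate of the unique atom of $\xi^{i,N}$ whose second coordinate equals $S^{j,N}$). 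So the distinctness lemma is needed \emph{twice} in your scheme --- once to make $X^N$ Markov on $E^N$, and once for the orbit/measurable-selection step you already identified --- and the first use is precisely the identification argument that the paper deploys directly at the level of $\mu^N$. With that point made explicit, your proof is complete.
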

\begin{proof}
    We propose a constructive proof by showing how to construct such a process. If we know the process at time $t_0$, we need to find its next jumping time that we denote by $\tau$. This time is obtained by drawing a random variable following an exponential distribution with parameter 
    \[
    \lambda_{tot} = \sum_{(V,S,\xi) \in \supp(\mu_{t_0}^{N})} \beta V + \alpha\big(I(\xi)\big)(1-V).
    \]
    Until this jumping time $\tau$, we have 
    \[
        \forall t_0 \leq t < \tau, \quad
        \mu_t^N = \frac{1}{N} \sum_{(V,S,\xi) \in \supp(\mu_{t_0}^{N})} \delta_{\left( V,\ S + t -t_0,\
            \frac{1}{N} \sum\limits_{ (\tilde{V}, \tilde{S}, \tilde{W}) \in \supp(\xi) }
            \delta_{\left( \tilde V,\ \tilde S + t - t_0,\ \tilde W \right)} \right)}.
    \]
    The jump at time $\tau$ is first associated to a neuron $\hat X = (\hat{V},\hat{S},\hat{\xi}) \in \supp(\mu_{t_0}^{N})$ for which $\hat{V}$  jumps to $1-\hat{V}$. 
    Any such neuron has probability $\frac{\beta \hat{V} + \alpha\big(I(\hat{\xi})\big)(1-\hat{V})}{\lambda_{tot}}$ to be chosen. Hence, in addition to the jump of
    $\hat{V}$, $\hat{S}$ jumps to \(0\) if \(\hat{V} = 0\). It stays in
    $\hat{S}+\tau-t_0$ if \(\hat{V} = 1\). Furthermore, all the $\xi \in \supp(\mu_{\tau^-}^{N})$ jump. 
    To do the latter changes, we remind that each neuron can be identify by \(\hat{S}+\tau-t_0\) in all
    $\xi \in \supp(\mu_{\tau^-}^{N})$ (see Remark~\ref{rem:20210215}).
    We then have to control the potentiations and the depressions: for any \((V,S,\xi) \in \supp{(\mu_{\tau^-}^{N})}\), we 
    introduce independent random variables \(\tilde{U}^+(S)\) and \(\tilde{U}^-(S)\), uniformly distributed on \([0,1]\).
    We can now describe the jumps of \(\xi\). First, if \(\hat{V} = 1\), we have for any \(\xi\)
    \begin{align*}
    \begin{split}
    \Delta \xi  =
    \frac{1}{N}\sum_{ (\tilde{V}, \tilde{S}, \tilde{W}) \in \supp(\xi)}
    \Biggl\{ \mathbbm{1}_{\{\tilde S = \hat{S} + \tau - t_0 \}}
    \big[
    \delta_{\big(0,\tilde S, \tilde W\big)}
    -
    \delta_{\big(1,\tilde S, \tilde W\big)}    
    \big]
    \Biggr\}.
    \end{split}    
\end{align*}
Then, if \(\hat{V} = 0\), for the \textit{spiking} neuron $\hat{X}$, that is for    
\((V, S, \xi)  \in \supp{(\mu_{\tau^-}^{N})}\)  with  \(S = \hat{S} + \tau - t_0\), we have
    \begin{align*}
    \begin{split}
    \Delta \xi  =
    \frac{1}{N}\sum_{ (\tilde{V}, \tilde{S}, \tilde{W}) \in \supp(\xi)}
    \Biggl\{
    &
    \mathbbm 1_{\{\tilde S = \hat{S} + \tau - t_0\}}
    \big[
    \delta_{\big(
        1,0, \tilde W 
        + \mathbbm 1_{\{ \tilde{U}^+(\tilde{S}) \leq p^+(\tilde{S},\tilde{W})\}}
        - \mathbbm 1_{\{ \tilde{U}^-(\tilde{S}) \leq p^-(\tilde{S},\tilde{W})\}}
        \big)} 
        - \delta_{\big(
        0,\tilde S, \tilde W 
        \big)} 
    \big]
    \\
    & 
    + 
    \mathbbm 1_{\{\tilde S \neq \hat{S} + \tau - t_0\}}
    \big[
    \delta_{\big(
        \tilde{V},\tilde{S}, \tilde W 
        + \mathbbm 1_{\{ \tilde{U}^+(\tilde{S}) \leq p^+(\tilde{S},\tilde{W})\}}
        \big)} 
        -
            \delta_{\big(
        \tilde{V},\tilde S, \tilde W 
        \big)} 
    \big]
    \Biggr\},
    \end{split}
    \end{align*}
and for all $(V, S, \xi)  \in \supp{(\mu_{\tau^-}^{N})}$ with $S \neq \hat{S} + \tau - t_0$,
    \begin{align*}
    \Delta \xi  =
    \frac{1}{N}\sum_{ (\tilde{V}, \tilde{S}, \tilde{W}) \in \supp(\xi)}
    \Biggl\{
    & 
    \mathbbm 1_{\{ \tilde S = \hat{S}  + \tau - t_0\}}
    \big[
    \delta_{\big(
        1,0, \tilde W 
        - \mathbbm 1_{\{ \tilde{U}^-(\tilde{S}) \leq p^-(S,\tilde{W})\}}
        \big)} 
        -
            \delta_{\big(
        0,\tilde S, \tilde W 
        \big)} 
    \big]
    \Biggr\}.\quad \quad \quad \quad \quad \quad 
\end{align*}
This ends the construction of the dynamics of the process $(\mu_{t_0}^{N})_t$.
\end{proof}
\end{widetext}

\twocolumngrid
\subsection{Preliminaries for the derivation of the main result}
The purpose of this section is to set up the necessary elements to derive the conjecture on the evolution equations of the
typical neuron $(X_t^*)_{t \geq 0}$ from those of the finite-size neural network
$(X_t^N)_{t \geq 0}$. To do so, we first detail the necessary definitions and notations, before giving the assumptions needed.

\subsubsection*{Definitions and notations}
We are interested in finding the possible deterministic limit processes
$(\mu_t^*)_{t \geq 0}$ of $(\mu_t^N)_{t \geq 0}$ when $N$ tends to infinity, for the weak topology.
To do so, for $T>0$, we consider the
empirical distributions on $D_E[0,T]$,
the space of c\`adl\`ag functions from $[0,T]$ to $E$.
\begin{nota}\label{def:mesure occupation}
	Let $Z$ be a complete separable metric space.
	We denote by $D_Z[0,T]$ the space of \textit{c\`adl\`ag} functions (right 
	continuous with left limits)
	from $[0,T]$ to $Z$.
\end{nota}
We define 
\[
    X^{i,N} = (X_t^{i,N})_{0 \leq t \leq T} \text{ and } X^{*} = (X_t^{*})_{0 \leq t \leq T},
\]
with the distribution of $X_t^*$ being $\mu_t^*$, which we denote by $\mathcal L(X_t^*) = \mu_t^*$. Thus, we can then denote by 
\[
    \mu^{N} = \frac{1}{N} \sum_{i=1}^N \delta_{X^{i,N}} \ \text{ and } \ \mu^* = \mathcal L(X^*).
\]
Thereby, we consider the convergence in distribution of 
$(\mu^N)\in \mathcal{P}(D_{E}[0,T])$ where we recall that
\[
	E = \{0,1\} \times \reels^+ \times 
	\mathcal P\big(\{0,1\} \times \reels^+ \times \relatifs\big).
\]
The space $D_E[0,T]$ is equipped with the \(J_1\) Skorohod  topology.
We also consider a distance \(d\) on $D_E[0,T]$ with the two properties:
first	\(d\) induces the \(J_1\) Skorohod topology;
second, the space	\((D_E[0,T],d)\) is a separable and complete space (i.e a Polish 
space).
The existence of such a distance is proved in Billingsley~\cite[Sec 
12]{billingsley_convergence_1999}. We also use the property that if 
a space $G$ is Polish, then $\mathcal P(G)$ equipped with the associated weak 
convergence, is also Polish \cite[Thm 17.23]{kechris_class_1995}.
\\
Assuming that a limit $\mu^* \in \mathcal{P}(D_{E}[0,T])$ is deterministic (i.e. 
$\mathcal L(\mu^N) \rightarrow_{N_{\infty}} \delta_{\mu^*}$),
the aim of this article is to find the system satisfied by any limit point which satisfies 
\begin{align}\label{ass:hypothese-integrale}
    \begin{split}
    	\forall & \Psi  \in C_b^{1,1}(E),\\
    	&\espe{\int_0^T \langle \mu_u^N, \Psi \rangle du} \rightarrow_{N_{\infty}} 
    	\espe{\int_0^T\langle \mu_u^*, \Psi \rangle du}
        \\
        & \qquad \qquad \qquad \qquad \qquad \quad 
        = \int_0^T\langle \mu_u^*, \Psi \rangle du,
    \end{split}
\end{align}
where the derivation is in the Fréchet sense (see the $C_b^{1,1}$ Notation~\ref{not:defintion de Cb11} below).
Indeed, knowing $\langle \mu_t^*, \Psi \rangle$ for all $\Psi \in C_b^{1,1}(E)$ is sufficient 
to determine the distribution $\mu_t^*$.
Hence, we restrict ourselves to
\textbf{the study of }$\bf{\mathbb E \langle \mu_t^N, \Psi \rangle}$ for $\Psi$ in $C_b^{1,1}(E)$.
Thereby, we are looking for the limit equation of a
\textit{typical neuron} that we denote $X_t^* = (V_t^*, S_t^*, \xi_t^*)$ with distribution $\mu_t^*$.

We recall the notion of Fréchet differentiation.
We use the definition given in \cite{hale1980ordinary} at the beginning of page $6$.
\begin{definition}\label{def:frechet-diff}
	Let \(F\) and \(G\) be two Banach spaces embedded with the norms $\lVert\cdot\rVert_F$ and $\lVert\cdot\rVert_G$.
	Let $D$ be an open set of \(F\).
	We say that $\varphi:D \rightarrow G$ is Fréchet differentiable in $y \in D$ if there exists a linear bounded operator $L(y): h \in F \mapsto L(y)h \in G$ such that for all
	\( h \in F \) satisfying \( y + h \in D \), we have
	\[
	\lVert \varphi ( y + h ) - \varphi ( y ) 
	- L(y)h \rVert_{G}
	\leq \rho(\lVert h \rVert_{F}, y),
	\]
	where 
      \( \rho(\lVert h \rVert_{F}, y)\in \smallO{\lVert h \rVert_{F}}\).
    
    We call \( L(y) \) the Fréchet derivative of $\varphi$ in $y$ and 
	$L(y)h$ the Fréchet differential of $\varphi$ in $y$ in the direction $h$. 
	
	We say that $\varphi$ is Fréchet differentiable on $D$ if for all $y\in D$,
    \end{definition}
Then, we can define a space suitable for our following computations.
\begin{nota}\label{not:defintion de Cb11}
	We denote by $C_b^{1,1}(E)$ the space of functions from $E$ 
	to $\reels$ that are bounded, continuously
	differentiable with respect to their second variable, Fréchet
	differentiable (see Definition~\ref{def:frechet-diff}) with respect
	to their third variable and finally, 
	both these derivatives are bounded.
\end{nota}
The Fréchet derivative
with respect to the third variable is defined on the larger space
of the signed distribution on $E_m$, $\mathcal M(E_m)$, equipped with the
total variation norm, $\lVert \cdot \rVert_{TV}$.\begin{definition}\label{def:Jordan decomposition et variation totale}
	Let $(X, \mathcal{X})$ be a measurable space equipped with a signed distribution $\eta$. 
    One associates to $\eta$ the (unique) pair of positive
	distributions $\eta^+$ and $\eta^-$ defined for any \(A \in\mathcal{X}\) by
	\begin{align*}
	\eta^+(A) &= \sup\{\eta(B), B\in \mathcal{X}, B\subset 
	A\}\\
	\eta^-(A) &= \sup\{-\eta(B), B\in \mathcal{X}, B\subset 
	A\}.
	\end{align*}
	The Jordan decomposition of the signed distribution \(\eta\) writes
	\[
	\eta(A)	 = \eta^+(A) - \eta^-(A).
	\]
	The variation \(|\eta|\) of the signed distribution \(\eta\) is defined, for all \(A \in 
	\mathcal{X}\), by
	\[
	\lvert \eta \rvert (A) \coloneqq  \eta^+(A) + \eta^-(A).
	\]
	We define
    the total variation of \(\eta\) as
	\[
	\lVert \eta \rVert_{TV} \coloneqq  \lvert \eta \rvert (X) = \eta^+(X) + \eta^-(X).
	\]
	The space of signed distributions embedded with the norm $\lVert\cdot\rVert_{TV}$ is a Banach space. 
\end{definition}
In particular, $\Big(\mathcal M(E_m), \lVert \cdot \rVert_{TV}\Big)$
is a Banach space, see \cite[Rk 1.7]{ambrosio_functions_2000},
enabling us to define the Fréchet derivative of
$\Psi \in C_b^{1,1}(E)$.
For all $\Psi \in \mathcal C_b^{1,1}(E)$,
for all $(v,s,\xi) \in E$ and $h \in \mathcal M(E_m)$,
we denote by 
$\partial_{\xi} \Psi(v,s,\xi) \cdot h$
the Fréchet derivative of $\Psi$ at
$(v,s,\xi)$ in the direction $h$ (see Definition~\ref{def:frechet-diff}).

Now, we define two operators \(\eta^N \) and \(\zeta^N\)
to describe the jumps of the empirical measure when 
one neuron \(V\) jumps from \(0\) to \(1\) or from \(1\) to \(0\).
They both associate to a pair
	$(s,\xi) \in \reels^+ \times \mathcal P_N(E_m)$
	the following signed distributions on $E_m$:
	\begin{align*}
	\eta^N ( s, \xi )
	& \coloneqq  \frac{1}{N} \sum_{(1, s, \tilde w) \in \supp( \xi )} 
	\Big[ \delta_{\big( 0, s, \tilde w \big)} 
	- \delta_{\big( 1, s, \tilde w \big)} \Big] \\
    \zeta^N ( s, \xi ) 
	& \coloneqq  \frac{1}{N} \sum_{(0,  s, \tilde w) \in \supp( \xi )} 
	\Big[ \delta_{\big( 1, 0, \tilde w \big)} 
	- \delta_{\big( 0, s, \tilde w \big)} \Big].
	\end{align*}

\begin{nota}\label{not:egalite-en-esp}
	The symbol $\overset{\esp}{=}$ refers to the equality between the expectations of
	two random variables: 
	\[
	\espe{X} = \espe{Y} \quad \Leftrightarrow \quad X \overset{\esp}{=} Y.
	\]
\end{nota}

\begin{remarque}
	Let $\tau$ be any spiking time of the neuron $i$, i.e. \(V^{i,N}_{\tau-} = 0\) and 
	\(V^{i,N}_{\tau} = 1\). Then, according to
	Lemma~\ref{lem:supp-s-distinct}, for every $j$, there is a unique
	$(\tilde{v},\tilde{s},\tilde{w})$ in the support of \(\xi_{\tau^-}^{j,N}\) 
	such that \(\tilde{s} = S^{i,N}_{\tau^{-}}\). So, 
	$\zeta^N(S_{\tau^-}^{i,N},\xi_{\tau^-}^{j,N})$
	is a signed distribution with two atoms.
	\[
	\zeta^N ( S_{\tau^-}^{i,N}, \xi_{\tau^-}^{j,N})  =
	\frac{1}{N}
	\delta_{\left( 1, 0, \tilde{w} \right)} 
	- \frac{1}{N}\delta_{\left( 0, S_{\tau^-}^{i,N}, \tilde{w} \right)}.
	\]
	Similarly, at a time $t$ when $V^{i,N}_{\tau^-} = 1$  and 
	$V^{i,N}_{\tau} = 0$, there is a unique 
	$(\tilde{v},\tilde{s},\tilde{w})$ in the support of \(\xi_{\tau^-}^{j,N}\) 
	such that \(\tilde{s} = S^{i,N}_{\tau^{-}}\) and $\eta^N(S_{\tau^-}^{i,N},\xi_{\tau^-}^{j,N})$ is a signed 
	distribution
	with two atoms
	\[
	\eta^N ( S_{\tau^-}^{i,N}, \xi_{\tau^-}^{j,N})   =
	\frac{1}{N}
	\delta_{\left(0,S_{\tau^-}^{i,N},  \tilde{w} \right)} 
	- \frac{1}{N}\delta_{\left( 1, S_{\tau^-}^{i,N}, \tilde{w} \right)}.
	\]
	We define here $\eta^N$ and $\zeta^N$ only on $\mathcal P_N(E_m)$, the set of empirical distributions of order
	$N$ over $E_m$. With the changes occurring when passing to the large $N$
	limit, these distributions will no longer be required. New ones are defined on the full
	spaces $\mathcal P(E_m)$ later on, see Propositions~\ref{prop:limite-terme-1} and
	\ref{prop:limite-terme-2}.
\end{remarque}
Third, we define the derivative \(\varepsilon'\) of any probability distribution $\varepsilon$ on $\reels^+$ using the space 
of continuous functions with bounded derivative $C_b^1$ by,
\begin{equation}\label{eq:def-derivee-proba}
\forall \phi\in C_b^1(\reels^+), \quad
\langle \varepsilon', \phi \rangle \coloneqq   - \langle \varepsilon, \phi' \rangle.
\end{equation}
Finally, let us define for all $\xi \in \mathcal P(E_m)$ and $t \geq 0$,
the probability distribution $\xi \oplus t$ such that
\begin{multline*}
    \forall (v,A,w) \in \mathcal B(\{0,1\},[t,+\infty[, \relatifs), \qquad \\
\xi \oplus t(v,A,w) = \xi (v,A-t,w).
\end{multline*}

\begin{nota}\label{nota:produitmesure}
	For all $x,y \in \{0,1\}$, we denote by $\delta_x \otimes \xi$ the
	distribution such that
	\begin{multline*}
	    \forall (v,A,w) \in \{0,1\} \times \mathcal B(\reels^+)
	    \times \relatifs, \\ \{\delta_x \otimes \xi^y\}(v,A,w) \coloneqq \delta_x(v)\xi(y, A, w).
	\end{multline*}
	Finally, when $\mu_t^*$ and $\xi_t^*$ admit densities in $s$, we use the abuse of notation 
	\begin{align*}
	    \mu_t^*(v, ds,d\xi) & = \mu_t^*(v, s,d\xi)ds
        \\
        {\xi_t^*}(v, ds,w) & = {\xi_t^*}(v, s,w)ds.
	\end{align*}
\end{nota}

\subsubsection*{Assumptions}

\begin{assumption}\label{ass:lim-mu-star-1}
		Assume that for all $T>0$, the empirical distributions of the system $(\mu^N)\in \mathcal{P}(D_{E}[0,T])$
		converges in distribution to a (deterministic) probability distribution,
		$\mu^*\in \mathcal{P}(D_{E}[0,T])$, when $N$ tends to infinity. 
		In particular, the limit \eqref{ass:hypothese-integrale} holds.
\end{assumption}
Then, we need the following assumptions for our computations to hold:
\begin{assumption}\label{ass:lim-mu-star-2}
	The functions $\xi \mapsto \alpha(I(\xi))$ and for all $w \in \relatifs$,
	$s\mapsto p^{\pm}(s,w)$, are bounded continuous functions.
\end{assumption}
Our calculations concerning the dynamics of $\esp \langle \mu_t^N, \Psi \rangle$ enable us to conjecture
the limit distribution dynamics. In order to make this conjecture a theorem, we still need to prove
the convergence as $N$ tends to infinity of some terms of this dynamics. It will be explained in further
details just after Proposition~\ref{prop:limite-terme-2} for one term and just before Conjecture~\ref{th:equation limite mu sans plasticite} for another one.
Our Main Result \ref{th:equation limite systeme de particules avec plasticite} concerns the dynamics of the typical neuron $(X_t^*)_{t \geq 0}$.
In particular, it exposes the PDE that we expect to be satisfied by the density of $\xi_t^*$ in $s$
and requires a compatibility assumption: the boundary condition of the mean field must be satisfied at time $t=0$.
\begin{assumption}\label{ass:compabilite EDP}
	Assume that for all $t \geq 0$, $\xi_t^*$ admits a density of class $C^1$ in $s$
	and in particular at time $t=0$, for all $w \in \relatifs$,
	the following densities satisfy the boundary conditions for $\mu_0^*$ almost all $\xi_0^*$,
	\begin{align*}
	\begin{split}
	&{\xi_0^*}(0,0,w) = 0
	\\
	&{\xi_0^*}(1,0,w) = 
    \\ 
    & \  
    \int_{\reels^+ \times \mathcal{P}(E_m)}
    		\alpha\big(I(\xi')\big)
    		\frac{p^-(s',w+1)\xi_0^{*}(0,s',w+1)}{\xi^*_0(0,s',\relatifs)}
    		\mu_0^{*}(0,s',d\xi')ds'
    \\ 
    & \ + 
    \int_{\reels^+ \times \mathcal{P}(E_m)}
    		\alpha\big(I(\xi')\big)
    		\frac{(1-p^-(s',w))\xi_0^{*}(0,s',w)}{{\xi_0^{*}}(0,s',\relatifs)}
    		\mu_0^{*}(0,s',d\xi')ds'.
	\end{split}
	\end{align*}
\end{assumption}

\begin{widetext}
\subsection{The case without plasticity}
\label{supp:The case without plasticity}
The purpose of this section is to conjecture the evolution equations of the
typical neuron $(X_t^*)_{t \geq 0}$ 
when $p^+ \equiv p^- \equiv 0$ (without plasticity).
To do so, we first derive the infinitesimal generator of $(\mu_t^N)_{t \geq 0}$ and second, 
derive the dynamics of $\mathbb E \langle \mu_t^N, \Psi \rangle$ where $\Psi$ is a test function with properties described below.
Then, we study in detail the most complex terms of this dynamics.
Finally, we \textcolor{black}{conjecture} the dynamics of any limit point $(X_t^*)_{t \geq 0}$.

\subsubsection{Generator of \texorpdfstring{$(\mu_t^N)_{t \geq 0}$}{TEXT} }
We now look for the generator of the Markov process $(\mu_t^N)_{t \geq 0}$.
We describe the generator only on the set of 
cylindrical 
functions $\Phi \in C_b(\mathcal P(E))$ that is there exists $\Psi \in C^{1,1}_b(E)$ such that
\[
\forall \mu \in \mathcal P(E), \qquad
\Phi(\mu) = \langle\mu, \Psi\rangle.
\]
We consider the first jump of $(\mu_t^N)_{t \geq 0}$ that we denote by $\tau > 0$. We split the generator into the drift term and the jump term using the equality $1 = \mathbbm 1_{t < \tau} + \mathbbm 1_{t \geq \tau}$, hence splitting the generator (obtained when $t$ tends to $0$) of $(\mu_t^N)_{t \geq 0}$ into a drift term \(\circled{D}\) and a jump term \(\circled{J}\).
Let $\mu_0 \in \mathcal P_N(E)$ such that
$\mu_0 = \frac{1}{N} \sum_i \delta_{(v^i,s^i,\xi^i)}$ with
$(v^1,s^1,\xi^1), \dots, (v^N,s^N,\xi^N)$ are in $E$.
Let $\Psi \in C_b^{1,1}(E)$, then

\begin{align*}
\langle \mu_t^N - \mu_0, \Psi\rangle &= 
\langle\mathbbm 1_{t < \tau}\mu_t^N - \mu_0 , \Psi \rangle
+
\langle\mathbbm 1_{t \geq \tau} \mu_t^N, \Psi \rangle
= 
\frac{1}{N} \sum_i 
\left(
\Psi \big(v^i,s^i+t,\xi^i \oplus t\big)\mathbbm 1_{t < \tau}
- \Psi \big(v^i,s^i,\xi^i\big)
\right)
+
\langle\mathbbm 1_{t \geq \tau} \mu_t^N, \Psi \rangle
\\
& = \underbrace{\frac{1}{N} \sum_i 
	\left(
	\Psi \big(v^i,s^i+t,\xi^i \oplus t\big)
	- \Psi \big(v^i,s^i,\xi^i\big)
	\right)}_{\circled{D}}
+
\underbrace{\langle\mathbbm 1_{t \geq \tau} \mu_t^N, \Psi \rangle
	-\frac{1}{N} \sum_i 
	\Psi \big(v^i,s^i+t,\xi^i \oplus t\big)\mathbbm 1_{t \geq \tau}}_{\circled{J}}.
\end{align*}
Thereby, we obtain that the drift term satisfies, with $- \partial_s \xi$ the derivative defined as in \eqref{eq:def-derivee-proba},
\begin{align*}
\lim_{t \to 0} \frac{\esp\circled{D}}{t}
&=
\lim_{t \to 0} \frac{1}{N} \sum_i 
\frac{\espe{ \Psi \big(v^i,s^i+t,\xi^i \oplus t\big)} - 
	\Psi \big(v^i,s^i,\xi^i\big)}{t}
\\
&
= \frac{1}{N} \sum_i 
\partial_s\Psi \big(v^i,s^i,\xi^i\big)
+ \partial_{\xi}\Psi \big(v^i,s^i,\xi^i\big)
\cdot (-\partial_s \xi^i)
\\
&=
\int_{ E } \Big( \partial_s \Psi \left(v,s,\xi\right)
+\partial_{\xi} \Psi \left(v,s,\xi\right) \cdot 
(- \partial_s \xi) \Big) \mu_0(dv,ds,d\xi).
\end{align*}

For the jump term \circled{J}, when we take the limit as $t$ tends to $0$ in
$\frac{\esp\circled{J}}{t}$, the only term left is the one due to the
first jump at time $\tau$. Indeed, all the other terms are of order $t$ or
more. We thus obtain that
\begin{align*}
&\lim_{t \to 0} \frac{\esp\circled{J}}{t} = \frac{1}{N} \sum_i 
\mathbbm 1_{\{v^i = 1\}}  \beta \Biggl\{
\Big[ \Psi \Big( 0,s^i, \xi^i
+ \eta^N (s^i, \xi^i ) \Big) 
- \Psi \Big(1, s^i, \xi^i \Big) \Big]
\\
& \qquad \qquad \qquad \qquad \qquad \qquad
+ \sum_{j \neq i}
\Big[ \Psi \Big( v^j, s^j, \xi^j 
+ \eta^N ( s^i, \xi^j ) \Big) 
- \Psi \Big( v^j, s^j, \xi^j \Big) \Big]
\Biggr\}
\\
& \qquad \qquad \quad
+ \frac{1}{N} \sum_i \mathbbm 1_{\{v^i = 0\}}
\alpha(I(\xi^i)) \Biggl\{
\Big[
\Psi \Big( 1, 0,\xi^i 
+ \zeta^N ( s^i, \xi^i ) \Big) 
- \Psi \Big(0,s^i,\xi^i\Big) 
\Big]
\\
& \qquad \qquad \qquad \qquad \qquad \qquad\qquad\qquad
+ \sum_{j\neq i}
\Big[ \Psi \Big( v^j, s^j, \xi^j 
+ \zeta^N ( s^i, \xi^j ) \Big) 
- \Psi \Big( v^j, s^j, \xi^j \Big) \Big]
\Biggr\}.
\end{align*}
In the previous equation, by adding and removing the missing terms
in the sums inside the braces, we get
\begin{align*}
&\lim_{t \to 0} \frac{\esp\circled{J}}{t} = \frac{1}{N} \sum_i 
\mathbbm 1_{\{v^i = 1\}}  \beta \Biggl\{
\Big[ \Psi \Big( 0,s^i, \xi^i
+ \eta^N (s^i, \xi^i ) \Big) 
- \Psi \Big(1, s^i, \xi^i + \eta^N (s^i, \xi^i ) \Big) \Big]
\\
& \qquad \qquad \qquad \qquad \qquad \qquad
+ \sum_{j}
\Big[ \Psi \Big( v^j, s^j, \xi^j 
+ \eta^N ( s^i, \xi^j ) \Big) 
- \Psi \Big( v^j, s^j, \xi^j \Big) \Big]
\Biggr\}
\\
& \qquad \qquad \quad
+ \frac{1}{N} \sum_i \mathbbm 1_{\{v^i = 0\}}
\alpha(I(\xi^i)) \Biggl\{
\Big[
\Psi \Big( 1, 0,\xi^i 
+ \zeta^N ( s^i, \xi^i ) \Big) 
- \Psi \Big(0,s^i,\xi^i + \zeta^N ( s^i, \xi^i )\Big) 
\Big]
\\
& \qquad \qquad \qquad \qquad \qquad \qquad\qquad\qquad
+ \sum_{j}
\Big[ \Psi \Big( v^j, s^j, \xi^j 
+ \zeta^N ( s^i, \xi^j ) \Big) 
- \Psi \Big( v^j, s^j, \xi^j \Big) \Big]
\Biggr\}
\\
&
= \int_{\reels^+ \times \mathcal{P}(E_m)} \beta
\Big[ \Psi \Big( 0, s, \xi + \eta^N ( s, \xi ) \Big) 
- \Psi \Big( 1, s, \xi + \eta^N ( s, \xi ) \Big) \Big]
{\mu_{0}}(1,ds,d\xi)
\\
& \quad
+ \frac{1}{N} \sum_i 
\mathbbm 1_{\{v^i = 1\}}  \beta \sum_{j}
\Big[ \Psi \Big( v^j, s^j, \xi^j 
+ \eta^N ( s^i, \xi^j ) \Big) 
- \Psi \Big( v^j, s^j, \xi^j \Big) \Big]
\\
& \quad
+ \int_{\reels^+ \times \mathcal{P}(E_m)} \alpha\big(I(\xi)\big) 
\Big[ \Psi \Big( 1, 0,\xi + \zeta^N ( s,\xi ) \Big) 
- \Psi \Big( 0,s,\xi + \zeta^N ( s,\xi ) \Big) \Big]
\mu_{0}(0,ds,d\xi)
\\
& \quad
+ \frac{1}{N} \sum_i \mathbbm 1_{\{v^i = 0\}}
\alpha(I(\xi^i)) \sum_{j}
\Big[ \Psi \Big( v^j, s^j, \xi^j 
+ \zeta^N ( s^i, \xi^j ) \Big) 
- \Psi \Big( v^j, s^j, \xi^j \Big) \Big].
\end{align*}
We can now deduce the dynamics of $\esp \langle \mu_t^N, \Psi\rangle$ from \(\circled{D}\), \(\circled{J}\) and the Markov property of the process $(\mu_t^N)_{t \geq 0}$:

for all $\Psi \in C_b^{1,1}(E)$,
\begin{align}\label{eq:mu-t-N}
\begin{split}
\langle \mu_t^N, \Psi\rangle
&\overset{\esp}{=}
\langle \mu_0^N, \Psi\rangle 
+ \int_0^t \int_E 
\Big(
\partial_s \Psi(v,s, \xi) + \partial_{\xi} \Psi(v,s, \xi) \cdot (-\partial_s \xi) 
\Big)
\mu_u^N(dv,ds,d\xi) du
\\
& \quad
+ \int_0^t \int_{\reels^+ \times \mathcal{P}(E_m)} \beta
\Big[ \Psi \Big( 0, s, \xi + \eta^N ( s, \xi ) \Big) 
- \Psi \Big( 1, s, \xi + \eta^N ( s, \xi ) \Big) \Big]
{\mu_{u^-}^N}(1,ds,d\xi) du
+ \circled{1}
\\
& \quad
+ \int_0^t \int_{\reels^+ \times \mathcal{P}(E_m)} \alpha\big(I(\xi)\big) 
\Big[ \Psi \Big( 1, 0,\xi + \zeta^N ( s,\xi ) \Big) 
- \Psi \Big( 0,s,\xi + \zeta^N ( s,\xi ) \Big) \Big]
\mu_{u^-}^N(0,ds,d\xi) du
+ \circled{2}
\end{split}
\end{align}
where $\circled{1}$ and $\circled{2}$ are the most complex elements:
\begin{align}
\begin{split}
\circled{1} & \coloneqq  \int_0^t \sum_i \mathbbm 1_{\{V_{u^-}^{i,N} = 1\}}
\frac{\beta}{N} \sum_j
\Big[ \Psi \Big( V_{u^-}^{j,N}, S_{u^-}^{j,N}, \xi_{u^-}^{j,N} 
+ \eta^N ( S_{u^-}^{i,N}, \xi_{u^-}^{j,N} ) \Big)
- \Psi \Big( V_{u^-}^{j,N}, S_{u^-}^{j,N}, \xi_{u^-}^{j,N} \Big) \Big] du,
\end{split}
\\
\begin{split}
\circled{2} & \coloneqq  \int_0^t \sum_i \mathbbm 1_{\{V_{u^-}^{i,N} = 0\}} 
\frac{\alpha(I_{u^-}^{i,N})}{N} \sum_j
\Big[ \Psi \Big( V_{u^-}^{j,N}, S_{u^-}^{j,N}, \xi_{u^-}^{j,N} 
+ \zeta^N ( S_{u^-}^{i,N}, \xi_{u^-}^{j,N} ) \Big)
- \Psi \Big( V_{u^-}^{j,N}, S_{u^-}^{j,N}, \xi_{u^-}^{j,N} \Big) \Big] du.
\end{split}
\label{eq:definitionde2entoure}
\end{align}

\subsubsection{Drift term due to the returns to resting potential}
When a neuron returns to its resting potential state $0$, the total variation
of the jumps of the empirical distributions 
\((\xi_t^{1,N}),\cdots,(\xi_t^{N,N})\) 
are of order
$\frac{1}{N}$. Hence, as $N$ tends to infinity, the limit of the term
\circled{1} depends on the Fréchet derivative of $\Psi$. The direction of this
derivative describes the way some of the mass of the distribution $\xi^*$ is
transported. This transport can be described informally as follows:
between time $t=0$ and time $t=\epsilon$ small, for
all $s,\ w$, a proportion $\beta \epsilon$ of the mass of $\xi_0^*$ that was
in $(1,s,w)$ is transported to $(0,s,w)$.
\begin{proposition}\label{prop:limite-terme-1}
	Under Assumptions~\ref{hyp:cdt-init} and~\ref{ass:lim-mu-star-1},
	for all $\Psi \in C_b^{1,1}(E)$,
	\begin{equation}\label{eq:limite terme 1}
	\lim_{N \rightarrow \infty} \circled{1}
	\overset{\esp}{=}
	\int_0^t \int_E  
	\partial_{\xi} \Psi ( v,s,\xi ) \cdot (\beta \delta_0 \otimes \xi^1 - \beta \delta_1 \otimes \xi^1) 
	\mu_{u^-}^*(dv,ds,d\xi) du.
	\end{equation}
\end{proposition}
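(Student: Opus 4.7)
The plan is to exploit the fact that, for each pair $(i,j)$, the signed measure $\eta^N(S^{i,N}_{u^-},\xi^{j,N}_{u^-})$ has total variation at most $2/N$, so a first-order Fréchet expansion of $\Psi$ in its third argument linearises each difference quotient into an object that can be summed explicitly thanks to the distinctness of the $S^{i,N}_{u^-}$'s (Lemma~\ref{lem:supp-s-distinct}).

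First, since $\Psi\in C_b^{1,1}(E)$, Definition~\ref{def:frechet-diff} applied to each summand of $\circled{1}$ gives
\begin{equation*}
\Psi\bigl(V^{j,N}_{u^-},S^{j,N}_{u^-},\xi^{j,N}_{u^-}+\eta^N\bigr) - \Psi\bigl(V^{j,N}_{u^-},S^{j,N}_{u^-},\xi^{j,N}_{u^-}\bigr) = \partial_{\xi}\Psi\bigl(V^{j,N}_{u^-},S^{j,N}_{u^-},\xi^{j,N}_{u^-}\bigr)\cdot\eta^N + r^{ij}_N(u),
\end{equation*}
with $\eta^N=\eta^N(S^{i,N}_{u^-},\xi^{j,N}_{u^-})$ and $|r^{ij}_N(u)|=\smallO{1/N}$ uniformly. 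There are $N^2$ such terms, prefactored by $\beta/N$, so the cumulated remainder is of order $N\cdot\smallO{1/N}=\smallO{1}$, which vanishes in expectation as $N\to\infty$.

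Second, I use Lemma~\ref{lem:supp-s-distinct} to identify the inner sum over $i$ in closed form. Writing $\xi^{j,N}_{u^-}=\frac{1}{N}\sum_i\delta_{(V^{i,N}_{u^-},S^{i,N}_{u^-},W^{ji,N}_{u^-})}$, the atoms of $\xi^{j,N}_{u^-}$ with first coordinate equal to $1$ are in one-to-one correspondence with the indices $i$ satisfying $V^{i,N}_{u^-}=1$. Each such atom enters $\eta^N(S^{i,N}_{u^-},\xi^{j,N}_{u^-})$ with mass $\pm 1/N$, so by linearity of the Fréchet derivative in its direction,
\begin{equation*}
\sum_i \mathbbm 1_{\{V^{i,N}_{u^-}=1\}}\,\eta^N\bigl(S^{i,N}_{u^-},\xi^{j,N}_{u^-}\bigr) = \delta_0\otimes\bigl(\xi^{j,N}_{u^-}\bigr)^1 - \delta_1\otimes\bigl(\xi^{j,N}_{u^-}\bigr)^1,
\end{equation*}
in the sense of Notation~\ref{nota:produitmesure}. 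Plugging back into $\circled{1}$ and recognising $\frac{1}{N}\sum_j(\cdots)$ as an integral against $\mu^N_{u^-}$ gives
\begin{equation*}
\circled{1} \overset{\esp}{=} \int_0^t \int_E \beta\,\partial_{\xi}\Psi(v,s,\xi)\cdot\bigl(\delta_0\otimes\xi^1 - \delta_1\otimes\xi^1\bigr)\,\mu^N_{u^-}(dv,ds,d\xi)\,du + \smallO{1}.
\end{equation*}

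The conclusion then follows from Assumption~\ref{ass:lim-mu-star-1}, specifically the integral convergence~\eqref{ass:hypothese-integrale}, applied to the test function $\tilde\Psi(v,s,\xi)\coloneqq\beta\,\partial_{\xi}\Psi(v,s,\xi)\cdot(\delta_0\otimes\xi^1-\delta_1\otimes\xi^1)$. The main obstacle is precisely checking that $\tilde\Psi$ lies in the class for which \eqref{ass:hypothese-integrale} delivers convergence: one must verify that $\tilde\Psi$ is bounded (which follows from the bounded-operator norm of $\partial_\xi\Psi(v,s,\cdot)$ on $(\mathcal M(E_m),\lVert\cdot\rVert_{TV})$ combined with $\lVert\delta_0\otimes\xi^1-\delta_1\otimes\xi^1\rVert_{TV}\leq 2$) and sufficiently regular in $\xi$ under the weak topology. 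A secondary care-point is producing the $\smallO{1/N}$ Fréchet remainder \emph{uniformly} in the random empirical distributions $\xi^{j,N}_{u^-}$ and in $\omega$; this relies on a uniform modulus of continuity for $\xi\mapsto\partial_\xi\Psi(\cdot,\cdot,\xi)$ combined with the deterministic bound $\lVert\eta^N\rVert_{TV}\leq 2/N$.
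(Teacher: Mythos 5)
Your proposal is correct and follows essentially the same route as the paper's proof: the $\lVert\eta^N\rVert_{TV}\leq 2/N$ bound from Lemma~\ref{lem:supp-s-distinct}, the first-order Fréchet expansion with $\smallO{1}$ cumulated remainder, the closed-form resummation of $\sum_i\mathbbm 1_{\{V^{i,N}_{u^-}=1\}}\eta^N(S^{i,N}_{u^-},\xi^{j,N}_{u^-})$ into $\delta_0\otimes\xi^1-\delta_1\otimes\xi^1$, and the passage to the limit via Assumption~\ref{ass:lim-mu-star-1} after checking boundedness and continuity of the resulting test function. The care-points you flag at the end are exactly the ones the paper addresses (continuity of $\xi\mapsto\xi(1,\cdot,\cdot)$ and boundedness of the Fréchet derivative).
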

\begin{proof}
	By Lemma~\ref{lem:supp-s-distinct}, for all $j$, the support of
	$\xi_{u^-}^{j,N}$ almost surely does not possess two points with
	the same second coordinate. Hence, we have the following upper bound, for
	all $s \in \reels^+$, almost surely,
	\begin{equation}\label{eq:born-norm-eta-N}
	\lVert \eta^N ( s, \xi_{u^-}^{j,N} ) \rVert_{TV} \leq \frac 2N.
	\end{equation}
	We write the term of $\circled{1}$ which is between square brackets using the Fréchet derivative:
	\begin{align*}
	&\Psi \Big( V_{u^-}^{j,N}, S_{u^-}^{j,N}, \xi_{u^-}^{j,N} 
	+ \eta^N ( S_{u^-}^{i,N}, \xi_{u^-}^{j,N} ) \Big) 
	- \Psi \Big( V_{u^-}^{j,N}, S_{u^-}^{j,N}, \xi_{u^-}^{j,N} \Big)
	\\
	& \qquad \qquad
	= \partial_{\xi} \Psi (V_{u^-}^{j,N}, S_{u^-}^{j,N}, \xi_{u^-}^{j,N}) \cdot 
	\Big( \eta^N ( S_{u^-}^{i,N}, \xi_{u^-}^{j,N} ) \Big) 
	+ {\scriptstyle\mathcal{O}}_{N \infty} \Big(  
	\lVert \eta^N ( S_{u^-}^{i,N}, \xi_{u^-}^{j,N} ) \rVert_{TV} \Big).
	\end{align*}
	Hence, we have:
	\begin{align*}
	\circled{1} &\overset{\esp}{=} \int_0^t 
	\sum_i \mathbbm 1_{ \{V_{u^-}^{i,N} = 1\} }
	\frac{\beta}{N} \sum_j
	\Big[ \partial_{\xi} \Psi ( V_{u^-}^{j,N}, S_{u^-}^{j,N}, \xi_{u^-}^{j,N} )
	\cdot 
	\Big( \eta^N ( S_{u^-}^{i,N}, \xi_{u^-}^{j,N} ) \Big) 
	+ {\scriptstyle\mathcal{O}}_{N \infty} 
	\Big( \lVert \eta^N (S_{u^-}^{i,N}, \xi_{u^-}^{j,N} ) \rVert_{TV} \Big) \Big] du
	.
	\end{align*}
	Thus, with the upper bound \eqref{eq:born-norm-eta-N}, we obtain that
	\begin{align*}
	\circled{1} &\overset{\esp}{=} \int_0^t 
	\sum_i \mathbbm 1_{ \{V_{u^-}^{i,N} = 1\} }
	\frac{\beta}{N} \sum_j 
	\Big[ \partial_{\xi} \Psi ( V_{u^-}^{j,N}, S_{u^-}^{j,N}, \xi_{u^-}^{j,N} )
	\cdot 
	\Big( \eta^N ( S_{u^-}^{i,N}, \xi_{u^-}^{j,N} ) \Big) \Big] du
	+ {\scriptstyle\mathcal{O}}_{N \infty} ( 1  ).
	\end{align*}
	Now, by linearity of $\partial_{\xi} \Psi$ and inverting the sums, we get
	\begin{align*}
	\circled{1} &\overset{\esp}{=} \int_0^t 
	\frac{1}{N} \sum_j 
	\Big[ \partial_{\xi} \Psi ( V_{u^-}^{j,N}, S_{u^-}^{j,N}, \xi_{u^-}^{j,N} )
	\cdot \Big( \sum_i \mathbbm 1_{ \{V_{u^-}^{i,N} = 1\} }
	\beta 
	\eta^N (  S_{u^-}^{i,N}, \xi_{u^-}^{j,N} ) \Big) \Big] du 
	+ {\scriptstyle\mathcal{O}}_{N \infty} ( 1 ).
	\end{align*}
	From Remark~\ref{rem:egalite-mu-xi},
	for all $j$ and $u \geq 0$, $\xi_{u^-}^{j,N}$ and $\mu_{u^-}^N$
	have the same support in $V$ and $S$, and from
	Lemma~\ref{lem:supp-s-distinct}, these supports are $N$ almost surely 
	distinct points. Hence, almost surely, for any function
	$f:\reels^+ \times \relatifs\to \reels^+$,
	\[
	\forall j, \quad
	\sum_i\sum_{(1, S_{u^-}^{i,N}, \tilde w) \in \supp(\xi_{u^-}^{j,N})} 
	f(S_{u^-}^{i,N}, \tilde w)
	= \sum_{(1, \tilde s, \tilde w) \in \supp(\xi_{u^-}^{j,N})}
	f(\tilde s, \tilde w).
	\]	
	We deduce that almost surely,
	\begin{align*}
	\sum_i \mathbbm 1_{ \{V_{u^-}^{i,N} = 1\} }
	\beta  \eta^N (  S_{u^-}^{i,N}, \xi_{u^-}^{j,N} ) 
	&= \sum_i \frac{\beta}{N}
	\sum_{(1, S_{u^-}^{i,N}, \tilde w) \in \supp(\xi_{u^-}^{j,N})} 
	\Big[ \delta_{\big( 0, S_{u^-}^{i,N}, \tilde w \big)} 
	- \delta_{\big( 1, S_{u^-}^{i,N}, \tilde w \big)} \Big]
	\\
	&= \frac{\beta}{N}
	\sum_{(1, \tilde s, \tilde w) \in \supp(\xi_{u^-}^{j,N})} 
	\Big[ \delta_{\big( 0, \tilde s, \tilde w \big)} 
	- \delta_{\big( 1, \tilde s, \tilde w \big)} \Big]
	\\
	&= \beta \delta_0 \otimes {\xi_{u^-}^{j,N}}(1,\cdot,\cdot)
	- \beta \delta_1 \otimes {\xi_{u^-}^{j,N}}(1,\cdot,\cdot),
	\end{align*}
	where we used Notation~\ref{nota:produitmesure}.
	We deduce that
	\[
	\circled{1}
	\overset{\esp}{=}
	\int_0^t \int_E  
	\partial_{\xi} \Psi ( v,s,\xi ) \cdot (\beta \delta_0 \otimes \xi^1 - \beta \delta_1 \otimes \xi^1) 
	\ \mu_{u^-}^N(dv,ds,d\xi) du
	+ {\scriptstyle\mathcal{O}}_{N \infty} ( 1 ).
	\]
	\textcolor{black}{The application $\xi \mapsto \xi(1,\cdot,\cdot)$ is bounded continuous.
	Thus, as $\Psi \in C_b^{1,1}(E)$, the application
	\[
	(s,v,\xi) \mapsto \partial_{\xi} \Psi ( v,s,\xi ) \cdot (\beta \delta_0 \otimes \xi^1 - \beta \delta_1 \otimes \xi^1)
	\]
	is bounded continuous.
	We conclude that
	under Assumption~\ref{ass:lim-mu-star-1},
	\begin{equation*}
			\lim_{N \rightarrow \infty} \circled{1}
			\overset{\esp}{=}
	\int_0^t \int_E  
	\partial_{\xi} \Psi ( v,s,\xi ) \cdot (\beta \delta_0 \otimes \xi^1 - \beta \delta_1 \otimes \xi^1) 
	\mu_{u^-}^*(dv,ds,d\xi) du.
	\end{equation*}}
\end{proof}

\subsubsection{Drift term due to action potentials}
As $N$ tends to infinity, the limit of the term \circled{2} also gives
a drift term for $\xi^*$. This term is more difficult to deal with because
the spiking rates
$(\mathbbm 1_{\{ V_t^{i,N} = 0 \}}\alpha(I_t^{i,N}))_{1\leq i \leq N}$ depend
on the actual state of the neural network $X_t^N$.
Informally, the limit of the term \circled{2} represents
the following mass transport on the probability distributions: between time $t=0$
and time $t=\epsilon$ small, for all $s,\ w$, a proportion 
$\epsilon \int_{\mathcal P(E_m)} \frac{\alpha(I(\tilde\xi))
}{\mu_0^*(0,s,\mathcal P(E_m))} \mu_0^*(0,s,d\tilde \xi)$ of the mass of
$\xi_0^*$ that was at $(0,s,w)$ is transported to $(1,[0,\epsilon],w)$.
\begin{proposition}\label{prop:limite-terme-2}
	We assume that Assumptions~\ref{hyp:cdt-init} and~\ref{ass:lim-mu-star-2} hold.
	For all $\Psi$ Fréchet differentiable with respect to its third variable:
	\textcolor{black}{\[
	\circled{2} \overset{\esp}{=}
	\int_0^t \int_{E}
	\partial_{\xi} \Psi ( v,s,\xi )
	\cdot 
	\Big(\delta_1 \otimes \nu^1 ( {\xi},{\mu_{u^-}^N}) 
	- \delta_0 \otimes \nu^0 ( {\xi},{\mu_{u^-}^N})\Big)\mu_{u^-}^N(dv,ds,d\xi) du
	+ {\scriptstyle\mathcal{O}}_{N \infty} ( 1 ).
	\]}
	where $\nu^0$ and $\nu^1$ are defined in the proof: equations
	\eqref{eq:defnutilde0} and \eqref{eq:defnutilde1}.
\end{proposition}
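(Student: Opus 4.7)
The approach mirrors that of Proposition~\ref{prop:limite-terme-1}, with the added difficulty that the rate $\alpha(I(\xi_{u^-}^{i,N}))$ inside \eqref{eq:definitionde2entoure} depends on the spiking neuron $i$ through its own incoming current, so it cannot be pulled out of the sum as the constant rate $\beta$ could in the previous proposition.

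The first step is a Fréchet expansion of $\Psi \in C_b^{1,1}(E)$ along the perturbation $\zeta^N(S_{u^-}^{i,N}, \xi_{u^-}^{j,N})$, writing the bracket in \eqref{eq:definitionde2entoure} as $\partial_\xi \Psi(V_{u^-}^{j,N}, S_{u^-}^{j,N}, \xi_{u^-}^{j,N}) \cdot \zeta^N(S_{u^-}^{i,N}, \xi_{u^-}^{j,N})$ plus a remainder of size $\smallO{\lVert \zeta^N \rVert_{TV}}$. By Lemma~\ref{lem:supp-s-distinct}, each atom of $\xi_{u^-}^{j,N}$ has a distinct $S$-coordinate, so $\lVert \zeta^N(s,\xi_{u^-}^{j,N}) \rVert_{TV} \leq 2/N$. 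Combined with the boundedness of $\alpha$ (Assumption~\ref{ass:lim-mu-star-2}) and the prefactor $1/N$ in $\circled{2}$, the double sum $\sum_{i,j}$ of the remainder contributes $\smallO{1}$, precisely as in the proof of Proposition~\ref{prop:limite-terme-1}.

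Next, I swap the sums over $i$ and $j$ and use the linearity of $\partial_\xi \Psi$, obtaining
\begin{align*}
\circled{2} \overset{\esp}{=} \int_0^t \frac{1}{N} \sum_j \partial_\xi \Psi\big(V_{u^-}^{j,N}, S_{u^-}^{j,N}, \xi_{u^-}^{j,N}\big) \cdot M_{u^-}^{j,N}\, du + \smallO{1},
\end{align*}
with $M_{u^-}^{j,N} \coloneqq \sum_i \mathbbm{1}_{\{V_{u^-}^{i,N}=0\}} \alpha(I_{u^-}^{i,N})\, \zeta^N(S_{u^-}^{i,N}, \xi_{u^-}^{j,N})$. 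I then identify $M_{u^-}^{j,N}$ using the distinct-$S$ property of Lemma~\ref{lem:supp-s-distinct}: each $V=0$ atom of $\xi_{u^-}^{j,N}$ has second coordinate $S_{u^-}^{i,N}$ for a unique pre-synaptic index $i$, and that same value of $S_{u^-}^{i,N}$ picks out the unique atom of $\mu_{u^-}^N$ carrying $\xi_{u^-}^{i,N}$. Relabelling the sum over $i$ as a sum over $V=0$ atoms of $\xi_{u^-}^{j,N}$ and recovering $\alpha(I(\xi_{u^-}^{i,N}))$ as the value of $\alpha\circ I$ at the matching atom of $\mu_{u^-}^N(0,ds',d\xi')$, the two atomic components of $\zeta^N$ split into a mass creation at $(1,0,\tilde w)$ and a mass destruction at $(0,s',\tilde w)$. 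This motivates the definitions of the two signed measures $\nu^1(\xi,\mu)$ and $\nu^0(\xi,\mu)$ on $\reels^+ \times \relatifs$. Finally, rewriting $\frac{1}{N}\sum_j$ as integration against $\mu_{u^-}^N$ delivers the announced expression.

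The principal obstacle is the identification step above: the firing-rate factor $\alpha(I(\xi^{i,N}))$ is a functional of the empirical measure $\mu^N$, whereas $\zeta^N(S^{i,N}, \xi^{j,N})$ is dictated by the atomic structure of $\xi^{j,N}$. Decoupling them without reintroducing neuron labels is only possible because Lemma~\ref{lem:supp-s-distinct} makes the coordinate $S$ play the role of a synthetic label shared by $\mu^N$ and every $\xi^{j,N}$. The definitions of $\nu^0$ and $\nu^1$ must encode this matching so that the conditional atom $\xi^{j,N}(0,\{s'\},\cdot)$ of finite mass $1/N$ converges, as $N\to\infty$, to the density $\xi^*(0,s',\cdot)/\xi^*(0,s',\relatifs)$ weighted by $\alpha(I(\xi'))\,\mu^*(0,s',d\xi')$ that appears in Main Result~\ref{th:equation limite systeme de particules avec plasticite}.
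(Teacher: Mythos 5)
Your proposal is correct and follows essentially the same route as the paper's proof: Fréchet expansion with the $2/N$ total-variation bound on $\zeta^N$ from Lemma~\ref{lem:supp-s-distinct}, exchange of the sums over $i$ and $j$, identification of the inner sum (the paper's term $\circled{2.a}$) as an integral against $\mu_{u^-}^N(0,ds,d\tilde\xi)$ via the $S$-coordinate matching, which the paper formalises through the Radon--Nikodym derivative $\gamma_\xi$ in \eqref{eq:gamma-xi}. The only difference is presentational: you describe the matching of the conditional atom $\xi^{j,N}(0,\{s'\},\cdot)$ informally where the paper writes it explicitly as $\gamma_{\xi^{j,N}_{u^-}}(s,\tilde w)$, which is what makes $\nu^0$ and $\nu^1$ well defined on all of $\mathcal P(E_m)$ rather than only on atomic measures.
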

\begin{proof}
	By Lemma~\ref{lem:supp-s-distinct}, for any $j$, the support of
	$\xi_{u^-}^{j,N}$ almost surely does not contain two atoms with the same
	$S$. Furthermore, we have assumed that $\alpha$ is bounded. Thus, we have
	the following upper bounds, for all
	$i,\ j \in \llbracket 1,N \rrbracket $, almost surely,
	\begin{equation}\label{eq:bound nu N}
		\lVert\alpha(I_{u^-}^{i,N}) \zeta^N ( S_{u^-}^{i,N},\xi_{u^-}^{j,N} ) \rVert_{TV} 
		\leq \frac{2 \|\alpha\|_\infty}{N}.
	\end{equation}
	Using the Fréchet derivative as previously, from 
	\eqref{eq:definitionde2entoure}, we obtain:
	\begin{align*}
	\circled{2} \overset{\esp}{=} \int_0^t \frac{1}{N}
	\sum_j 
	\Big[ \partial_{\xi} \Psi ( V_{u^-}^{j,N}, S_{u^-}^{j,N}, \xi_{u^-}^{j,N} )
	\cdot \Big( \sum_i \mathbbm 1_{\{V_{u^-}^{i,N} = 0\}}
	\alpha(I_{u^-}^{i,N}) 
	\zeta^N ( S_{u^-}^{i,N},\xi_{u^-}^{j,N} )& \Big) \Big]  du
	+ {\scriptstyle\mathcal{O}}_{N \infty} (1).
	\end{align*}
	We now look at the term between round brackets, it gives the direction of
	the Fréchet derivative. It is a distribution on $E_m$. For all $j$, we denote
	\begin{align*}
	\circled{2.a} & \coloneqq 
	\sum_i \mathbbm 1_{\{V_{u^-}^{i,N} = 0\}}
	\alpha(I_{u^-}^{i,N})  
	\zeta^N ( S_{u^-}^{i,N},\xi_{u^-}^{j,N} )
	\\
	&=
	\sum_i
	\mathbbm 1_{\{V_{u^-}^{i,N} = 0\}}
	\alpha(I_{u^-}^{i,N})
	\frac{1}{N}
	\sum_{(0, S_{u^-}^{i,N}, \tilde w) \in \supp( \xi_{u^-}^{j,N} )}
	\Big[ \delta_{\big( 1, 0, \tilde w \big)} 
	- \delta_{\big( 0, S_{u^-}^{i,N}, \tilde w \big)} \Big]
	\\
	&=
	\frac{1}{N}
	\sum_i
	\mathbbm 1_{\{V_{u^-}^{i,N} = 0\}}
	\alpha(I_{u^-}^{i,N})
    \sum_{\tilde w\in \relatifs} \mathbbm 1_{ \{(0, S_{u^-}^{i,N}, \tilde w) \in \supp( \xi_{u^-}^{j,N} )\} }
    \Big[ \delta_{\big( 1, 0, \tilde w \big)} 
	- \delta_{\big( 0, S_{u^-}^{i,N}, \tilde w \big)} \Big].
	\end{align*}
But 
\begin{equation}\label{eq:w_to_gamma}
    \mathbbm 1_{ \{(0, S_{u^-}^{i,N}, \tilde w) \in \supp( \xi_{u^-}^{j,N} )\} } = \frac{\frac 1N \sum_{(v, s, w)\in \supp( \xi_{u^-}^{j,N})} \mathbbm 1_{\{v=0,s=S_{u^-}^{i,N},w=\tilde w\}}}{\frac 1N \sum_{(v, s)\in \supp( \xi_{u^-}^{j,N}(\cdot,\cdot,\relatifs))} \mathbbm 1_{\{v=0,s=S_{u^-}^{i,N}\}}}
        = \frac{\xi_{u^-}^{j,N}(0, S_{u^-}^{i,N}, \tilde w)}{\xi_{u^-}^{j,N}(0, S_{u^-}^{i,N},\relatifs)},
\end{equation}
hence inciting us to define for any $w \in \relatifs$,
	the Radon-Nikodym derivative \(\gamma_\xi\)
    \begin{align}\label{eq:gamma-xi}
	\forall (s,w) \in \reels^+ \times \relatifs,\qquad
	\gamma_{\xi}(s,w) = \frac{d\xi(0,\cdot, w)}{d\xi(0,\cdot, \relatifs)}(s).
	\end{align}
	Note that for all $A \in \mathcal B(\reels^+)$,
	we have
	\[
	\xi(0, A, w) = \int_A \gamma_{\xi}(s,w)\xi(0, ds, \relatifs).
	\]
	As $\xi(0, A, w) \leq \xi(0, A, \relatifs)$, we deduce that $\gamma_{\xi} \leq 1$.
	In particular, 
    using \eqref{eq:w_to_gamma}, we obtain that  
	\[
		\mathbbm 1_{ \{(0, S_{u^-}^{i,N}, \tilde w) \in \supp( \xi_{u^-}^{j,N} )\} }
		= \gamma_{{\xi_{u^-}^{j,N}}}(S_{u^-}^{i,N}, \tilde w).
    \]
	Thereby, using the distribution
	$\mu_{u^-}^N(0,\cdot,\cdot) = \frac{1}{N}\sum_i 
	\mathbbm 1_{\{V_{u^-}^{i,N} = 0\}}
	\delta_{(S_{u^-}^{i,N},\xi_{u^-}^{i,N})}$, we obtain that
	\begin{align*}
	\circled{2.a} &= \sum_{\tilde w \in \relatifs} \frac{1}{N} \sum_i 
	\mathbbm 1_{\{V_{u^-}^{i,N} = 0\}}
	\alpha(I_{u^-}^{i,N}) \gamma_{\xi_{u^-}^{j,N}}(S_{u^-}^{i,N},\tilde w)
	\Big( \delta_{(1,0,\tilde w)}
	- \delta_{(0,S_{u^-}^{i,N},\tilde w)} \Big)
	\\
	& =
	\sum_{\tilde w \in \relatifs} \int_{\reels^+ \times \mathcal P(E_m)}
	\alpha(I(\tilde \xi)) \gamma_{\xi_{u^-}^{j,N}}(s,\tilde w)
	\Big( \delta_{(1,0,\tilde w)} - \delta_{(0,s,\tilde w)} \Big)
	\mu_{u^-}^N(0,ds,d\tilde \xi)
	.
	\end{align*}
	This last equation incites us to define the two following distributions.
	For $\xi \in \mathcal{P}(E_m)$ and
	$\mu \in \mathcal{P}(E)$, we define $\nu^0 (\xi,\mu) $ and
	$\nu^1 (\xi,\mu)$ the distributions on $\reels^+ \times \relatifs$ such that
	for all $( A,w) \in \mathcal B(\reels^+) \times \relatifs$,
	\begin{align}\label{eq:defnutilde0}
	\nu^0 (\xi,\mu) ( A,w) & \coloneqq 
	\int_{ \mathcal{P}(E_m) } \int_A \alpha\big(I(\tilde \xi)\big)
	\gamma_{\xi}(s, w)\mu(0,ds,d \tilde \xi)
	,\\
	\nu^1 (\xi,\mu) ( A,w)
	& \coloneqq  
	\mathbbm 1_{ \{0 \in A\} } 
	\nu^0 (\xi,\mu) (\reels^+,w )
	.
	\label{eq:defnutilde1}
	\end{align}
	Hence, we deduce that
	\[
	\circled{2.a} = \delta_1 \otimes \nu^1 ( {\xi_{u^-}^{j,N}},{\mu_{u^-}^N}) 
	- \delta_0 \otimes \nu^0 ( {\xi_{u^-}^{j,N}},{\mu_{u^-}^N})
	\]
	and then
	\begin{align*}
	&\circled{2} \overset{\esp}{=} \int_0^t \frac{1}{N} \sum_j
	\partial_{\xi} \Psi ( V_{u^-}^{j,N}, S_{u^-}^{j,N}, \xi_{u^-}^{j,N} )
	\cdot\  \circled{2.a} du
	+ {\scriptstyle\mathcal{O}}_{N \infty} ( 1 )
	\\ & \overset{\esp}{=}
	\int_0^t \int_{E}
	\partial_{\xi} \Psi ( v,s,\xi )
	\cdot 
	\Big(\delta_1 \otimes \nu^1 ( {\xi},{\mu_{u^-}^N}) 
	- \delta_0 \otimes \nu^0 ( {\xi},{\mu_{u^-}^N})\Big)\mu_{u^-}^N(dv,ds,d\xi) du
	+ {\scriptstyle\mathcal{O}}_{N \infty} ( 1 ).
	\end{align*}
\end{proof}
From this result and under Assumption~\ref{ass:lim-mu-star-1}, we expect $\esp$\circled{2} to converge -- as $N$ tends to infinity -- to
\begin{equation}\label{eq:limite terme 2}
\int_0^t \int_E  
\partial_{\xi} \Psi ( v,s,\xi ) \cdot 
(\delta_1 \otimes \nu^1 ( \xi, {\mu_{u^-}^*} ) - 
\delta_0 \otimes \nu^0 ( \xi, {\mu_{u^-}^*} )) 
\ \mu_{u^-}^*(dv,ds,d\xi) du.
\end{equation}
This convergence does not hold a priori because the functions
\begin{equation}\label{eq:function-convergence-not-proved-2}
( v,s,\xi ) \mapsto \partial_{\xi} \Psi ( v,s,\xi ) \cdot 
[\delta_1 \otimes \nu^{1} (\xi, {\mu_{u^-}^N} ) - 
\delta_0 \otimes \nu^0 ( \xi, {\mu_{u^-}^N} )]
\end{equation}
are not a priori continuous. We expect to show this convergence using a sequence (indexed by
$N$) of continuous functions both getting closer and closer to \eqref{eq:function-convergence-not-proved-2} and converging to
\[
( v,s,\xi ) \mapsto \partial_{\xi} \Psi ( v,s,\xi ) \cdot 
[\delta_1 \otimes \nu^{1} (\xi, {\mu_{u^-}^*} ) - 
\delta_0 \otimes \nu^0 ( \xi, {\mu_{u^-}^*} )].
\]

\subsubsection*{Equation on \texorpdfstring{$\mu_t^*$}{TEXT}}
We denote by $\mu_t^* = \mathcal{L}(X_t^*) = \mathcal{L}(V_t^*,S_t^*,\xi_t^*)$.
Thus, under Assumption~\ref{ass:lim-mu-star-1} and from Propositions~\ref{prop:limite-terme-1}-
\ref{prop:limite-terme-2}, taking the limit as $N$ tends to infinity in equation~\eqref{eq:mu-t-N} we can formulate the following conjecture.
\begin{conjecture}\label{th:equation limite mu sans plasticite}
	Assume that Assumptions~\ref{hyp:cdt-init},~\ref{ass:lim-mu-star-1} and
	\ref{ass:lim-mu-star-2} hold.
	Then, the dynamics of $\mu_t^*$ is given by:
	\begin{itemize}
		\item \( \mathcal L(X_0^*) = \lim_{N \rightarrow \infty} \mathcal L(X_0^{1,N}) = \mu_0^* \)
		and in particular $S_0^*$ is distributed by $\rho_0$
		which is absolutely continuous with respect to the Lebesgue measure,
		\item for all $t \geq 0$, for all $\Psi \in C_b^{1,1}(E)$
		(continuously differentiable with respect to its second variable and
		Fréchet differentiable with respect to its third variable), and
		using Notation~\ref{nota:produitmesure},
		\begin{align}\label{eq:mu star sans plasticite}
		\begin{split}
		\langle \mu_t^*, \Psi\rangle
		& =
		\langle \mu_0^*, \Psi\rangle 
		+ \int_0^t \int_E 
		\Big(
		\partial_s \Psi(v,s, \xi) + \partial_{\xi} \Psi(v,s, \xi) \cdot (-\partial_s \xi) 
		\Big)
		\mu_u^*(dv,ds,d\xi) du
		\\
		& \quad
		+ \int_0^t \int_{\reels^+ \times \mathcal P(E_m)} \beta 
		\Big[ \Psi \Big( 0, s, \xi \Big) 
		- \Psi \Big( 1, s, \xi \Big) \Big]
		{\mu_{u^-}^*}(1,ds,d\xi) du
		\\
		& \quad
		+ \int_0^t \int_E 
		\partial_{\xi} \Psi ( v,s,\xi ) \cdot 
		( \beta \delta_0 \otimes \xi^1 - \beta \delta_1 \otimes \xi^1 ) 
		\mu_{u^-}^*(dv,ds,d\xi) du
		\\
		& \quad
		+ \int_0^t \int_{\reels^+ \times \mathcal P(E_m)} \alpha\big(I(\xi)\big)
		\Big[ \Psi \Big( 1, 0,\xi\Big) 
		- \Psi \Big( 0,s,\xi \Big) \Big]
		{\mu_{u^-}^*}(0,ds,d\xi) du
		\\
		& \quad
		+ \int_0^t \int_E 
		\partial_{\xi} \Psi ( v,s,\xi ) \cdot 
		(\delta_1 \otimes \nu^1 ( \xi, {\mu_{u^-}^*} ) - \delta_0 \otimes \nu^0 ( \xi, {\mu_{u^-}^*} ))
		\ \mu_{u^-}^*(dv,ds,d\xi) du
		.
		\end{split}
		\end{align}
	\end{itemize}
\end{conjecture}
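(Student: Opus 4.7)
The plan is to derive Conjecture \ref{th:equation limite mu sans plasticite} by starting from the finite-$N$ identity \eqref{eq:mu-t-N}, which was obtained by computing the infinitesimal generator of the Markov process $(\mu_t^N)_{t\geq 0}$ applied to cylindrical functionals $\mu \mapsto \langle \mu, \Psi\rangle$, and then passing to the $N \to \infty$ limit term by term. The key input is Assumption \ref{ass:lim-mu-star-1}, which provides the deterministic weak convergence $\mu^N \to \mu^*$ on $D_E[0,T]$ together with the convergence \eqref{ass:hypothese-integrale} of time-integrated bounded continuous functionals.

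I would first dispatch the easy terms. The initial condition reduces, via Assumption \ref{hyp:cdt-init}, to the convergence of $\mathbb{E}\Psi(X_0^{1,N})$. The drift term featuring $\partial_s\Psi + \partial_\xi \Psi \cdot (-\partial_s\xi)$ is the time integral of a bounded continuous functional on $E$: boundedness comes from $\Psi \in C_b^{1,1}(E)$, while continuity in $\xi$ for the weak topology uses Fréchet differentiability together with the duality definition \eqref{eq:def-derivee-proba} of $-\partial_s\xi$. For the explicit $\beta$ and $\alpha$ jump contributions in \eqref{eq:mu-t-N}, the bounds $\|\eta^N(s,\xi)\|_{TV} \leq 2/N$ and $\|\zeta^N(s,\xi)\|_{TV} \leq 2/N$, combined with a first-order Fréchet expansion of $\Psi$, let me replace $\xi+\eta^N$ or $\xi+\zeta^N$ by $\xi$ up to a $\smallO{1}$ error. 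Applying \eqref{ass:hypothese-integrale} to the bounded continuous integrands $(s,\xi)\mapsto \Psi(0,s,\xi) - \Psi(1,s,\xi)$ and $(s,\xi)\mapsto \alpha(I(\xi))[\Psi(1,0,\xi) - \Psi(0,s,\xi)]$ --- where continuity of $\xi \mapsto \alpha(I(\xi))$ is ensured by Assumption \ref{ass:lim-mu-star-2} --- closes the argument for these four lines of \eqref{eq:mu star sans plasticite}.

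For the two complex terms, I would invoke the two propositions already established. Proposition \ref{prop:limite-terme-1} directly yields the limit \eqref{eq:limite terme 1} for \circled{1}, since the direction $\beta\delta_0 \otimes \xi^1 - \beta \delta_1 \otimes \xi^1$ depends continuously on $\xi$ and the resulting integrand is a bounded continuous functional of $(v,s,\xi)$ to which \eqref{ass:hypothese-integrale} applies. Proposition \ref{prop:limite-terme-2} then reduces term \circled{2} to an integral of $\partial_\xi \Psi(v,s,\xi) \cdot [\delta_1 \otimes \nu^1(\xi,\mu_{u^-}^N) - \delta_0 \otimes \nu^0(\xi,\mu_{u^-}^N)]$ against $\mu_{u^-}^N$, up to a $\smallO{1}$ error.

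The hard part will be passing to the limit in this last expression: the candidate limit integrand involves the Radon-Nikodym derivative $\gamma_\xi$ of \eqref{eq:gamma-xi}, which is not a priori a continuous function of $\xi$ for the weak topology, so \eqref{ass:hypothese-integrale} cannot be applied directly. The plan to circumvent this obstruction is to construct a sequence of bounded continuous approximants --- obtained, for example, by mollifying $\gamma_\xi$ in $s$ and truncating in $w$ --- that stays uniformly close in $L^1(\mu_{u^-}^N)$ to the exact finite-$N$ integrand appearing in Proposition \ref{prop:limite-terme-2} while converging pointwise to the target limit. Combining this with Lemma \ref{lem:densite-S-t}, which gives a uniform bound on the density of $S_t^{i,N}$ and controls the mollification error, together with a tightness argument in the integer weight variable, should allow the two limits to be exchanged, yielding \eqref{eq:limite terme 2} and completing the identification of \eqref{eq:mu star sans plasticite}. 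This last step is precisely where the argument stops short of a full theorem, which is why the result is stated as a conjecture.
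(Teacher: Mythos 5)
Your proposal follows essentially the same route as the paper: starting from the finite-$N$ identity \eqref{eq:mu-t-N}, using the total-variation bounds on $\eta^N$ and $\zeta^N$ with a first-order Fréchet expansion to handle the explicit jump terms, invoking Propositions~\ref{prop:limite-terme-1} and~\ref{prop:limite-terme-2} for the terms \circled{1} and \circled{2}, and correctly locating the unresolved step in the lack of weak continuity of the $\gamma_\xi$-dependent integrand \eqref{eq:function-convergence-not-proved-2}, which is exactly why the paper states the result as a conjecture rather than a theorem. Your sketch of a mollification-plus-truncation scheme to close that gap is a reasonable elaboration of the approximation-by-continuous-functions strategy the paper itself outlines after Proposition~\ref{prop:limite-terme-2}.
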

We deduce from this conjecture that the joint distribution of the two first components of
$\xi_t^*$ and $\mu_t^*$ would then be equal for all $t$.
\begin{consequence}\label{cor:egalite-s-mu-xi-star}
	Grant Conjecture~\ref{th:equation limite mu sans plasticite}.
	Then, for all $t \geq 0$,
	\[
	\xi_t^*(\cdot,\cdot,\relatifs) = \mu_t^*(\cdot,\cdot,\mathcal P(E_m)).
	\]
\end{consequence}
\begin{proof}
	For all $\xi_0^* \in \supp(\mu_0^*)$,
	we have $\mu_0^*(\cdot,\cdot,\mathcal P(E_m)) = \xi_0^*(\cdot,\cdot,\relatifs)$.
	\textcolor{black}{Then, we show that $(\mu_t^*(\cdot,\cdot,\mathcal
	P(E_m)))_{t \geq 0}$ and $(\xi_t^*(\cdot,\cdot,\relatifs))_{t \geq 0}$ have the same dynamics, which enables us to conclude the proof.
	First, in equation \eqref{eq:mu star sans plasticite}, taking a function $\Psi$ which depends only on $v$ and $s$, we obtain the dynamics of
	$(\mu_t^*(\cdot,\cdot,\mathcal
	P(E_m)))_{t \geq 0}$:}
	\begin{align}\label{eq:dynamique de mu P E_m}
	\begin{split}
		\langle \mu_t^*, \Psi\rangle
		= &
		\langle \mu_0^*, \Psi\rangle 
		+ \int_0^t \int_E 
		\partial_s \Psi(v,s)
		\mu_u^*(dv,ds,d\xi) du
		\\
		&
		+ \int_0^t \int \beta 
		\left[ \Psi ( 0, s) 
		- \Psi ( 1, s ) \right]
		{\mu_{u^-}^*}(1,ds,d\xi) du
		\\
		&
		+ \int_0^t \int \alpha\big(I(\xi)\big)
		\left[ \Psi ( 1, 0) 
		- \Psi ( 0,s ) \right]
		{\mu_{u^-}^*}(0,ds,d\xi) du
		.
	\end{split}
	\end{align}
	\textcolor{black}{
	Second, taking $\Psi$ depending only on $\xi$ in equation~\eqref{eq:mu star sans plasticite}, we obtain that
	\begin{align*}
	\begin{split}
	\langle \mu_t^*, \Psi\rangle -
	\langle \mu_0^*, \Psi\rangle
    =
	\int_0^t \int_E 
	\partial_{\xi} \Psi ( \xi ) \cdot 
	\Big[&\beta \delta_0 \otimes \xi^1 - \beta \delta_1 \otimes \xi^1
	\\
	& 
	+\delta_1 \otimes \nu^1 ( \xi, {\mu_{u^-}^*} ) - \delta_0 \otimes \nu^0 ( \xi, {\mu_{u^-}^*} )
	-\partial_s \xi\Big]
	\ \mu_{u^-}^*(dv,ds,d\xi) du
	.
	\end{split}
	\end{align*}
	Working on the right hand side term, we obtain
	\[
		\langle \mu_t^*, \Psi\rangle -
		\langle \mu_0^*, \Psi\rangle 
		=
		\esp\left[\Psi(\xi_t^*)\right] - \esp\left[\Psi(\xi_0^*)\right]
		=
		\int_0^t \esp\left[\partial_{\xi} \Psi (\xi_u^*) \cdot \partial\xi_u^*\right] du
	\]
	and then deduce that \(\phi \in C_b^1(E_m)\) with bounded derivative,
	\begin{align}\label{eq:xi-avec-nu-tilde}
	\begin{split}
	\langle \xi_{t}^*, \phi \rangle &= \langle \xi_{0}^*, \phi \rangle +
	\int_0^t \langle -\partial_s \xi_{u}^*, \phi \rangle du
	+ \int_0^t \beta \langle \delta_0 \otimes {\xi_{u^-}^*}^1 - \delta_1 \otimes {\xi_{u^-}^*}^1, \phi \rangle du
	\\
	& \quad + \int_0^t \langle \delta_1 \otimes \nu^1 ({\xi_{u^-}^*},{\mu_{u^-}^*}) 
	- \delta_0 \otimes \nu^0 ({\xi_{u^-}^*},{\mu_{u^-}^*}), \phi \rangle du
	.
	\end{split}
	\end{align}
	Evaluating this last equation for a function $\phi$ depending only on $v$ and $s$ gives us the dynamics 
	of $\xi_{t}^*(\cdot,\cdot,\relatifs)$.
	We obtain exactly the same equation as \eqref{eq:dynamique de mu P E_m}:
	\begin{itemize}
		\item using \eqref{eq:def-derivee-proba}, we have for all $u \in [0,t]$,
		$
			\langle -\partial_s \xi_{u}^*, \phi \rangle
			= \langle \xi_{u}^*, \partial_s \phi \rangle,
		$
		\item as for all $s$ in the support of $\xi \in \mathcal P(E_m)$, $\sum_{w\in\relatifs}\gamma_\xi(s,w) = 1$, we obtain that (see \eqref{eq:defnutilde0} and \eqref{eq:defnutilde1} for the definitions of $\nu^0$ and $\nu^1$) for all $A \in \mathcal B(\reels^+)$:
		\begin{align*}
		\nu^0 (\xi,\mu) ( A,\relatifs) & =
		\int_{ \mathcal{P}(E_m) } \int_A \alpha\big(I(\tilde \xi)\big)
		\mu(0,ds,d \tilde \xi)
		,
		\\
		\nu^1 (\xi,\mu) ( A,\relatifs)
		& =
		\mathbbm 1_{ \{0 \in A\} } 
		\int_{ \mathcal{P}(E_m) } \int_{\reels^+} \alpha\big(I(\tilde \xi)\big)
		\mu(0,ds,d \tilde \xi)
		.
		\end{align*}
	\end{itemize}
}
\end{proof}

\subsubsection*{The typical neuron dynamics}\label{sec:dyn-particule-sans-plast} 
\begin{consequence}\label{cor:equation limite systeme de particules sans plasticite}
\textcolor{black}{Grant Conjecture~\ref{th:equation limite mu sans plasticite}
		and Assumption~\ref{ass:compabilite EDP} with \(p^\pm \equiv 0\).}
	Then, we have
	\begin{itemize}
	    \item $d S_t^* = dt$.
		\item $\xi_t^*$ admits a density in $s$ that
		satisfies the following equations,
		\begin{align*}
		\hspace{-3em}
		&\left\{
		\begin{array}{ll}
		\partial_t{\xi_t^*}(0,s,w) &= -\partial_s {\xi_t^*}(0,s,w) + \beta{\xi_t^*}(1,s,w)
		- \int_{ \mathcal{P}(E_m)} \alpha\big(I(\xi')\big)
		\frac{{\xi_t^*}(0,s,w)}{{\xi_t^*}(0,s,\relatifs)} {\mu_t^*}(0,s,d\xi')
		\\
		\\
		{\xi_t^*}(0,0,w) &= 0,
		\end{array}
		\right.
		\\
		\hspace{-3em}
		\\
		\hspace{-3em}
		&\left\{
		\begin{array}{ll}
		\partial_t{\xi_t^*}(1,s,w) &= -\partial_s {\xi_t^*}(1,s,w)
		- \beta {\xi_t^*}(1,s,w)
		\\
		\\
		{\xi_t^*}(1,0,w) &= \int_{\reels^+ \times \mathcal{P}(E_m)}
		\alpha\big(I(\xi')\big) 
		\frac{{\xi_t^*}(0,s',w)}{{\xi_t^*}(0,s',\relatifs)}
		{\mu_t^*}(0,s',d\xi')ds'.
		\end{array}
		\right.
		\end{align*}
		\item At rate $\beta \mathbbm 1_{ \{V_{t^-}^* = 1\} }$, 
		$(V_{t^-}^*,S_{t^-}^*,\xi_{t^-}^*)$ jumps to $(0,S_{t^-}^*,\xi_{t^-}^*)$.
		\item At rate $\alpha\big(I(\xi_{t^-}^*)\big) \mathbbm 1_{ \{V_{t^-}^* = 0\} }$, 
		$(V_{t^-}^*,S_{t^-}^*,\xi_{t^-}^*)$ jumps to $(1,0,\xi_{t^-}^*)$.
	\end{itemize}
\end{consequence}
\begin{proof}
	Both the first two and last two points are clear with Conjecture~\ref{th:equation limite mu 
	sans plasticite}. 
	We detail the third point. 
    \textcolor{black}{First, from the regularity assumption on the
	density in $s$ of $\mu_t^*$, Consequence~\ref{cor:egalite-s-mu-xi-star} tells us that $\xi_t^*$ admits a density $C^1$ in $s$
	and satisfies \eqref{eq:xi-avec-nu-tilde}.
	Using \eqref{eq:gamma-xi}, \eqref{eq:defnutilde0}, \eqref{eq:defnutilde1},
	we thus obtain by integration by parts the following equations on the density functions
	$s \mapsto {\xi_t^*}(v,s, w)$:}
	\begin{align*}
	\partial_t{\xi_t^*}(0,s,w) =& -\partial_s {\xi_t^*}(0,s,w) - \delta_0(s) {\xi_t^*}(0,0,w)
	+ \beta{\xi_t^*}(1,s,w)
	\\
	&
	- \int_{ \mathcal{P}(E_m)}
	\alpha\big(I(\xi')\big)
	\frac{{\xi_t^*}(0,s,w)}{{\xi_t^*}(0,s,\relatifs)}
	{\mu_t^*}(0,s,d\xi') ,
	\\
	\\
	\partial_t{\xi_t^*}(1,s,w) =& -\partial_s {\xi_t^*}(1,s,w) - \delta_0(s) {\xi_t^*}(1,0,w)
	- \beta {\xi_t^*}(1,s,w)
	\\
	&
	+ \delta_0(s) \int_{\reels^+ \times \mathcal{P}(E_m)}
	\alpha\big(I(\xi')\big) 
	\frac{{\xi_t^*}(0,s',w)}{{\xi_t^*}(0,s',\relatifs)}
	{\mu_t^*}(0,ds',d\xi')
	.
	\end{align*}
By integrating the previous equations on $s\in[0,\epsilon]$
and making $\epsilon$ tends to $0$ we obtain the equations
with conditions at boundaries $s=0$ of the corollary.
\end{proof}

\subsection{The case with plasticity}\label{supp:The case with plasticity}
	We now deal with the synaptic weight jumps.
	As soon as a neuron spikes, several different synaptic weight jumps are
	possible. We denote by $J_i^N$ the possible increments of the weights when
	the neuron $i$ spikes
	\[	
	J_i^N = \{A=(a^{kl})_{1 \leq k,l \leq N}: a^{ii} \in \{-1,0,1\}
	\text{ and }\forall k,l \neq i, a^{il} \in \{0,1\}, a^{ki} \in \{0,-1\},
	a^{kl}=0 \}.
	\]
	For all $\Delta \in J_i^N$, we denote by $P_{t,i}^{\Delta}$ the
	probability that the weight matrix be incremented by $\Delta$ conditionally
	to $X_t^N$. In order to ease its understanding,
	we express it with the weight matrix $W_t^N$ rather than 
	using the empirical distributions $(\xi_{t}^{i,N})_{1 \leq i \leq N}$:
	\begin{align*}
	P_{t,i}^{\Delta} &= 
	\Bigg[
	\frac{\Delta^{ii}(1+\Delta^{ii})}{2}p^+(S_{t^-}^{i,N},W_{t^-}^{ii,N})\big(1 - p^-(S_{t^-}^{i,N},W_{t^-}^{ii,N})\big)
	\\
	&
	\qquad+\big(1-\lvert \Delta^{ii}\rvert\big)
	\Big(p^+(S_{t^-}^{i,N},W_{t^-}^{ii,N})p^-(S_{t^-}^{i,N},W_{t^-}^{ii,N})
	\\
	& \qquad\qquad\qquad\qquad\qquad
	+ \big(1 - p^+(S_{t^-}^{i,N},W_{t^-}^{ii,N})\big)
	\big(1 - p^-(S_{t^-}^{i,N},W_{t^-}^{ii,N})\big)\Big)
	\\
	& 
	\qquad+\frac{\Delta^{ii}(\Delta^{ii}-1)}{2}p^-(S_{t^-}^{i,N},W_{t^-}^{ii,N})\big(1 - p^+(S_{t^-}^{i,N},W_{t^-}^{ii,N})\big)
	\Bigg]
	\\
	&\quad \prod_{k,l \neq i} \Big(
	\Delta^{il} p^+(S_{t^-}^{l,N},W_{t^-}^{il,N}) + 
	(1-\Delta^{il})\big(1 - p^+(S_{t^-}^{l,N},W_{t^-}^{il,N})\big)
	\Big)
	\\
	& \qquad\quad
	\Big(
	-\Delta^{ki} p^-(S_{t^-}^{k,N},W_{t^-}^{ki,N}) + 
	(1+\Delta^{ki})\big(1 - p^-(S_{t^-}^{k,N},W_{t^-}^{ki,N})\big)
	\Big).
	\end{align*}
    
	After the spike of the neuron $i$ at time $t$ and assuming the weights are
	incremented of $\Delta \in J_i^N$ at this time, $\xi_{t^-}^{i,N}$ jumps to
	$\xi_{t,\Delta}^{i,N}$ and for all $j \neq i$, $\xi_{t^-}^{j,N}$ jumps to
	$\xi_{t,\Delta,i}^{j,N}$ such that
	\begin{align}\label{eq:def-xi-delta-i}
	\begin{split}
	\xi_{t,\Delta}^{i,N} & = \xi_{t^-}^{i,N} + 
	\frac{1}{N} \sum_{(0, S_{t^-}^{i,N}, \tilde{W}) \in
		\supp(\xi_{t^-}^{i,N})} 
	\big( \delta_{(1,0,\tilde{W} + \Delta^{ii})} 
	- \delta_{(0,S_{t^-}^{i,N},\tilde{W})}\big)
	\\
	& \quad
	+ \sum_{l \neq i}\frac{1}{N}
	\sum_{(\tilde{V}, S_{t^-}^{l,N}, \tilde{W}) \in
		\supp(\xi_{t^-}^{i,N})} 
	\big( \delta_{(\tilde{V},S_{t^-}^{l,N},\tilde{W} + \Delta^{il})}
	- \delta_{(\tilde{V},S_{t^-}^{l,N},\tilde{W})}\big)
	\end{split}
	\end{align}
	\begin{align}\label{eq:def-xi-delta-i-j}
	\xi_{t,\Delta,i}^{j,N} & = \xi_{t^-}^{j,N}
	+\frac{1}{N}
	\sum_{(0, S_{t^-}^{i,N}, \tilde{W}) \in
		\supp(\xi_{t^-}^{j,N})} 
	\big( \delta_{(1,0,\tilde{W} + \Delta^{ji})}
	- \delta_{(0,S_{t^-}^{i,N},\tilde{W})}\big).
	\end{align}
	For all $\Psi \in C_b^{1,1}(E)$, we now have
	\begin{align*}
	\begin{split}
	\langle \mu_t^N, \Psi\rangle
	&\overset{\esp}{=}
	\langle \mu_0^N, \Psi\rangle 
	+ \int_0^t \int_E 
	\Big(
	\partial_s \Psi(v,s, \xi) + \partial_{\xi} \Psi(v,s, \xi) \cdot (-\partial_s \xi)  
	\Big)
	\mu_u^N(dv,ds,d\xi) du
	\\
	&
	+ \int_0^t \sum_i \mathbbm 1_{\{V_{u^-}^{i,N} = 1\}}  \frac{\beta}{N} \Biggl\{
	\Bigl[ \Psi \Big( 0,S_{u^-}^{i,N},\xi_{u^-}^{i,N} 
	+ \eta^N ( S_{u^-}^{i,N}, \xi_{u^-}^{i,N} ) \Big) 
	- \Psi \Big( 1,S_{u^-}^{i,N},\xi_{u^-}^{i,N} \Big) \Big]
	\\
	& \qquad
	+ \sum_{j \neq i} 
	\Big[ \Psi \Big( V_{u^-}^{j,N}, S_{u^-}^{j,N}, \xi_{u^-}^{j,N} 
	+ \eta^N ( S_{u^-}^{i,N}, \xi_{u^-}^{j,N} ) \Big) 
	- \Psi \Big( V_{u^-}^{j,N}, S_{u^-}^{j,N}, \xi_{u^-}^{j,N} \Big) \Big]
	\Biggr\} du
	\\
	&
	+ \int_0^t \sum_i \mathbbm 1_{\{V_{u^-}^{i,N} = 0\}} 
	\frac{\alpha(I_{u^-}^{i,N})}{N} \sum_{\Delta \in J_i^N} P_{u,i}^{\Delta}
	\Biggl\{
	\Big[ 
	\Psi \Big( 1, 0, \xi_{u,\Delta}^{i,N} \Big) 
	- \Psi \Big( 0,S_{u^-}^{i,N},\xi_{u^-}^{i,N} \Big) \Big]
	\\
	& \qquad \qquad \qquad \qquad 
	+ \sum_{j \neq i} \Big[ 
	\Psi \Big( V_{u^-}^{j,N},S_{u^-}^{j,N}, \xi_{u,\Delta,i}^{j,N} \Big) 
	- \Psi \Big( V_{u^-}^{j,N},S_{u^-}^{j,N},\xi_{u^-}^{j,N} \Big) \Big]
	\Biggr\} du.
	\end{split}
	\end{align*}
	By adding and removing
	\[
	\esp\Bigg[\int_0^t \sum_i \mathbbm 1_{\{V_{u^-}^{i,N} = 0\}} 
	\frac{\alpha(I_{u^-}^{i,N})}{N} \sum_{\Delta \in J_i^N} P_{u,i}^{\Delta}
	\Big[\Psi \Big( 0,S_{u^-}^{i,N}, \xi_{u,\Delta,i}^{i,N} \Big) 
	- \Psi \Big( 0,S_{u^-}^{i,N},\xi_{u^-}^{i,N} \Big)\Big]du\Bigg],
	\]
	we can rewrite the last term to obtain a complete sum over $j$:
	\begin{align}\label{eq:mu-avec-plasticite}
	\begin{split}
	\langle \mu_t^N, \Psi\rangle
	&\overset{\esp}{=}
	\langle \mu_0^N, \Psi\rangle 
	+ \int_0^t \int_E 
	\Big(
	\partial_s \Psi(v,s, \xi) + \partial_{\xi} \Psi(v,s, \xi) \cdot (-\partial_s \xi)  
	\Big)
	\mu_u^N(dv,ds,d\xi) du
	\\
	&
	+ \int_0^t \sum_i \mathbbm 1_{\{V_{u^-}^{i,N} = 1\}}  \frac{\beta}{N} \Biggl\{
	\Bigl[ \Psi \Big( 0,S_{u^-}^{i,N},\xi_{u^-}^{i,N} 
	+ \eta^N ( S_{u^-}^{i,N}, \xi_{u^-}^{i,N} ) \Big) 
	- \Psi \Big( 1,S_{u^-}^{i,N},\xi_{u^-}^{i,N} \Big) \Big]
	\\
	& \qquad
	+ \sum_{j \neq i} 
	\Big[ \Psi \Big( V_{u^-}^{j,N}, S_{u^-}^{j,N}, \xi_{u^-}^{j,N} 
	+ \eta^N ( S_{u^-}^{i,N}, \xi_{u^-}^{j,N} ) \Big) 
	- \Psi \Big( V_{u^-}^{j,N}, S_{u^-}^{j,N}, \xi_{u^-}^{j,N} \Big) \Big]
	\Biggr\} du
	\\
	&
	+ \int_0^t \sum_i \mathbbm 1_{\{V_{u^-}^{i,N} = 0\}} 
	\frac{\alpha(I_{u^-}^{i,N})}{N} \sum_{\Delta \in J_i^N} P_{u,i}^{\Delta}
	\Biggl\{
	\Big[ 
	\Psi \Big( 1, 0, \xi_{u,\Delta}^{i,N} \Big) 
	- \Psi \Big( 0,S_{u^-}^{i,N},\xi_{u,\Delta,i}^{i,N} \Big) \Big]
	\\
	& \qquad \qquad \qquad \qquad \qquad \qquad
	+ \sum_{j} \Big[ 
	\Psi \Big( V_{u^-}^{j,N},S_{u^-}^{j,N}, \xi_{u,\Delta,i}^{j,N} \Big) 
	- \Psi \Big( V_{u^-}^{j,N},S_{u^-}^{j,N},\xi_{u^-}^{j,N} \Big) \Big]
	\Biggr\} du.
	\end{split}
	\end{align}
	We then denote by
	\begin{equation}\label{eq:def-term-3-potentiation}
		\circled{3} = \sum_i \mathbbm 1_{\{V_{u^-}^{i,N} = 0\}} 
		\frac{\alpha(I_{u^-}^{i,N})}{N} 
		\sum_{\Delta \in J_i^N} P_{u,i}^{\Delta}\Big[ 
		\Psi \Big( 1, 0, \xi_{u,\Delta}^{i,N} \Big) 
		- \Psi \Big(  0, S_{u^-}^{i,N}, \xi_{u,\Delta,i}^{i,N} \Big) \Big]
	\end{equation}
	and
	\[
	\circled{4} = \sum_i \mathbbm 1_{\{V_{u^-}^{i,N} = 0\}} 
	\frac{\alpha(I_{u^-}^{i,N})}{N} \sum_{\Delta \in J_i^N} P_{u,i}^{\Delta}
	\sum_j\Big[ 
	\Psi \Big( V_{u^-}^{j,N},S_{u^-}^{j,N}, \xi_{u,\Delta,i}^{j,N} \Big) 
	- \Psi \Big( V_{u^-}^{j,N},S_{u^-}^{j,N},\xi_{u^-}^{j,N} \Big) \Big].
	\]
	\subsubsection*{The depression term}
	We first deal with the term \circled{4} which describes the depression of
	the weights.
	\begin{proposition}\label{prop:limite-terme-4}
		Then, for all $u \geq 0$,
		\textcolor{black}{\begin{align*}
		\circled{4}
		\overset{\esp}{=} &
		\int  
		\partial_{\xi} \Psi ( v,s,\xi ) \cdot 
		[\delta_1 \otimes \nu^{-,1} (s, \xi, {\mu_{u}^N} ) - 
		\delta_0 \otimes \nu^0 ( \xi, {\mu_{u}^N} )]
		\ \mu_{u}^N(dv,ds,d\xi) + \smallO{1}.
		\end{align*}}
		where $\nu^{-,1}$ is defined in the proof, see \eqref{eq:def nu_moins_1}.
	\end{proposition}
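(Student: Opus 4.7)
The plan is to mirror the proof of Proposition~\ref{prop:limite-terme-2}, incorporating the extra structure coming from the sum over weight increments $\Delta \in J_i^N$. The crucial observation is that the signed measure $\xi_{u,\Delta,i}^{j,N} - \xi_{u^-}^{j,N}$, given explicitly by~\eqref{eq:def-xi-delta-i-j}, depends on $\Delta$ only through the single coordinate $\Delta^{ji} \in \{0,-1\}$; all remaining coordinates of $\Delta$ (the potentiations on the outgoing side and the diagonal term) are irrelevant for this particular jump and are absorbed in $\circled{3}$.

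The first step is the Fréchet expansion. For each triple $(i,j,\Delta)$ I write
\[
\Psi\bigl(V_{u^-}^{j,N}, S_{u^-}^{j,N}, \xi_{u,\Delta,i}^{j,N}\bigr) - \Psi\bigl(V_{u^-}^{j,N}, S_{u^-}^{j,N}, \xi_{u^-}^{j,N}\bigr) = \partial_\xi \Psi\bigl(V_{u^-}^{j,N}, S_{u^-}^{j,N}, \xi_{u^-}^{j,N}\bigr) \cdot \bigl(\xi_{u,\Delta,i}^{j,N} - \xi_{u^-}^{j,N}\bigr) + \smallO{\tfrac{1}{N}}.
\]
By Lemma~\ref{lem:supp-s-distinct}, at most one atom of $\xi_{u^-}^{j,N}$ lies above $(0, S_{u^-}^{i,N})$, so $\lVert \xi_{u,\Delta,i}^{j,N} - \xi_{u^-}^{j,N} \rVert_{TV} \leq 2/N$ almost surely, uniformly in $\Delta$. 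Summed against $\frac{1}{N}\mathbbm{1}_{\{V_{u^-}^{i,N}=0\}}\alpha(I_{u^-}^{i,N})P_{u,i}^{\Delta}$ with $\sum_{\Delta} P_{u,i}^{\Delta} = 1$ and boundedness of $\alpha$, the remainder is of order $\smallO{1}$, exactly as in Proposition~\ref{prop:limite-terme-2}.

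Next I handle the first-order term. Because $\xi_{u,\Delta,i}^{j,N} - \xi_{u^-}^{j,N}$ couples only to $\Delta^{ji}$, the product form of $P_{u,i}^\Delta$ marginalises out every other coordinate (each sum to one) and yields, with $\tilde W = W_{u^-}^{ji,N}$ the unique weight identifying the atom $(0, S_{u^-}^{i,N}, \tilde W) \in \supp(\xi_{u^-}^{j,N})$,
\[
\sum_{\Delta \in J_i^N} P_{u,i}^{\Delta}\, \delta_{(1,0,\tilde W + \Delta^{ji})} = p^-\bigl(S_{u^-}^{j,N}, \tilde W\bigr)\delta_{(1,0,\tilde W - 1)} + \Bigl(1 - p^-\bigl(S_{u^-}^{j,N}, \tilde W\bigr)\Bigr)\delta_{(1,0,\tilde W)}.
\]
Exchanging the sum over $i$ with the sum over atoms of $\xi_{u^-}^{j,N}$ (again via Lemma~\ref{lem:supp-s-distinct}) and introducing the Radon-Nikodym density $\gamma_\xi$ of~\eqref{eq:gamma-xi} recasts the expression as an integral against $\mu_{u^-}^N$. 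After the relabelling $w \leftarrow \tilde W - 1$ in the depression term and $w \leftarrow \tilde W$ in the no-depression term, the atoms at $(1,0,\cdot)$ assemble into $\delta_1 \otimes \nu^{-,1}(s,\xi,\mu_{u}^N)$ with
\begin{align*}
\nu^{-,1}(s,\xi,\mu)(A,w) &\coloneqq \mathbbm{1}_{\{0 \in A\}} \int_{\reels^+ \times \mathcal{P}(E_m)} \alpha\bigl(I(\tilde\xi)\bigr)\\
&\quad \times \Bigl[p^-(s,w+1)\gamma_\xi(s',w+1) + \bigl(1-p^-(s,w)\bigr)\gamma_\xi(s',w)\Bigr]\mu(0,ds',d\tilde\xi),
\end{align*}
while the $-\delta_{(0,S_{u^-}^{i,N},\tilde W)}$ part is independent of $\Delta$ and (since $\sum_\Delta P_{u,i}^\Delta = 1$) contributes exactly the same $-\delta_0 \otimes \nu^0(\xi, \mu_{u}^N)$ as in Proposition~\ref{prop:limite-terme-2}.

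The main obstacle is the bookkeeping in the marginalisation: one must verify that every factor of $P_{u,i}^\Delta$ other than the one depending on $\Delta^{ji}$ genuinely sums to one in its slot, and that the index shift is tracked correctly so that $\nu^{-,1}$ is expressed in the \emph{target} weight variable $w$ (producing the $p^-(s,w+1)$ coefficient that appears in Main Result~\ref{th:equation limite systeme de particules avec plasticite}) rather than in the source $\tilde W$. A secondary care point is that in the case $j = i$ the summand in $\circled{4}$ is only the depression contribution on $\xi^{i,N}$, the remaining effects of the spike having been isolated in $\circled{3}$ via the decomposition~\eqref{eq:def-xi-delta-i} versus~\eqref{eq:def-xi-delta-i-j}; thus $j = i$ is treated on the same footing as $j \neq i$ in the argument above.
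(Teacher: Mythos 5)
Your proposal follows the paper's proof essentially step for step: the Fréchet expansion with the almost-sure total-variation bound $2/N$ from Lemma~\ref{lem:supp-s-distinct}, the marginalisation of $P_{u,i}^{\Delta}$ over the coordinates other than $\Delta^{ji}$, the conversion of the support indicator into the Radon--Nikodym density $\gamma_\xi$ of \eqref{eq:gamma-xi}, and the relabelling in $w$ that produces exactly the paper's $\nu^{-,1}$ of \eqref{eq:def nu_moins_1}. The conclusion and all the identities you state for $i\neq j$ are correct.

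The one place where your justification is wrong is the diagonal term $j=i$. It is not true that this summand of \circled{4} is ``only the depression contribution'' nor that it sits ``on the same footing'' as $j\neq i$: by \eqref{eq:def-xi-delta-i-j} with $j=i$ the displaced atom is $\delta_{(1,0,\tilde W+\Delta^{ii})}$ with $\Delta^{ii}\in\{-1,0,1\}$, and the marginal of $\Delta^{ii}$ under $P_{u,i}^{\Delta}$ involves both $p^+$ and $p^-$ (e.g.\ $\mathbb P(\Delta^{ii}=-1)=p^-(1-p^+)$, and $\Delta^{ii}=+1$ occurs with positive probability), so applying your $i\neq j$ marginalisation formula to this term gives the wrong coefficients. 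The decomposition into \circled{3} and \circled{4} does \emph{not} remove the diagonal potentiation from \circled{4}; it only isolates the change of the first two coordinates $(V,S)$ and the off-diagonal potentiations carried by $\xi^{i,N}_{u,\Delta}$. The correct reason the diagonal case is harmless is quantitative, not structural: it is a single term among the $N$ terms of the inner sum, each of total variation $O(1/N)$, so replacing its true marginal by the generic $\{p^-,1-p^-\}$ one introduces an error of order $1/N$ that is absorbed into the $\smallO{1}$. This is precisely what the paper's correction term $C^{j,N}$ makes explicit. With that one-line fix your argument is complete.
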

	\begin{proof}
		First, by Lemma~\ref{lem:supp-s-distinct}, we have almost surely for all $i,j$,
		\[
			\lVert\frac{1}{N}
			\sum_{(0, S_{u^-}^{i,N}, \tilde{W}) \in
				\supp(\xi_{u^-}^{j,N})} 
			\big( \delta_{(1,0,\tilde{W} + \Delta^{ji})}
			- \delta_{(0,S_{u^-}^{i,N},\tilde{W})}\big)\rVert_{TV}
			\leq \frac{2}{N}.
		\]
		Then, we use Fréchet derivative to obtain that
		\begin{align*}
		\Psi \Big( V_{u^-}^{j,N},S_{u^-}^{j,N},&\ \xi_{u,\Delta,i}^{j,N} \Big) 
        - \Psi \Big( V_{u^-}^{j,N},S_{u^-}^{j,N},\xi_{u^-}^{j,N} \Big) =
		\\&
		\partial_{\xi}
		\Psi \Big( V_{u^-}^{j,N},S_{u^-}^{j,N},\xi_{u^-}^{j,N} \Big)
		\cdot
		\Big(\frac{1}{N}
		\sum_{(0, S_{u^-}^{i,N}, \tilde{W}) \in
			\supp(\xi_{u^-}^{j,N})} 
		\big( \delta_{(1,0,\tilde{W} + \Delta^{ji})}
		- \delta_{(0,S_{u^-}^{i,N},\tilde{W})}\big)\Big)
		+ \smallO{\frac{1}{N}}.
		\end{align*}
		We deduce by linearity that
		\begin{align*}
		\circled{4}= \frac{1}{N}
		\sum_j \partial_{\xi}
		\Psi &\Big( V_{u^-}^{j,N},S_{u^-}^{j,N},\xi_{u^-}^{j,N} \Big)
		\cdot
		\\&
		\Big(\underbrace{\sum_i \mathbbm 1_{\{V_{u^-}^{i,N} = 0\}}
			\frac{\alpha(I_{u^-}^{i,N})}{N}
			\sum_{(0, S_{u^-}^{i,N}, \tilde{W}) \in
				\supp(\xi_{u^-}^{j,N})} 
			\sum_{\Delta \in J_i^N} P_{u,i}^{\Delta}
			\big( \delta_{(1,0,\tilde{W} + \Delta^{ji})}
			- \delta_{(0,S_{u^-}^{i,N},\tilde{W})}\big)}_{\circled{4a}}\Big)
		+ \smallO{1}.
		\end{align*}
		But for any function $f$ on $\{1,0,-1\}$ we have for all $i\neq j$,
		\begin{align*}
		\sum_{\Delta \in J_i^N} P_{u,i}^{\Delta}f(\Delta^{ji})
		&= \sum_{\Delta \in J_i^N,\Delta^{ji}=-1} P_{u,i}^{\Delta} f(-1) 
		+ \sum_{\Delta \in J_i^N,\Delta^{ji}=0} P_{u,i}^{\Delta} f(0)
		\\
		&= p^-(S_{u^-}^{j,N}, W_{u^-}^{ji})f(-1)
		+ (1-p^-(S_{u^-}^{j,N}, W_{u^-}^{ji}))f(0),
		\end{align*}
		and for $i=j$,
		\begin{align*}
		\sum_{\Delta \in J_j^N} P_{u,j}^{\Delta}f(\Delta^{jj}) &=
        p^-(S_{u^-}^{j,N}, W_{u^-}^{jj})
		(1-p^+(S_{u^-}^{j,N}, W_{u^-}^{jj}))f(-1)
		+
		p^+(S_{u^-}^{j,N}, W_{u^-}^{jj})(1-p^-(S_{u^-}^{j,N}, W_{u^-}^{jj}))f(1)
		\\
		& \quad + \Big(
		(1-p^-(S_{u^-}^{j,N}, W_{u^-}^{jj}))
		(1-p^+(S_{u^-}^{j,N}, W_{u^-}^{jj}))
		+
		p^+(S_{u^-}^{j,N}, W_{u^-}^{jj})
		p^-(S_{u^-}^{j,N}, W_{u^-}^{jj})\Big)f(0).
		\end{align*}
		Denoting by 
        \begin{align*}
            C^{j,N} &= \mathbbm 1_{\{V_{u^-}^{j,N} = 0\}}
		\frac{\alpha(I_{u^-}^{j,N})}{N}
		\sum_{(0, S_{u^-}^{j,N}, \tilde{W}) \in
			\supp(\xi_{u^-}^{j,N})}
		\Big[
		-p^-(S_{u^-}^{j,N},\tilde{W})p^+(S_{u^-}^{j,N},\tilde{W})
		\big( \delta_{(1,0,\tilde{W} - 1)}
		- \delta_{(0,S_{u^-}^{j,N},\tilde{W})}\big)
        \\
		& \qquad\qquad\qquad\qquad\qquad\qquad\qquad\quad\qquad\qquad
		+ p^+(S_{u^-}^{j,N},\tilde{W})\big(2p^-(S_{u^-}^{j,N},\tilde{W})-1\big)
		\big( \delta_{(1,0,\tilde{W})}
		- \delta_{(0,S_{u^-}^{j,N},\tilde{W})}\big)
        \\
		& \qquad\qquad\qquad\qquad\qquad\qquad\qquad\quad\qquad\qquad
		+ p^+(S_{u^-}^{j,N},\tilde{W})\big(1 - p^-(S_{u^-}^{j,N},\tilde{W})\big)
		\big( \delta_{(1,0,\tilde{W}+1)}
		- \delta_{(0,S_{u^-}^{j,N},\tilde{W})}\big)\Big]
        ,
        \end{align*}
        a term of order $\frac{1}{N}$, we obtain that
		\begin{align*}
		\circled{4a} - C^{j,N}
		&=
		\sum_i \mathbbm 1_{\{V_{u^-}^{i,N} = 0\}}
		\frac{\alpha(I_{u^-}^{i,N})}{N}
		\sum_{(0, S_{u^-}^{i,N}, \tilde{W}) \in
			\supp(\xi_{u^-}^{j,N})} 
		\Big[
		p^-(S_{u^-}^{j,N},\tilde{W})
		\big( \delta_{(1,0,\tilde{W} - 1)}
		- \delta_{(0,S_{u^-}^{i,N},\tilde{W})}\big)
		\\
		& \qquad\qquad\qquad\qquad\qquad\qquad\qquad\quad\qquad\qquad\quad
		+ \big(1 - p^-(S_{u^-}^{j,N},\tilde{W})\big)
		\big( \delta_{(1,0,\tilde{W})}
		- \delta_{(0,S_{u^-}^{i,N},\tilde{W})}\big)\Big]
		\\
		&= 
		\sum_i \mathbbm 1_{\{V_{u^-}^{i,N} = 0\}}
		\frac{\alpha(I_{u^-}^{i,N})}{N}
		\sum_{\tilde W \in \relatifs}
		\mathbbm 1_{ \{ (0, S_{u^-}^{i,N}, \tilde{W}) 
			\in \supp(\xi_{u^-}^{j,N}) \} }
		\Big[
		p^-(S_{u^-}^{j,N},\tilde{W})
		\big( \delta_{(1,0,\tilde{W} - 1)}
		- \delta_{(0,S_{u^-}^{i,N},\tilde{W})}\big)
		\\
		& \qquad\qquad\qquad\qquad\qquad\qquad\qquad\quad\qquad\qquad\qquad\qquad\quad
		+ \big(1 - p^-(S_{u^-}^{j,N},\tilde{W})\big)
		\big( \delta_{(1,0,\tilde{W})}
		- \delta_{(0,S_{u^-}^{i,N},\tilde{W})}\big)\Big]
		\\
		& =
		\sum_i \mathbbm 1_{\{V_{u^-}^{i,N} = 0\}}
		\frac{\alpha(I_{u^-}^{i,N})}{N}
		\sum_{\tilde W \in \relatifs}
		\gamma_{{\xi_{u^-}^{j,N}}}(S_{u^-}^{i,N},\tilde W)
		\Big[
		p^-(S_{u^-}^{j,N},\tilde{W})
		\big( \delta_{(1,0,\tilde{W} - 1)}
		- \delta_{(0,S_{u^-}^{i,N},\tilde{W})}\big)
		\\
		& \qquad\qquad\qquad\qquad\qquad\qquad \qquad \qquad \qquad \qquad
		+ \big(1 - p^-(S_{u^-}^{j,N},\tilde{W})\big)
		\big( \delta_{(1,0,\tilde{W})}
		- \delta_{(0,S_{u^-}^{i,N},\tilde{W})}\big)\Big],
		\end{align*}
        where we used the definition of $\gamma_{\xi}$ given by \eqref{eq:gamma-xi}.
		Hence, we deduce that 
		\begin{align*}
		\circled{4a}
		&= \sum_{\tilde W \in \relatifs}
		\int_{ \mathcal{P}(E_m) }\int_{\reels^+}
		\alpha(I(\tilde \xi))
		\gamma_{{\xi_{u^-}^{j,N}}}(\tilde s,\tilde W)
		\Big[
		p^-(S_{u^-}^{j,N},\tilde{W})
		\big( \delta_{(1,0,\tilde{W} - 1)}
		- \delta_{(0,\tilde s,\tilde{W})}\big)
		\\
		& \qquad\qquad\qquad\qquad\qquad\qquad\qquad\qquad\quad    
		+ \big(1 - p^-(S_{u^-}^{j,N},\tilde{W})\big)
		\big( \delta_{(1,0,\tilde{W})}
		- \delta_{(0,\tilde s,\tilde{W})}\big)\Big]
		{\mu_{u^-}^N}(0,d \tilde s,d \tilde \xi)
		+\mathcal O \Big(\frac{1}{N}\Big)
		.
		\end{align*}
		This incites us to define the following distribution.
		For all triple $(s,\xi,\mu) \in \reels^+ \times \mathcal P(E_m) \times \mathcal P(E)$,
		we denote by $\nu^{-,1}(s,\xi,\mu)$ the distribution
		on $\reels^+ \times \relatifs$ such that for all 
		$(A,w) \in \mathcal B(\reels^+) \times \relatifs$,
		\begin{align}\label{eq:def nu_moins_1}
		\begin{split}
		\nu^{-,1} (s,\xi,\mu)(A,w)
		&\coloneqq  
		\mathbbm 1_{ \{ 0 \in A \} }
		\int_{ \mathcal{P}(E_m) } \alpha\big(I(\tilde \xi)\big) p^-(s,w+1)
		\ \gamma_\xi(\tilde s,w+1)\mu(0,d\tilde s,d\tilde \xi)
		\\
		& \qquad 
		+ \mathbbm 1_{ \{ 0 \in A \} }
		\int_{ \mathcal{P}(E_m) } \alpha\big(I(\tilde \xi)\big) (1 - p^-(s,w))
		\ \gamma_\xi(\tilde s,w)\mu(0,d \tilde s,d\tilde \xi).
		\end{split}
		\end{align}
        Thereby, using the function $\nu^0$
		defined in \eqref{eq:defnutilde0}, we get
		\begin{align*}
		&\circled{4a}
		= \delta_1 \otimes \nu^{-,1} (S_{u^-}^{j,N},{\xi_{u^-}^{j,N}},{\mu_{u^-}^N})
		-\delta_0 \otimes \nu^{0} ({\xi_{u^-}^{j,N}},{\mu_{u^-}^N})
		+\smallO{1}
		.
		\end{align*}
		We finally obtain that
		\begin{align*}
		\circled{4}
		\overset{\esp}{=} &
		\int  
		\partial_{\xi} \Psi ( v,s,\xi ) \cdot 
		[\delta_1 \otimes \nu^{-,1} (s, \xi, {\mu_{u}^N} ) - 
		\delta_0 \otimes \nu^0 ( \xi, {\mu_{u}^N} )]
		\ \mu_{u}^N(dv,ds,d\xi) + \smallO{1}.
		\end{align*}
		It ends the proof.
	\end{proof}
	\textcolor{black}{Under Assumption~\ref{ass:lim-mu-star-1}, it is reasonable to expect that, as $N$ tends to infinity, $\esp$\circled{4} converges to
	\begin{align*}
	\int  
	\partial_{\xi} \Psi ( v,s,\xi ) \cdot 
	[\delta_1 \otimes \nu^{-,1} (s, \xi, {\mu_{u}^*} ) - 
	\delta_0 \otimes \nu^0 ( {\xi}, {\mu_{u}^*} )]
	\ \mu_{u}^*(dv,ds,d\xi).
	\end{align*}
	As noted in the remark given just after Proposition~\ref{prop:limite-terme-2},
	this convergence does not hold a priori because the functions
	\begin{equation}\label{eq:function-convergence-not-proved-4}
		( v,s,\xi ) \mapsto \partial_{\xi} \Psi ( v,s,\xi ) \cdot 
		[\delta_1 \otimes \nu^{-,1} (s, \xi, {\mu_{u}^N} ) - 
		\delta_0 \otimes \nu^0 ( \xi, {\mu_{u}^N} )]
	\end{equation}
	are not a priori continuous. We expect to show this convergence using a sequence (indexed by
	$N$) of continuous functions both getting closer and closer to \eqref{eq:function-convergence-not-proved-4} and converging to
	\[
		( v,s,\xi ) \mapsto \partial_{\xi} \Psi ( v,s,\xi ) \cdot 
		[\delta_1 \otimes \nu^{-,1} (s, \xi, {\mu_{u}^*} ) - 
		\delta_0 \otimes \nu^0 ( \xi, {\mu_{u}^*} )].
	\]
}
	
	\subsubsection*{The potentiation term}
	Now, for the term \circled{3} describing potentiation, see
	\eqref{eq:def-term-3-potentiation}, we need to assume that $\Psi$ is linear
	in its third variable $\xi$.
	\begin{proposition}\label{prop:limite-terme-3}
		Under Assumptions~\ref{hyp:cdt-init},~\ref{ass:lim-mu-star-1} and
		\ref{ass:lim-mu-star-2}
		we find that for all $u \geq 0$, for all $\Psi \in C_b^{1,1}(E)$ 
		\textbf{linear in its third variable} $\xi$,
		\[
		\lim_{N \rightarrow \infty}	\circled{3} \overset{\esp}{=}
		\int_E \alpha\big(I(\xi)\big)
		\Big[ \Psi \Big( 1, 0, \nu^+ ( \xi ) \Big) 
		- \Psi \Big( 0,s, \xi \Big) \Big]
		{\mu_u^*}(0,ds,d\xi),
		\]
		where $\nu^+$ is defined in the proof, see \eqref{eq:def-nuplus}.
	\end{proposition}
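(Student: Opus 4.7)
The plan is to exploit the hypothesis that $\Psi$ is linear in its third variable to interchange the expectation over the plasticity jumps with $\Psi$, identify the resulting averaged measure with $\nu^+(\xi_{u^-}^{i,N})$ up to an $O(1/N)$ defect located at a single self-atom, and then pass to the weak limit via Assumption~\ref{ass:lim-mu-star-1}.

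First I would define $\nu^+(\xi)$ as the push-forward of $\xi$ through the random potentiation:
\begin{equation}\label{eq:def-nuplus}
\nu^+(\xi)\coloneqq \int_{E_m}\Big[p^+(s,w)\delta_{(v,s,w+1)}+(1-p^+(s,w))\delta_{(v,s,w)}\Big]\xi(dv,ds,dw).
\end{equation}
Since $P_{u,i}^{\Delta}$ factorises as a product over synapses, the increments $(\Delta^{il})_{l\neq i}$ are independent Bernoullis of parameter $p^+(S_{u^-}^{l,N},W_{u^-}^{il,N})$ and $\Delta^{ii}\in\{-1,0,1\}$ is independent of them. Averaging \eqref{eq:def-xi-delta-i} against $P_{u,i}^{\Delta}$ atom by atom then gives
\[
\sum_{\Delta\in J_i^N} P_{u,i}^{\Delta}\,\xi_{u,\Delta}^{i,N}=\nu^+(\xi_{u^-}^{i,N})+\frac{R^{i,N}}{N},\qquad \lVert R^{i,N}\rVert_{TV}\leq C,
\]
the correction $R^{i,N}/N$ accounting for the unique self-atom $(0,S_{u^-}^{i,N},W_{u^-}^{ii,N})$, which jumps to $(1,0,W_{u^-}^{ii,N}+\Delta^{ii})$ rather than undergoing plain potentiation. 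An identical computation applied to \eqref{eq:def-xi-delta-i-j} read with $j=i$ yields $\sum_\Delta P_{u,i}^{\Delta}\,\xi_{u,\Delta,i}^{i,N}=\xi_{u^-}^{i,N}+R'^{i,N}/N$ with $\lVert R'^{i,N}\rVert_{TV}\leq C$.

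Next, because $\Psi(v,s,\cdot)$ is linear in $\xi$ with uniformly bounded derivative (it belongs to $C_b^{1,1}(E)$), the two TV identities above translate into scalar equalities with $O(1/N)$ remainders, and one obtains
\[
\sum_{\Delta} P_{u,i}^{\Delta}\Big[\Psi(1,0,\xi_{u,\Delta}^{i,N})-\Psi(0,S_{u^-}^{i,N},\xi_{u,\Delta,i}^{i,N})\Big]
=\Psi\big(1,0,\nu^+(\xi_{u^-}^{i,N})\big)-\Psi\big(0,S_{u^-}^{i,N},\xi_{u^-}^{i,N}\big)+O(1/N).
\]
Multiplying by $\frac{\alpha(I_{u^-}^{i,N})}{N}\mathbbm 1_{\{V_{u^-}^{i,N}=0\}}$, summing over $i$ and using $\lVert\alpha\rVert_\infty<\infty$ keeps the cumulative remainder at $O(1/N)$; what survives is an integral against $\mu_{u^-}^N$:
\[
\circled{3}\overset{\esp}{=}\int_{\reels^+\times\mathcal{P}(E_m)}\alpha(I(\xi))\big[\Psi(1,0,\nu^+(\xi))-\Psi(0,s,\xi)\big]\mu_{u^-}^N(0,ds,d\xi)+o(1).
\]
Passing to the limit then reduces to verifying that the integrand is bounded continuous on $\reels^+\times\mathcal{P}(E_m)$: boundedness is immediate from Assumption~\ref{ass:lim-mu-star-2} and $\Psi\in C_b^{1,1}(E)$, while continuity in $\xi$ reduces to continuity of $\xi\mapsto\nu^+(\xi)$ for the weak topology, itself an immediate consequence of continuity of $p^+$ in Assumption~\ref{ass:lim-mu-star-2}.

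The hard point here, relative to Propositions~\ref{prop:limite-terme-1}, \ref{prop:limite-terme-2} and~\ref{prop:limite-terme-4}, is that $\lVert\xi_{u,\Delta}^{i,N}-\xi_{u^-}^{i,N}\rVert_{TV}$ is typically $O(1)$ rather than $O(1/N)$, because a spike of neuron $i$ simultaneously perturbs all $N$ pre-synaptic entries of $\xi^{i,N}$. The Fréchet-remainder argument that worked there therefore breaks down, and this is exactly why linearity of $\Psi$ must be assumed: it replaces the inexact Fréchet expansion by an exact linear identity, so that the combinatorial sum over the exponentially large set $J_i^N$ collapses — via the independence structure of $P_{u,i}^{\Delta}$ — to the single evaluation at $\nu^+(\xi_{u^-}^{i,N})$.
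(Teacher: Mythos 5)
Your proposal is correct and follows essentially the same route as the paper's proof: linearity of $\Psi$ in $\xi$ is used to commute the average over $\Delta\in J_i^N$ with $\Psi$, the factorised form of $P_{u,i}^{\Delta}$ collapses $\sum_{\Delta}P_{u,i}^{\Delta}\,\xi_{u,\Delta}^{i,N}$ to $\nu^+(\xi_{u^-}^{i,N})$ up to an $O(1/N)$ total-variation correction carried by the single self-atom, and the limit is taken via continuity of $\xi\mapsto\nu^+(\xi)$ together with Assumption~\ref{ass:lim-mu-star-1}. Your treatment is if anything slightly more explicit than the paper's (you spell out the analogous $O(1/N)$ identity for $\sum_{\Delta}P_{u,i}^{\Delta}\,\xi_{u,\Delta,i}^{i,N}$, which the paper leaves implicit), and your closing remark correctly identifies why the Fréchet-expansion argument of the other propositions fails here and linearity must be assumed.
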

	\begin{proof}
	 By linearity of $\Psi$ in its third variable, we obtain that
	\[
		\circled{3} = \sum_i \mathbbm 1_{\{V_{u^-}^{i,N} = 0\}} 
		\frac{\alpha(I_{u^-}^{i,N})}{N} 
		\sum_{\Delta \in J_i^N} P_{u,i}^{\Delta}\Big[ 
		\Psi \Big( 1, 0, \sum_{\Delta \in J_i^N} P_{u,i}^{\Delta}
		\xi_{u,\Delta}^{i,N} \Big) 
		- \Psi \Big(  0, S_{u^-}^{i,N}, \sum_{\Delta \in J_i^N} P_{u,i}^{\Delta}\xi_{u,\Delta,i}^{i,N} \Big) \Big].
	\]
	and then by definition of $\xi_{u,\Delta}^{i,N}$, 
	see \eqref{eq:def-xi-delta-i}, we have
	\begin{align*}
		&\sum_{\Delta \in J_i^N} P_{u,i}^{\Delta} \xi_{u,\Delta}^{i,N}
		= \xi_{u^-}^{i,N} + 
		\frac{1}{N} \sum_{(0, S_{u^-}^{i,N}, \tilde{W}) \in
			\supp(\xi_{u^-}^{i,N})} 
		\sum_{\Delta \in J_i^N} P_{u,i}^{\Delta}
		\big( \delta_{(1,0,\tilde{W} + \Delta^{ii})} 
		- \delta_{(0,S_{u^-}^{i,N},\tilde{W})}\big)
		\\
		& \qquad\qquad\qquad\qquad
		+ \sum_{l \neq i}\frac{1}{N}
		\sum_{(\tilde{V}, S_{u^-}^{l,N}, \tilde{W}) \in
			\supp(\xi_{u^-}^{i,N})} 
		\sum_{\Delta \in J_i^N} P_{u,i}^{\Delta}
		\big( \delta_{(\tilde{V},S_{u^-}^{l,N},\tilde{W} + \Delta^{il})}
		- \delta_{(\tilde{V},S_{u^-}^{l,N},\tilde{W})}\big).
	\end{align*}
	As for the term \circled{4}, noting the second term of the right hand side of the last equation is
	of order $\frac{1}{N}$, we can find a term $\tilde{C}_i^N$ of $\frac{1}{N}$ such that
	\begin{align*}
		\sum_{\Delta \in J_i^N} P_{u,i}^{\Delta} \xi_{u,\Delta}^{i,N}
		+ \tilde{C}^{i,N}
		&
		= \xi_{u^-}^{i,N} 
		+ \sum_{l \neq i}\frac{1}{N}
		\sum_{(\tilde{V}, S_{u^-}^{l,N}, \tilde{W}) \in
			\supp(\xi_{u^-}^{i,N})}
		p^+(S_{u^-}^{l,N},\tilde W)
		\big( \delta_{(\tilde{V},S_{u^-}^{l,N},\tilde{W} + 1)}
		- \delta_{(\tilde{V},S_{u^-}^{l,N},\tilde{W})}\big)
		\\
		&
		= \frac{1}{N} \sum_{(\tilde{V}, \tilde S, \tilde{W}) \in
			\supp(\xi_{u^-}^{i,N})} 
		p^+(\tilde S,\tilde W)
		\delta_{(\tilde{V},\tilde S,\tilde{W} + 1)}
		+ \big(1-p^+(\tilde S,\tilde W)\big)
		\delta_{(\tilde{V},\tilde S,\tilde{W})}
		\\
		&
		= \nu^+(\xi_{u^-}^{i,N}) 
	\end{align*}
	where
	\begin{equation}\label{eq:def-nuplus}
	\nu^+(\xi) (v,A,w)
	= \int_A p^+(s,w-1) \xi(v,ds,\{w-1\}) 
	+ \int_A (1-p^+(s,w)) \xi(v,ds,w).
	\end{equation}
	\textcolor{black}{But $p^+$ takes values in $[0,1]$ so using the triangular inequality,
		$\xi \mapsto \nu^+(\xi)$ is continuous. We can then conclude using
		Assumption~\ref{ass:lim-mu-star-1}.}
	\end{proof}
 
\subsubsection{Justifying Main Result~\ref{th:equation limite systeme de particules avec plasticite}}
We are now in position to detail why Main Result~\ref{th:equation limite systeme de particules avec plasticite} can be conjectured.
	Under Assumptions~\ref{hyp:cdt-init},~\ref{ass:lim-mu-star-1} and~\ref{ass:lim-mu-star-2},
	by making $N$ tends to infinity in equation
	\eqref{eq:mu-avec-plasticite} and using the results of
	Propositions~\ref{prop:limite-terme-4} and~\ref{prop:limite-terme-3}, 
	we should get the following dynamics of $\mu_t^*$:
	\begin{itemize}
		\item \(\mathcal L(X_0^*) = \lim_{N \rightarrow \infty} \mathcal L(X_0^{1,N}) = \mu_0^*\)
		and in particular $S_0^*$ is distributed by the $\rho_0$ distribution
		which is absolutely continuous with respect to the Lebesgue measure,
		\item using Notation~\ref{nota:produitmesure} we have,
		for all $t \geq 0$, for all $\Psi \in C_b^{1,1}(E)$,
		\begin{align}\label{eq:dynamique mu plasticite}
		\hspace{-2em}
		\begin{split}
		\langle \mu_t^*, \Psi\rangle
		= &
		\ \langle \mu_0^*, \Psi\rangle 
		+ \int_0^t \int_E 
		\Big(
		\partial_s \Psi(v,s, \xi) + \partial_{\xi} \Psi(v,s, \xi) \cdot (-\partial_s \xi) 
		\Big)
		\mu_u^*(dv,ds,d\xi) du
		\\
		&
		+ \int_0^t \int_{\reels^+ \times \mathcal P(E_m)} \beta 
		\Big[ \Psi \Big( 0, s, \xi \Big) 
		- \Psi \Big( 1, s, \xi \Big) \Big]
		{\mu_u^*}(1,ds,d\xi) du
		\\
		&
		+ \int_0^t \int_E  
		\partial_{\xi} \Psi ( v,s,\xi ) \cdot (\beta \delta_0 \otimes \xi^1 - \beta \delta_1 \otimes \xi^1) 
		\mu_{u}^*(dv,ds,d\xi) du
		\\
		&
		+ \int_0^t \int_{\reels^+ \times \mathcal P(E_m)} \alpha\big(I(\xi)\big)
		\Big[ \Psi \Big( 1, 0, \nu^+(\xi) \Big) 
		- \Psi \Big( 0,s, \xi \Big) \Big]
		{\mu_u^*}(0,ds,d\xi) du
		\\
		&
		+ \int_0^t \int_E  
		\partial_{\xi} \Psi ( v,s,\xi ) \cdot 
		[\delta_1 \otimes \nu^{-,1} (s, \xi, \mu_{u}^* ) - 
		\delta_0 \otimes \nu^0 ( \xi, \mu_{u}^* )]
		\ \mu_{u}^*(dv,ds,d\xi) du,
		\end{split}
		\end{align}
		where $\nu^+$, $\nu^{-,1}$ and $\nu^0$ are respectively defined in~\eqref{eq:def-nuplus}, \eqref{eq:def nu_moins_1} and \eqref{eq:defnutilde0}.
	\end{itemize}
	With the same arguments as the one used in Consequence~\ref{cor:egalite-s-mu-xi-star},
	one has that for all $t \geq 0$,
	\[
	\xi_t^*(\cdot,\cdot,\relatifs) = \mu_t^*(\cdot,\cdot,\mathcal P(E_m)).
	\]
	Then, from Main Result~\ref{th:equation limite systeme de particules avec plasticite},
	$\mu_t^*$ is assumed to admit a density $C^1$ in $s$.
	Therefore, the last equation tells us that $\xi_t^*$ admits a density
	$C^1$ in $s$ and by integration by parts in equation \eqref{eq:dynamique mu plasticite}, we have:
	\begin{align*}
	\begin{split}
	\partial_t \xi_{t}^* &= 
	-\partial_s \xi_{t}^*
	+ \beta \left(\delta_0 \otimes {\xi_{t}^*}^1 - \delta_1 \otimes {\xi_{t}^*}^1\right)
	+ \delta_1 \otimes \nu^{-,1} (S_{t}^*,{\xi_{t}^*},{\mu_t^*}) 
	- \delta_0 \otimes \nu^0 ({\xi_{t}^*},{\mu_t^*})
	.
	\end{split}
	\end{align*}
	From this equation and \eqref{eq:gamma-xi}, \eqref{eq:defnutilde0}, \eqref{eq:defnutilde1},
	we thus obtain the following equations on the density functions $s \mapsto {\xi_t^*}(v,s, w)$:
	\begin{align*}
	&\partial_t{\xi_t^*}(0,s,w) = -\partial_s {\xi_t^*}(0,s,w) - \delta_0(s) {\xi_t^*}(0,0,w)
	+ \beta{\xi_t^*}(1,s,w)
	- \int_{ \mathcal{P}(E_m)}
	\alpha\big(I(\xi')\big)
	\frac{{\xi_t^*}(0,s,w)}{{\xi_t^*}(0,s,\relatifs)}
	{\mu_t^*}(0,s,d\xi') ,
	\\
	\\
	&\partial_t{\xi_t^*}(1,s,w) = -\partial_s {\xi_t^*}(1,s,w) - \delta_0(s) {\xi_t^*}(1,0,w)
	- \beta {\xi_t^*}(1,s,w)
	\\
	&
	\qquad\qquad\qquad \ \
	+ \delta_0(s) \int_{\reels^+ \times \mathcal{P}(E_m)}
	\frac{\alpha\big(I(\xi')\big)}{{\xi_{t}^*}(0,s',\relatifs)}
	\Big[
	{p^-(S_t^*,w)\xi_{t}^*}(0,s',\{w+1\})
	+(1-p^-(S_t^*,w)){\xi_{t}^*}(0,s',w)
	\Big]
	{\mu_t^*}(0,ds',d\xi')
	.
	\end{align*}
	By integrating the previous equations on $s\in[0,\epsilon]$
	and making $\epsilon$ tends to $0$, 
	we obtain the equations with conditions at boundaries $s=0$ of the Main Result~\ref{th:equation limite systeme de particules avec plasticite}.

\section{Code of the Mean-Field equations}\label{ann:ch-4}
We explain in this section our simulations both for the microscopic model, defined through $(V_t^{i,N}, S_t^{i,N}, (W_t^{ij,N})_j)_{i,t \geq 0}$, and the mean field system, $(V_t^{i,*}, S_t^{i,*}, \xi_t^{i,*})_{i,t \geq 0}$. Indeed, we cannot simulate directly the dynamics of $X_t^*$ because we do not have access to its law $\mu_t^*$. Hence, instead we simulate $N$ \textit{typical} neurons $X_t^{1,*},\cdots,X_t^{N,*}$ having the same dynamics as $X_t^*$, except that instead of $\mu_t^*$ we use an approximation of it, $\mu_t^{*,N} = \frac 1N \sum_k \delta_{X_t^{k,*}}$. We start this section with an overview giving the main parameters and describing the way $\xi$ is approximated in the mean field system. Then, we give the initial conditions and finally, we detail how the dynamics are simulated in both cases.

\subsection{Overview}
\subsubsection{Parameters}
First, they both have the same parameters that we list here:
\begin{itemize}
    \item N: number of neurons,
    \item $\alpha$: rate function with the total synaptic current as input (jump of $V$ from $0$ to $1$),
    \item $\beta$: rate of the return to the resting potential (jump of $V$ from $1$ to $0$),
    \item $p^{\pm}$: STDP functions with time delay between spikes as input and probability of synaptic weight jump as output,
    \item $dt$: time step of the simulation,
    \item $t_f$: final time of the simulation,
    \item $p_V$: initial probability of $V$ being in state $1$,
    \item $\rho_{0/1}$: respectively the distribution of the initial $S$ depending on the corresponding $V$ being $0$ or $1$,
    \item $p_W$: initial distribution of the weights.
\end{itemize}

\subsubsection{Approximating \texorpdfstring{$\xi$}{xi}}
The neuron $i$ satisfies the following equations
\begin{itemize}
	\item $X_0^{i,*}$ is determined from the initial state of the microscopic model.
	\item \( d S_t^{i,*} = dt \).
	\item $\xi_t^{i,*}$ admits a density in $s$ and satisfies the following equation: 
	\begin{align}\label{eq:derive-xi-code}
	\hspace{-3em}
	\begin{split}
	\partial_t {\xi_{t}^{i,*}}(0,s,w) &= -\partial_s {\xi_{t}^{i,*}}(0,s,w)
	+ \beta{\xi_{t}^{i,*}}(1,s,w)
	- {\xi_{t}^{i,*}}(0,s,w) \underbrace{ \frac{\frac{1}{N} \sum_k
			\alpha( I(\xi_t^{k,*}) ) \mathbbm 1_{ \{V_t^{k,*} = 0, S_t^{k,*} = s\} } }{
			\frac{1}{N} \sum_l \mathbbm 1_{ \{V_t^{l,*} = 0, S_t^{l,*} = s\} } } }_{a_t^0(s)}
	\\
    \partial_t {\xi_{t}^{i,*}}(1,s,w) &= -\partial_s {\xi_{t}^{i,*}}(1,s,w)
	- \beta {\xi_{t}^{i,*}}(1,s,w)
    \\
	{\xi_{t}^{i,*}}(0,0,w) &= 0,
	\\
	{\xi_{t}^{i,*}}(1,0,w) &= \frac{\frac{dt}{N} \sum_k
		\alpha( I(\xi_t^{k,*}) ) 
		\big[{p^-(S_t^{i,*})\xi_{t}^{i,*}}(0,S_t^{k,*},w+1) 
		+ (1-p^-(S_t^{i,*})){\xi_{t}^{i,*}}(0,S_t^{k,*},w)\big]}{
		\frac 1N \sum_l \mathbbm 1_{ \{V_t^{l,*} = 0, S_t^{l,*} = S_t^{k,*} \} } }
	.
	\end{split}
	\end{align}
    Note that we used $\mu_t^{*,N}$ instead of $\mu_t^*$ present in the limit system of the Main Result \ref{th:equation limite systeme de particules avec plasticite}. 
	\item At rate $\alpha(I(\xi_{t^-}^{i,*})) \mathbbm 1_{ \{V_{t^-}^{i,*} = 0\} } + \beta \mathbbm 1_{ \{V_{t^-}^{i,*} = 1\} }$,
    \begin{itemize}
        \item $X_{t^-}^{i,*}=(1, S_{t^-}^{i,*}, \xi_{t^-}^{i,*}) \to X_{t}^{i,*}=(0, S_{t^-}^{i,*}, \xi_{t^-}^{i,*})$,
        \item $X_{t^-}^{i,*}=(0, S_{t^-}^{i,*}, \xi_{t^-}^{i,*}) \to X_{t}^{i,*}=(1, 0, \xi_{t}^{i,*})$ with 
	\[
	\hspace{-4em}
	\xi_{t}^{i,*}(v,A,w) = \int_A p^+(s,w-1) \xi_{t^-}^{i,*}(v,ds,\{w-1\}) 
	+ \int_A (1-p^+(s,w)) \xi_{t^-}^{i,*}(v,ds,w).
	\]
    \end{itemize}
\end{itemize}

We discretise the equations given in \eqref{eq:derive-xi-code} with an explicit Euler scheme to estimate the derivatives. This means that for a function
$u: (t,s) \mapsto u(t,s)$,
\[
\partial_t u(t,s) \to \frac{u(t+dt,s) - u(t,s)}{dt}
\quad \text{and} \quad
-\partial_s u(t,s) \to \frac{u(t,s - dt) - u(t,s)}{dt}.
\]
This transformation from continuous to discrete space is done in the subsection \ref{sec:from_t_to_t_plus_dt}.

We simulated with a fixed $dt$ time step instead of following exactly the jumping times one by one. Then, as $\xi$ variables are distributions (in $s$ and $w$), we use 2D histograms to represent them. For the continuous variable $s$, we used a grid of size $dt$ starting from $0$ and ending in $m_s = M_s dt$ ($m_s = 15$ms in the simulations of Section \ref{ann:ch-4}) with $M_s \in \mathbb{N}$. It is also necessary to choose bounds for synaptic weights, we  denote them by $w_{min},\ w_{max} \in \relatifs$. Therefore, $\xi$ are probability density functions evaluated on a 2D-grid, 
\[
\xi: (m \times dt,w) \in \llbracket 0,  M_s \rrbracket dt \times \llbracket w_{min},  w_{max} \rrbracket \mapsto \xi(m \times dt,w) \in \reels^+.
\]
Finally, we estimate the function $s \mapsto a_t^0(s)$ which is the average firing rate of neurons with time since last spike equals to $s$. 
We used the \texttt{Polynomials.fit} function from the \texttt{Polynomials} package in Julia to approximate $a_t^0$ with a polynomial of order $5$.

\subsection{Initial conditions}
The initial conditions are the same for $V$ and $S$ in both models. We draw them in an independent and identically distributed (iid) way. For $S$ we have to be more careful as the densities $\rho_{0}$ and $\rho_{1}$ have to satisfy the boundary conditions given by the two last equations in \eqref{eq:derive-xi-code}:
\begin{align}\label{eq:init_cond}
    \rho_{0}(0) = 0 \quad \text{and} \quad 
    \rho_{1}(0) = dt\sum\limits_{\tilde{m}=0}^{M_s} \xi_0^{i_0,*}(0,mdt,\relatifs)\frac{\frac{1}{N} \sum_k
			\alpha( I(\xi_t^{k,*}) ) \mathbbm 1_{ \{V_t^{k,*} = 0, S_t^{k,*} = s\} } }{
			\frac{1}{N} \sum_l \mathbbm 1_{ \{V_t^{l,*} = 0, S_t^{l,*} = s\} } }, 
\end{align}
where $i_0$ is a given neuron; we chose $i_0=1$.
Regarding the synaptic weights, the two systems differ. For the microscopic system, we draw the synaptic weights $(W_0^{ij,N})_{i,j}$ from the distribution $p_W$. 
For the mean field system, we truncate $\rho_{0}$ and $\rho_{1}$ at $m_s$ and normalise them.

Here are the steps we followed.
\begin{enumerate}
	\item We draw $(V_0^{i,N})_i=(V_0^{i,*})_i$ as $N$ iid random variables with binomial distribution of parameter $p_V$.
	
	\item For $i$ such that $V_0^{i,*}=0$, we draw $(S_0^{i,N})_i=(S_0^{i,*})_i$ iid with distributions $\rho_{0}$.

    \item We draw the synaptic weights $(W_0^{ij,N})_{i,j}$ iid with distribution $p_W$.
	
	\item We define for all $(m,w)\in \llbracket 0,  M_s \rrbracket \times \llbracket w_{min},  w_{max} \rrbracket$, 
    \[
    \xi_0^{i,*}(0,mdt,w) = \frac{(1-p_V) \rho_0(mdt) p_W(w)}{dt\sum\limits_{\tilde{m}=0}^{M_s}\rho_0(\tilde{m}dt)},
    \]
    where \(\rho_0\) is distributed as a LogNormal distribution with parameters \(\mu=0.8\) and \(\sigma=1\).
    \item  We first compute from \eqref{eq:init_cond} and then we define for all $(m,w)\in \llbracket 0,  M_s \rrbracket \times \llbracket w_{min},  w_{max} \rrbracket$,
    \[    
        \xi_0^{i,*}(1,mdt,w) = \frac{p_V \rho_1(mdt) p_W(w)}{dt\sum\limits_{\tilde{m}=0}^{M_s}\rho_1(\tilde{m}dt)},
    \]
    where $\rho_1$ is an exponential distribution with parameter \(\rho_1(0)\).
\end{enumerate}
We illustrate in Fig. \ref{fig:initialisation} these initial densities that we used in our simulations.
\begin{figure}[!ht]
    \centering
	\subfloat[]{
		\includegraphics[width=0.45\textwidth]{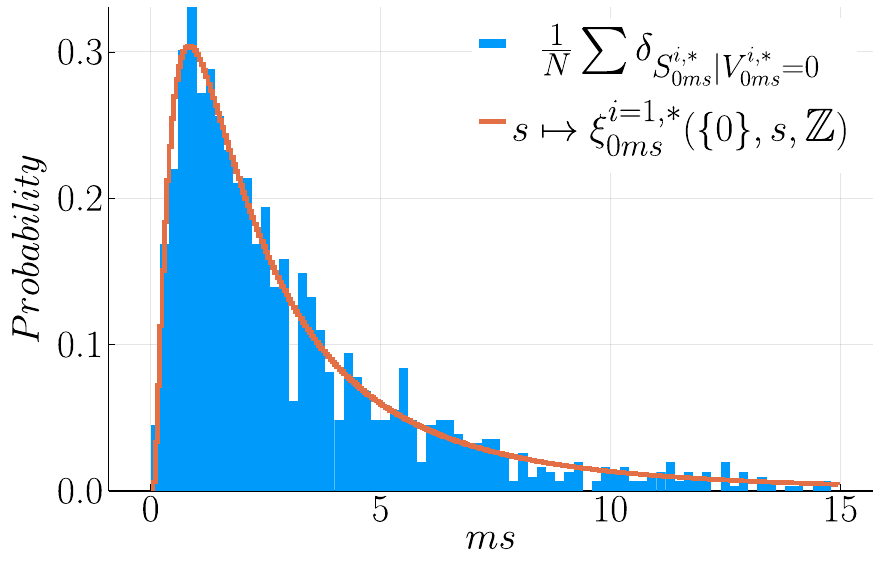}
		\label{sub:distribution-S-V0-initial}}
    \hspace{2em}
	\subfloat[]{\includegraphics[width=0.45\textwidth]{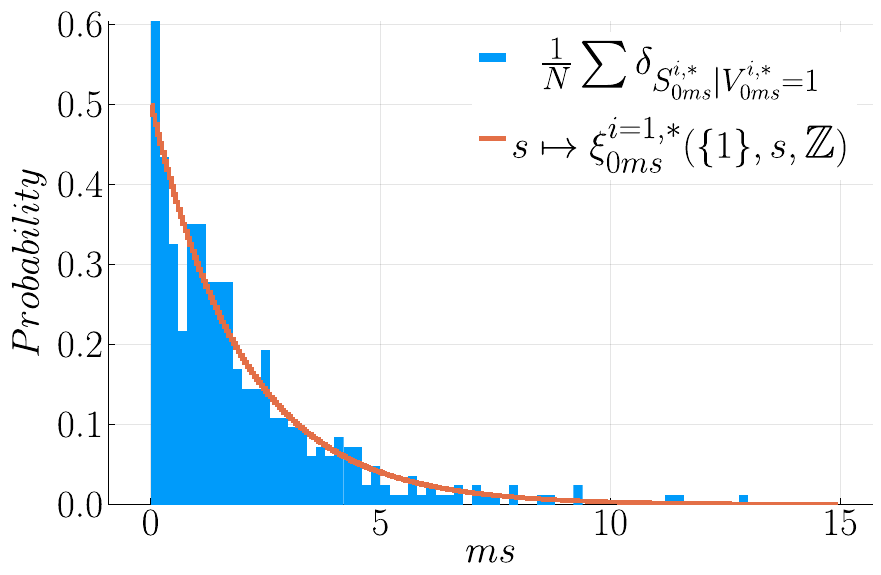}
	\label{sub:distribution-S-V1-initial}}
	\caption{Distributions of the time from last spikes for neurons in state $V=0$ in \eqref{sub:distribution-S-V0-initial} and for neurons in state $V=1$ in \eqref{sub:distribution-S-V1-initial}, at the initial time $t=0\, ms$. Parameters' values of Section \ref{sec:Simulations} are used.}
	\label{fig:initialisation}
\end{figure}

\subsection{Dynamics: from t to t+dt}
\subsubsection{The microscopic system}
We compute the synaptic currents $I_t^{i,N} = \frac{1}{N}\sum_j W_t^{ij,N} V_t^{j,N}$. We draw the jumping times $\tau_t^{i,N}$ of the neurons from exponential distributions with parameter
\[
\alpha(I_t^{i,N})\delta_{\{V_t^{i,N}=0\}} + \beta \delta_{\{V_t^{i,N}=1\}}.
\]
If $\tau_t^{i,N} < dt$ and 
\begin{itemize}
    \item $V_t^{i,N}=1$, then we set $V_{t+dt}^{i,N}=0$ and add $dt$ to the corresponding $S$,
    \item $V_t^{i,N}=0$, then we set $V_{t+dt}^{i,N}=1$, $S_{t+dt}^{i,N}=dt-\tau_t^{i,N}$ and we potentiate (+1) the incoming weights $(W_t^{ij,N})_j$ with probability $p^+(S_t^{j,N})$ and depress (-1) the outgoing weights $(W_t^{ji,N})_j$ with probability $p^-(S_t^{j,N})$.
\end{itemize}
If $\tau_t^{i,N} > dt$, we just add $dt$ to the corresponding $S$.

\subsubsection{The mean field system}\label{sec:from_t_to_t_plus_dt}
We describe how to pass from $t$ to $t+dt$ in the mean field system, in particular we have to describe how to discretise equation \eqref{eq:derive-xi-code}.
\begin{enumerate}
	\item We compute the $I_t^{i,*}$ (initially all the same and deterministic).
	\item We draw the jumping times $\tau_t^{i,*}$ from the exponential distributions of parameters
	\[
	\alpha(I_t^{i,*}) \mathbbm 1_{ \{V_t^{i,*} = 0\} } + \beta \mathbbm 1_{ \{V_t^{i,*} = 1\} }.
	\]
	\item We estimate the function $a_t^0 = 
	(a_t^0(0),\cdots, a_t^0(M_sdt))$. To do so, we can approximate this term looking back at the original term which is
    \begin{align}\label{eq:code_approx_current_m}
        a_t^0(mdt)=\frac{\mu_t^*(0,m dt,\mathcal{P}(E_m))}{{\xi_t^{*}}(0,m dt,\mathbb Z)}
        \underbrace{\int_{ \mathcal{P}(E_m)}
        \alpha\big(I(\xi')\big) \mu_t^*(0,m dt,d\xi')}_{\circled{c}}.
    \end{align}
    Regarding the quotient term in~\ref{eq:code_approx_current_m}, it should be equal to one in theory (same marginal law in $S$ of $\mu$ and $\xi$) so we keep it equal to one. Hence, we are left with the term $\circled{c}$ which is nothing but the expectation of the current knowing that the membrane potential is $V=0$ and its time since last spike $s=mdt$, $\espe{\alpha\big(I(\xi_t^*)\big) | V_t^*=0, S_t^*=mdt}$. We illustrate the approximation of $\circled{c}$ in Fig.~\ref{fig:approx_current} where we can see the randomness of it -- even though being an expectation -- as soon as the initial condition is erased: transition at $5\, ms$ in Fig.~\ref{sub:approx_current_5ms} and in Fig.~\ref{sub:approx_current_30ms} nothing deterministic is left when $t_f=30\,ms>m_s$.
    \begin{figure}[ht!]
        \centering
        \subfloat[]{
            \includegraphics[width=0.45\textwidth]{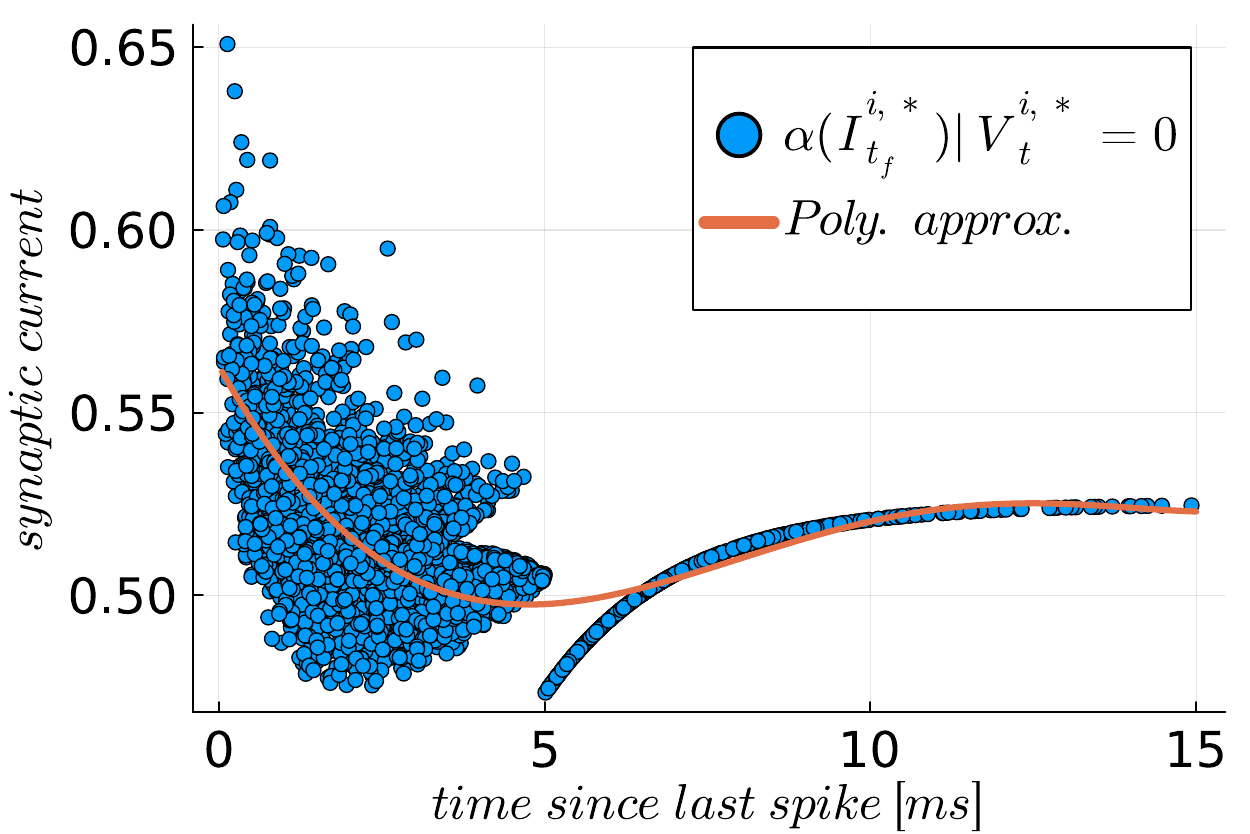}
            \label{sub:approx_current_5ms}}
        \hspace{2em}
        \subfloat[]{\includegraphics[width=0.45\textwidth]{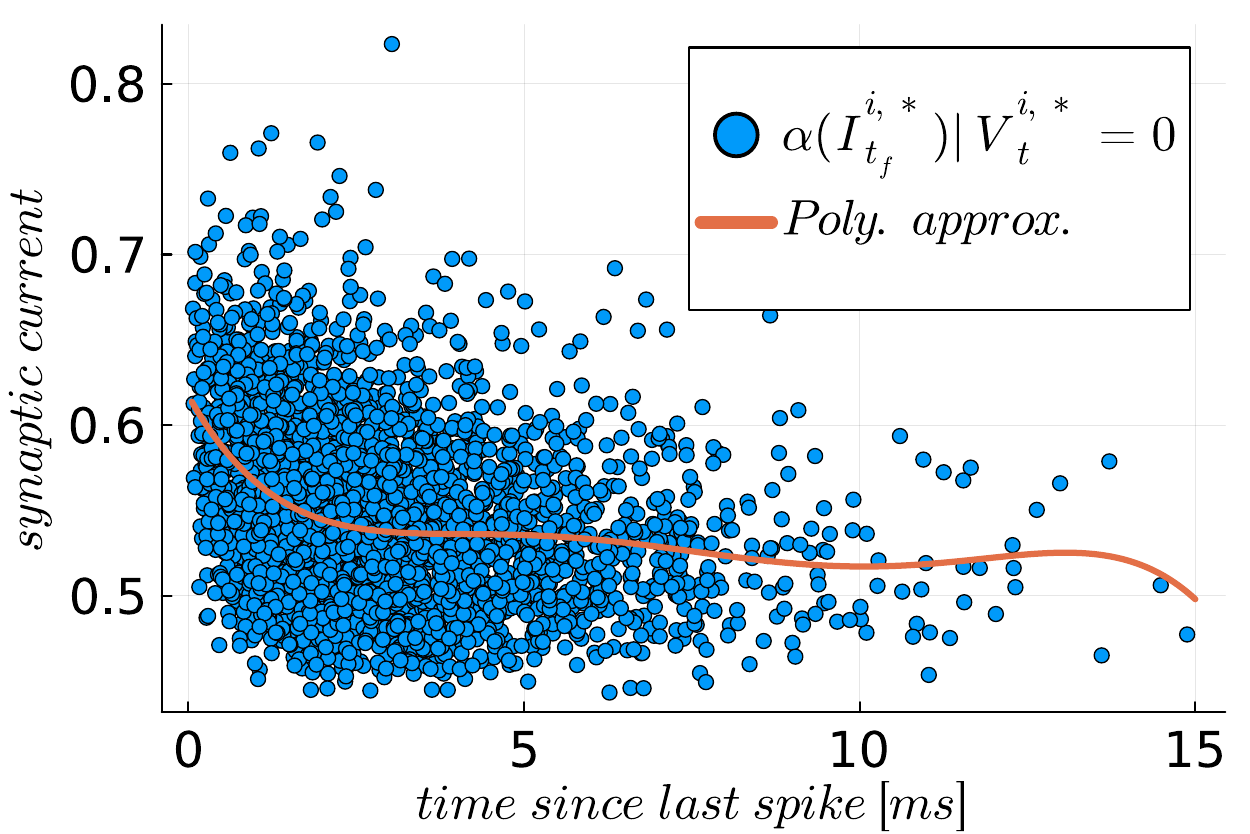}
        \label{sub:approx_current_30ms}}
        \caption{Scatter plot of the synaptic currents from all neurons in state V=0 (blue dots) and the approximation of its expectation (term $\circled{c}$ in \eqref{eq:code_approx_current_m}) at final time $t_f=5ms$ in \ref{sub:approx_current_5ms} and $t_f=30ms$ in \ref{sub:approx_current_30ms}. Parameters' values of Section \ref{sec:Simulations} are used.}
        \label{fig:approx_current}
    \end{figure}
	\item We compute the terms due to the drift of the $\xi$.
	For all $m \in \llbracket 0,M_s \rrbracket$:
	\begin{align*}
	&{\xi_{t+dt}^{i,*}}(0,m dt,w) = 
	{\xi_t^{i,*}}(0,(m - 1)dt,w) \left(1- a_t^0\big((m-1)dt\big)dt \right)
	+ {\xi_t^{i,*}}(1,(m - 1)dt,w) \beta dt
	\\
	&{\xi_{t+dt}^{i,*}}(0,0,w) = 0,
	\\
	\\
	&{\xi_{t+dt}^{i,*}}(1,mdt,w) = 
	{\xi_t^{i,*}}(1,(m - 1)dt,w) \left(1 - \beta dt \right)
	\\
	&{\xi_{t+dt}^{i,*}}(1,0,w) = dt \sum_m  
	a_t^0(mdt) \big[p^-(S_t^{i,*})\xi_t^{i,*}(0,mdt,w+1) 
    + \big(1 - p^-(S_t^{i,*})\big) \xi_t^{i,*}(0,mdt,w)\big]
	\end{align*}
    \item If $\tau_t^{i,*} < dt$ and $V_t^{i,*} = 0$,
    \[
	\xi_{t+dt}^{i,*}(v,mdt,w) = p^+(mdt)\xi_{t}^{i,*}(v,mdt,w-1) + (1-p^+(mdt))\xi_{t}^{i,*}(v,mdt,w).
	\]
	\item We update
	\[
	S_{t+dt}^{i,*} = (S_t^{i,*} + dt) \mathbbm 1_{ \{\tau_t^{i,*} \geq dt\} } + 
	(dt - \tau_t^{i,*}) \mathbbm 1_{ \{\tau_t^{i,*} < dt\} },
	\]
	\item For $i$ such that $\tau_t^{i,*} < dt$, we make the membrane potentials $V^{i,*}$ jump accordingly.
	\item We compute the elements that we keep in memory at each time step : 
	expectations of the potentials, of the time from last the spike, of the pre-synaptic weights, distribution of the synaptic current input.
	\vspace{13em}
\end{enumerate}
\end{widetext}

\bibliography{biblio}

\begin{thebibliography}{55}%
\makeatletter
\providecommand \@ifxundefined [1]{%
 \@ifx{#1\undefined}
}%
\providecommand \@ifnum [1]{%
 \ifnum #1\expandafter \@firstoftwo
 \else \expandafter \@secondoftwo
 \fi
}%
\providecommand \@ifx [1]{%
 \ifx #1\expandafter \@firstoftwo
 \else \expandafter \@secondoftwo
 \fi
}%
\providecommand \natexlab [1]{#1}%
\providecommand \enquote  [1]{``#1''}%
\providecommand \bibnamefont  [1]{#1}%
\providecommand \bibfnamefont [1]{#1}%
\providecommand \citenamefont [1]{#1}%
\providecommand \href@noop [0]{\@secondoftwo}%
\providecommand \href [0]{\begingroup \@sanitize@url \@href}%
\providecommand \@href[1]{\@@startlink{#1}\@@href}%
\providecommand \@@href[1]{\endgroup#1\@@endlink}%
\providecommand \@sanitize@url [0]{\catcode `\\12\catcode `\$12\catcode
  `\&12\catcode `\#12\catcode `\^12\catcode `\_12\catcode `\%12\relax}%
\providecommand \@@startlink[1]{}%
\providecommand \@@endlink[0]{}%
\providecommand \url  [0]{\begingroup\@sanitize@url \@url }%
\providecommand \@url [1]{\endgroup\@href {#1}{\urlprefix }}%
\providecommand \urlprefix  [0]{URL }%
\providecommand \Eprint [0]{\href }%
\providecommand \doibase [0]{https://doi.org/}%
\providecommand \selectlanguage [0]{\@gobble}%
\providecommand \bibinfo  [0]{\@secondoftwo}%
\providecommand \bibfield  [0]{\@secondoftwo}%
\providecommand \translation [1]{[#1]}%
\providecommand \BibitemOpen [0]{}%
\providecommand \bibitemStop [0]{}%
\providecommand \bibitemNoStop [0]{.\EOS\space}%
\providecommand \EOS [0]{\spacefactor3000\relax}%
\providecommand \BibitemShut  [1]{\csname bibitem#1\endcsname}%
\let\auto@bib@innerbib\@empty
\bibitem [{\citenamefont {Ocker}\ \emph {et~al.}(2015)\citenamefont {Ocker},
  \citenamefont {Litwin-Kumar},\ and\ \citenamefont {Doiron}}]{ocker2015self}%
  \BibitemOpen
  \bibfield  {author} {\bibinfo {author} {\bibfnamefont {G.~K.}\ \bibnamefont
  {Ocker}}, \bibinfo {author} {\bibfnamefont {A.}~\bibnamefont
  {Litwin-Kumar}},\ and\ \bibinfo {author} {\bibfnamefont {B.}~\bibnamefont
  {Doiron}},\ }\bibfield  {title} {\bibinfo {title} {Self-{Organization} of
  {Microcircuits} in {Networks} of {Spiking} {Neurons} with {Plastic}
  {Synapses}},\ }\href {https://doi.org/10.1371/journal.pcbi.1004458}
  {\bibfield  {journal} {\bibinfo  {journal} {PLOS Computational Biology}\
  }\textbf {\bibinfo {volume} {11}},\ \bibinfo {pages} {1} (\bibinfo {year}
  {2015})},\ \bibinfo {note} {publisher: Public Library of Science}\BibitemShut
  {NoStop}%
\bibitem [{\citenamefont {Mongillo}\ \emph {et~al.}(2017)\citenamefont
  {Mongillo}, \citenamefont {Rumpel},\ and\ \citenamefont
  {Loewenstein}}]{mongillo2017intrinsic}%
  \BibitemOpen
  \bibfield  {author} {\bibinfo {author} {\bibfnamefont {G.}~\bibnamefont
  {Mongillo}}, \bibinfo {author} {\bibfnamefont {S.}~\bibnamefont {Rumpel}},\
  and\ \bibinfo {author} {\bibfnamefont {Y.}~\bibnamefont {Loewenstein}},\
  }\bibfield  {title} {\bibinfo {title} {Intrinsic volatility of synaptic
  connections — a challenge to the synaptic trace theory of memory},\ }\href
  {https://doi.org/https://doi.org/10.1016/j.conb.2017.06.006} {\bibfield
  {journal} {\bibinfo  {journal} {Current Opinion in Neurobiology}\ }\textbf
  {\bibinfo {volume} {46}},\ \bibinfo {pages} {7 } (\bibinfo {year}
  {2017})}\BibitemShut {NoStop}%
\bibitem [{\citenamefont {Ocker}\ and\ \citenamefont
  {Doiron}(2019)}]{ocker2019training}%
  \BibitemOpen
  \bibfield  {author} {\bibinfo {author} {\bibfnamefont {G.~K.}\ \bibnamefont
  {Ocker}}\ and\ \bibinfo {author} {\bibfnamefont {B.}~\bibnamefont {Doiron}},\
  }\bibfield  {title} {\bibinfo {title} {Training and spontaneous reinforcement
  of neuronal assemblies by spike timing plasticity},\ }\href@noop {}
  {\bibfield  {journal} {\bibinfo  {journal} {Cerebral Cortex}\ }\textbf
  {\bibinfo {volume} {29}},\ \bibinfo {pages} {937} (\bibinfo {year}
  {2019})}\BibitemShut {NoStop}%
\bibitem [{\citenamefont {Brzosko}\ \emph {et~al.}(2019)\citenamefont
  {Brzosko}, \citenamefont {Mierau},\ and\ \citenamefont
  {Paulsen}}]{brzosko2019neuromodulation}%
  \BibitemOpen
  \bibfield  {author} {\bibinfo {author} {\bibfnamefont {Z.}~\bibnamefont
  {Brzosko}}, \bibinfo {author} {\bibfnamefont {S.~B.}\ \bibnamefont
  {Mierau}},\ and\ \bibinfo {author} {\bibfnamefont {O.}~\bibnamefont
  {Paulsen}},\ }\bibfield  {title} {\bibinfo {title} {Neuromodulation of
  {Spike-Timing-Dependent Plasticity: Past, Present, and Future}},\ }\href@noop
  {} {\bibfield  {journal} {\bibinfo  {journal} {Neuron}\ }\textbf {\bibinfo
  {volume} {103}},\ \bibinfo {pages} {563} (\bibinfo {year}
  {2019})}\BibitemShut {NoStop}%
\bibitem [{\citenamefont {Kempter}\ \emph {et~al.}(1999)\citenamefont
  {Kempter}, \citenamefont {Gerstner},\ and\ \citenamefont {van
  Hemmen}}]{kempter1999hebbian}%
  \BibitemOpen
  \bibfield  {author} {\bibinfo {author} {\bibfnamefont {R.}~\bibnamefont
  {Kempter}}, \bibinfo {author} {\bibfnamefont {W.}~\bibnamefont {Gerstner}},\
  and\ \bibinfo {author} {\bibfnamefont {J.~L.}\ \bibnamefont {van Hemmen}},\
  }\bibfield  {title} {\bibinfo {title} {Hebbian learning and spiking
  neurons},\ }\href {https://doi.org/10.1103/PhysRevE.59.4498} {\bibfield
  {journal} {\bibinfo  {journal} {Phys. Rev. E}\ }\textbf {\bibinfo {volume}
  {59}},\ \bibinfo {pages} {4498} (\bibinfo {year} {1999})}\BibitemShut
  {NoStop}%
\bibitem [{\citenamefont {Bi}\ and\ \citenamefont
  {Poo}(1998)}]{bi1998synaptic}%
  \BibitemOpen
  \bibfield  {author} {\bibinfo {author} {\bibfnamefont {G.-q.}\ \bibnamefont
  {Bi}}\ and\ \bibinfo {author} {\bibfnamefont {M.-m.}\ \bibnamefont {Poo}},\
  }\bibfield  {title} {\bibinfo {title} {Synaptic modifications in cultured
  hippocampal neurons: dependence on spike timing, synaptic strength, and
  postsynaptic cell type},\ }\href@noop {} {\bibfield  {journal} {\bibinfo
  {journal} {Journal of neuroscience}\ }\textbf {\bibinfo {volume} {18}},\
  \bibinfo {pages} {10464} (\bibinfo {year} {1998})}\BibitemShut {NoStop}%
\bibitem [{\citenamefont {Robert}\ and\ \citenamefont
  {Vignoud}(2021{\natexlab{a}})}]{robert2021stochastic}%
  \BibitemOpen
  \bibfield  {author} {\bibinfo {author} {\bibfnamefont {P.}~\bibnamefont
  {Robert}}\ and\ \bibinfo {author} {\bibfnamefont {G.}~\bibnamefont
  {Vignoud}},\ }\bibfield  {title} {\bibinfo {title} {Stochastic models of
  neural synaptic plasticity},\ }\href@noop {} {\bibfield  {journal} {\bibinfo
  {journal} {SIAM Journal on Applied Mathematics}\ }\textbf {\bibinfo {volume}
  {81}},\ \bibinfo {pages} {1821} (\bibinfo {year}
  {2021}{\natexlab{a}})}\BibitemShut {NoStop}%
\bibitem [{\citenamefont {Robert}\ and\ \citenamefont
  {Vignoud}(2021{\natexlab{b}})}]{robert2021stochastic_scaling}%
  \BibitemOpen
  \bibfield  {author} {\bibinfo {author} {\bibfnamefont {P.}~\bibnamefont
  {Robert}}\ and\ \bibinfo {author} {\bibfnamefont {G.}~\bibnamefont
  {Vignoud}},\ }\bibfield  {title} {\bibinfo {title} {Stochastic models of
  neural synaptic plasticity: A scaling approach},\ }\href@noop {} {\bibfield
  {journal} {\bibinfo  {journal} {SIAM Journal on Applied Mathematics}\
  }\textbf {\bibinfo {volume} {81}},\ \bibinfo {pages} {2362} (\bibinfo {year}
  {2021}{\natexlab{b}})}\BibitemShut {NoStop}%
\bibitem [{\citenamefont {O'Connor}\ \emph {et~al.}(2005)\citenamefont
  {O'Connor}, \citenamefont {Wittenberg},\ and\ \citenamefont
  {Wang}}]{o2005graded}%
  \BibitemOpen
  \bibfield  {author} {\bibinfo {author} {\bibfnamefont {D.~H.}\ \bibnamefont
  {O'Connor}}, \bibinfo {author} {\bibfnamefont {G.~M.}\ \bibnamefont
  {Wittenberg}},\ and\ \bibinfo {author} {\bibfnamefont {S.~S.-H.}\
  \bibnamefont {Wang}},\ }\bibfield  {title} {\bibinfo {title} {Graded
  bidirectional synaptic plasticity is composed of switch-like unitary
  events},\ }\href {http://www.pnas.org/content/102/27/9679.short} {\bibfield
  {journal} {\bibinfo  {journal} {Proceedings of the National Academy of
  Sciences of the United States of America}\ }\textbf {\bibinfo {volume}
  {102}},\ \bibinfo {pages} {9679} (\bibinfo {year} {2005})}\BibitemShut
  {NoStop}%
\bibitem [{\citenamefont {Appleby}\ and\ \citenamefont
  {Elliott}(2005)}]{appleby2005synaptic}%
  \BibitemOpen
  \bibfield  {author} {\bibinfo {author} {\bibfnamefont {P.~A.}\ \bibnamefont
  {Appleby}}\ and\ \bibinfo {author} {\bibfnamefont {T.}~\bibnamefont
  {Elliott}},\ }\bibfield  {title} {\bibinfo {title} {Synaptic and temporal
  ensemble interpretation of {Spike-Timing-Dependent Plasticity}},\ }\href@noop
  {} {\bibfield  {journal} {\bibinfo  {journal} {Neural Computation}\ }\textbf
  {\bibinfo {volume} {17}},\ \bibinfo {pages} {2316} (\bibinfo {year}
  {2005})}\BibitemShut {NoStop}%
\bibitem [{\citenamefont {Appleby}\ and\ \citenamefont
  {Elliott}(2006)}]{appleby2006stable}%
  \BibitemOpen
  \bibfield  {author} {\bibinfo {author} {\bibfnamefont {P.~A.}\ \bibnamefont
  {Appleby}}\ and\ \bibinfo {author} {\bibfnamefont {T.}~\bibnamefont
  {Elliott}},\ }\bibfield  {title} {\bibinfo {title} {Stable competitive
  dynamics emerge from multispike interactions in a stochastic model of
  {Spike-Timing-Dependent Plasticity}},\ }\href@noop {} {\bibfield  {journal}
  {\bibinfo  {journal} {Neural Computation}\ }\textbf {\bibinfo {volume}
  {18}},\ \bibinfo {pages} {2414} (\bibinfo {year} {2006})}\BibitemShut
  {NoStop}%
\bibitem [{\citenamefont {Helson}(2017)}]{helson2017new}%
  \BibitemOpen
  \bibfield  {author} {\bibinfo {author} {\bibfnamefont {P.}~\bibnamefont
  {Helson}},\ }\bibfield  {title} {\bibinfo {title} {{A new stochastic STDP
  Rule in a neural Network Model}},\ }\href@noop {} {\bibfield  {journal}
  {\bibinfo  {journal} {arXiv preprint arXiv:1706.00364}\ } (\bibinfo {year}
  {2017})}\BibitemShut {NoStop}%
\bibitem [{\citenamefont {Helson}(2021)}]{helson2021plasticity}%
  \BibitemOpen
  \bibfield  {author} {\bibinfo {author} {\bibfnamefont {P.}~\bibnamefont
  {Helson}},\ }\emph {\bibinfo {title} {Plasticity in networks of spiking
  neurons in interaction}},\ \href@noop {} {Ph.D. thesis},\ \bibinfo  {school}
  {Universit{\'e} C{\^o}te d'Azur} (\bibinfo {year} {2021})\BibitemShut
  {NoStop}%
\bibitem [{\citenamefont {Galtier}\ and\ \citenamefont
  {Wainrib}(2012)}]{galtier2012multiscale}%
  \BibitemOpen
  \bibfield  {author} {\bibinfo {author} {\bibfnamefont {M.}~\bibnamefont
  {Galtier}}\ and\ \bibinfo {author} {\bibfnamefont {G.}~\bibnamefont
  {Wainrib}},\ }\bibfield  {title} {\bibinfo {title} {Multiscale analysis of
  slow-fast neuronal learning models with noise},\ }\href
  {https://doi.org/10.1186/2190-8567-2-13} {\bibfield  {journal} {\bibinfo
  {journal} {J. Math. Neurosci.}\ }\textbf {\bibinfo {volume} {2}},\ \bibinfo
  {pages} {Art. 13 64} (\bibinfo {year} {2012})}\BibitemShut {NoStop}%
\bibitem [{\citenamefont {Perthame}\ \emph {et~al.}(2017)\citenamefont
  {Perthame}, \citenamefont {Salort},\ and\ \citenamefont
  {Wainrib}}]{perthame2017distributed}%
  \BibitemOpen
  \bibfield  {author} {\bibinfo {author} {\bibfnamefont {B.}~\bibnamefont
  {Perthame}}, \bibinfo {author} {\bibfnamefont {D.}~\bibnamefont {Salort}},\
  and\ \bibinfo {author} {\bibfnamefont {G.}~\bibnamefont {Wainrib}},\
  }\bibfield  {title} {\bibinfo {title} {Distributed synaptic weights in a
  {LIF} neural network and learning rules},\ }\href
  {https://doi.org/10.1016/j.physd.2017.05.005} {\bibfield  {journal} {\bibinfo
   {journal} {Phys. D}\ }\textbf {\bibinfo {volume} {353/354}},\ \bibinfo
  {pages} {20} (\bibinfo {year} {2017})}\BibitemShut {NoStop}%
\bibitem [{\citenamefont {Kasai}\ \emph {et~al.}(2021)\citenamefont {Kasai},
  \citenamefont {Ziv}, \citenamefont {Okazaki}, \citenamefont {Yagishita},\
  and\ \citenamefont {Toyoizumi}}]{kasai2021spine}%
  \BibitemOpen
  \bibfield  {author} {\bibinfo {author} {\bibfnamefont {H.}~\bibnamefont
  {Kasai}}, \bibinfo {author} {\bibfnamefont {N.~E.}\ \bibnamefont {Ziv}},
  \bibinfo {author} {\bibfnamefont {H.}~\bibnamefont {Okazaki}}, \bibinfo
  {author} {\bibfnamefont {S.}~\bibnamefont {Yagishita}},\ and\ \bibinfo
  {author} {\bibfnamefont {T.}~\bibnamefont {Toyoizumi}},\ }\bibfield  {title}
  {\bibinfo {title} {Spine dynamics in the brain, mental disorders and
  artificial neural networks},\ }\href@noop {} {\bibfield  {journal} {\bibinfo
  {journal} {Nature Reviews Neuroscience}\ }\textbf {\bibinfo {volume} {22}},\
  \bibinfo {pages} {407} (\bibinfo {year} {2021})}\BibitemShut {NoStop}%
\bibitem [{\citenamefont {Choquet}\ and\ \citenamefont
  {Triller}(2013)}]{choquet2013dynamic}%
  \BibitemOpen
  \bibfield  {author} {\bibinfo {author} {\bibfnamefont {D.}~\bibnamefont
  {Choquet}}\ and\ \bibinfo {author} {\bibfnamefont {A.}~\bibnamefont
  {Triller}},\ }\bibfield  {title} {\bibinfo {title} {The dynamic synapse},\
  }\href@noop {} {\bibfield  {journal} {\bibinfo  {journal} {Neuron}\ }\textbf
  {\bibinfo {volume} {80}},\ \bibinfo {pages} {691} (\bibinfo {year}
  {2013})}\BibitemShut {NoStop}%
\bibitem [{\citenamefont {Vasin}\ \emph {et~al.}(2019)\citenamefont {Vasin},
  \citenamefont {Sabeva}, \citenamefont {Torres}, \citenamefont {Phan},
  \citenamefont {Bushong}, \citenamefont {Ellisman},\ and\ \citenamefont
  {Bykhovskaia}}]{vasin2019two}%
  \BibitemOpen
  \bibfield  {author} {\bibinfo {author} {\bibfnamefont {A.}~\bibnamefont
  {Vasin}}, \bibinfo {author} {\bibfnamefont {N.}~\bibnamefont {Sabeva}},
  \bibinfo {author} {\bibfnamefont {C.}~\bibnamefont {Torres}}, \bibinfo
  {author} {\bibfnamefont {S.}~\bibnamefont {Phan}}, \bibinfo {author}
  {\bibfnamefont {E.~A.}\ \bibnamefont {Bushong}}, \bibinfo {author}
  {\bibfnamefont {M.~H.}\ \bibnamefont {Ellisman}},\ and\ \bibinfo {author}
  {\bibfnamefont {M.}~\bibnamefont {Bykhovskaia}},\ }\bibfield  {title}
  {\bibinfo {title} {Two pathways for the activity-dependent growth and
  differentiation of synaptic boutons in drosophila},\ }\href@noop {}
  {\bibfield  {journal} {\bibinfo  {journal} {ENeuro}\ }\textbf {\bibinfo
  {volume} {6}} (\bibinfo {year} {2019})}\BibitemShut {NoStop}%
\bibitem [{\citenamefont {Aso}\ and\ \citenamefont
  {Rubin}(2016)}]{aso2016dopaminergic}%
  \BibitemOpen
  \bibfield  {author} {\bibinfo {author} {\bibfnamefont {Y.}~\bibnamefont
  {Aso}}\ and\ \bibinfo {author} {\bibfnamefont {G.~M.}\ \bibnamefont
  {Rubin}},\ }\bibfield  {title} {\bibinfo {title} {Dopaminergic neurons write
  and update memories with cell-type-specific rules},\ }\href@noop {}
  {\bibfield  {journal} {\bibinfo  {journal} {Elife}\ }\textbf {\bibinfo
  {volume} {5}},\ \bibinfo {pages} {e16135} (\bibinfo {year}
  {2016})}\BibitemShut {NoStop}%
\bibitem [{\citenamefont {Fulton}\ \emph {et~al.}(2024)\citenamefont {Fulton},
  \citenamefont {Zimmerman}, \citenamefont {Samuel}, \citenamefont {Vogt},\
  and\ \citenamefont {Datta}}]{fulton2024common}%
  \BibitemOpen
  \bibfield  {author} {\bibinfo {author} {\bibfnamefont {K.~A.}\ \bibnamefont
  {Fulton}}, \bibinfo {author} {\bibfnamefont {D.}~\bibnamefont {Zimmerman}},
  \bibinfo {author} {\bibfnamefont {A.}~\bibnamefont {Samuel}}, \bibinfo
  {author} {\bibfnamefont {K.}~\bibnamefont {Vogt}},\ and\ \bibinfo {author}
  {\bibfnamefont {S.~R.}\ \bibnamefont {Datta}},\ }\bibfield  {title} {\bibinfo
  {title} {Common principles for odour coding across vertebrates and
  invertebrates},\ }\href@noop {} {\bibfield  {journal} {\bibinfo  {journal}
  {Nature Reviews Neuroscience}\ ,\ \bibinfo {pages} {1}} (\bibinfo {year}
  {2024})}\BibitemShut {NoStop}%
\bibitem [{\citenamefont {Rodrigues}\ \emph {et~al.}(2023)\citenamefont
  {Rodrigues}, \citenamefont {Tigaret}, \citenamefont {Marie}, \citenamefont
  {O'Donnell},\ and\ \citenamefont {Veltz}}]{rodrigues_stochastic_2023}%
  \BibitemOpen
  \bibfield  {author} {\bibinfo {author} {\bibfnamefont {Y.~E.}\ \bibnamefont
  {Rodrigues}}, \bibinfo {author} {\bibfnamefont {C.~M.}\ \bibnamefont
  {Tigaret}}, \bibinfo {author} {\bibfnamefont {H.}~\bibnamefont {Marie}},
  \bibinfo {author} {\bibfnamefont {C.}~\bibnamefont {O'Donnell}},\ and\
  \bibinfo {author} {\bibfnamefont {R.}~\bibnamefont {Veltz}},\ }\bibfield
  {title} {\bibinfo {title} {{A stochastic model of hippocampal synaptic
  plasticity with geometrical readout of enzyme dynamics}},\ }\bibfield
  {journal} {\bibinfo  {journal} {\textbf{eLife}}\ }\textbf {\bibinfo {volume}
  {12}},\ \href {https://doi.org/10.7554/eLife.80152} {10.7554/eLife.80152}
  (\bibinfo {year} {2023})\BibitemShut {NoStop}%
\bibitem [{\citenamefont {Zenke}\ \emph {et~al.}(2017)\citenamefont {Zenke},
  \citenamefont {Gerstner},\ and\ \citenamefont {Ganguli}}]{zenke2017temporal}%
  \BibitemOpen
  \bibfield  {author} {\bibinfo {author} {\bibfnamefont {F.}~\bibnamefont
  {Zenke}}, \bibinfo {author} {\bibfnamefont {W.}~\bibnamefont {Gerstner}},\
  and\ \bibinfo {author} {\bibfnamefont {S.}~\bibnamefont {Ganguli}},\
  }\bibfield  {title} {\bibinfo {title} {The temporal paradox of {Hebbian}
  learning and homeostatic plasticity},\ }\href
  {https://doi.org/https://doi.org/10.1016/j.conb.2017.03.015} {\bibfield
  {journal} {\bibinfo  {journal} {Current Opinion in Neurobiology}\ }\textbf
  {\bibinfo {volume} {43}},\ \bibinfo {pages} {166 } (\bibinfo {year}
  {2017})}\BibitemShut {NoStop}%
\bibitem [{\citenamefont {Lansner}\ \emph {et~al.}(2023)\citenamefont
  {Lansner}, \citenamefont {Fiebig},\ and\ \citenamefont
  {Herman}}]{lansner2023fast}%
  \BibitemOpen
  \bibfield  {author} {\bibinfo {author} {\bibfnamefont {A.}~\bibnamefont
  {Lansner}}, \bibinfo {author} {\bibfnamefont {F.}~\bibnamefont {Fiebig}},\
  and\ \bibinfo {author} {\bibfnamefont {P.}~\bibnamefont {Herman}},\
  }\bibfield  {title} {\bibinfo {title} {Fast hebbian plasticity and working
  memory},\ }\href@noop {} {\bibfield  {journal} {\bibinfo  {journal} {Current
  opinion in neurobiology}\ }\textbf {\bibinfo {volume} {83}},\ \bibinfo
  {pages} {102809} (\bibinfo {year} {2023})}\BibitemShut {NoStop}%
\bibitem [{\citenamefont {Masquelier}\ \emph {et~al.}(2008)\citenamefont
  {Masquelier}, \citenamefont {Guyonneau},\ and\ \citenamefont
  {Thorpe}}]{masquelier2008spike}%
  \BibitemOpen
  \bibfield  {author} {\bibinfo {author} {\bibfnamefont {T.}~\bibnamefont
  {Masquelier}}, \bibinfo {author} {\bibfnamefont {R.}~\bibnamefont
  {Guyonneau}},\ and\ \bibinfo {author} {\bibfnamefont {S.~J.}\ \bibnamefont
  {Thorpe}},\ }\bibfield  {title} {\bibinfo {title} {Spike {Timing} {Dependent}
  {Plasticity} {Finds} the {Start} of {Repeating} {Patterns} in {Continuous}
  {Spike} {Trains}},\ }\href {https://doi.org/10.1371/journal.pone.0001377}
  {\bibfield  {journal} {\bibinfo  {journal} {PLOS ONE}\ }\textbf {\bibinfo
  {volume} {3}},\ \bibinfo {pages} {1} (\bibinfo {year} {2008})},\ \bibinfo
  {note} {publisher: Public Library of Science}\BibitemShut {NoStop}%
\bibitem [{\citenamefont {Masquelier}\ \emph {et~al.}(2009)\citenamefont
  {Masquelier}, \citenamefont {Guyonneau},\ and\ \citenamefont
  {Thorpe}}]{masquelier2009competitive}%
  \BibitemOpen
  \bibfield  {author} {\bibinfo {author} {\bibfnamefont {T.}~\bibnamefont
  {Masquelier}}, \bibinfo {author} {\bibfnamefont {R.}~\bibnamefont
  {Guyonneau}},\ and\ \bibinfo {author} {\bibfnamefont {S.~J.}\ \bibnamefont
  {Thorpe}},\ }\bibfield  {title} {\bibinfo {title} {Competitive {STDP}-based
  spike pattern learning},\ }\href@noop {} {\bibfield  {journal} {\bibinfo
  {journal} {Neural Computation}\ }\textbf {\bibinfo {volume} {21}},\ \bibinfo
  {pages} {1259} (\bibinfo {year} {2009})}\BibitemShut {NoStop}%
\bibitem [{\citenamefont {Gilson}\ \emph {et~al.}(2010)\citenamefont {Gilson},
  \citenamefont {Burkitt}, \citenamefont {Grayden}, \citenamefont {Thomas},\
  and\ \citenamefont {van Hemmen}}]{gilson2010emergence}%
  \BibitemOpen
  \bibfield  {author} {\bibinfo {author} {\bibfnamefont {M.}~\bibnamefont
  {Gilson}}, \bibinfo {author} {\bibfnamefont {A.~N.}\ \bibnamefont {Burkitt}},
  \bibinfo {author} {\bibfnamefont {D.~B.}\ \bibnamefont {Grayden}}, \bibinfo
  {author} {\bibfnamefont {D.~A.}\ \bibnamefont {Thomas}},\ and\ \bibinfo
  {author} {\bibfnamefont {J.~L.}\ \bibnamefont {van Hemmen}},\ }\bibfield
  {title} {\bibinfo {title} {Emergence of network structure due to
  {Spike-Timing-Dependent Plasticity} in recurrent neuronal networks v:
  self-organization schemes and weight dependence},\ }\href@noop {} {\bibfield
  {journal} {\bibinfo  {journal} {Biological Cybernetics}\ }\textbf {\bibinfo
  {volume} {103}},\ \bibinfo {pages} {365} (\bibinfo {year}
  {2010})}\BibitemShut {NoStop}%
\bibitem [{\citenamefont {Gilson}\ and\ \citenamefont
  {Fukai}(2011)}]{gilson2011stability}%
  \BibitemOpen
  \bibfield  {author} {\bibinfo {author} {\bibfnamefont {M.}~\bibnamefont
  {Gilson}}\ and\ \bibinfo {author} {\bibfnamefont {T.}~\bibnamefont {Fukai}},\
  }\bibfield  {title} {\bibinfo {title} {Stability versus {Neuronal}
  {Specialization} for {STDP}: {Long}-{Tail} {Weight} {Distributions} {Solve}
  the {Dilemma}},\ }\href {https://doi.org/10.1371/journal.pone.0025339}
  {\bibfield  {journal} {\bibinfo  {journal} {PLOS ONE}\ }\textbf {\bibinfo
  {volume} {6}},\ \bibinfo {pages} {1} (\bibinfo {year} {2011})},\ \bibinfo
  {note} {publisher: Public Library of Science}\BibitemShut {NoStop}%
\bibitem [{\citenamefont {Pechersky}\ \emph {et~al.}(2017)\citenamefont
  {Pechersky}, \citenamefont {Via},\ and\ \citenamefont
  {Yambartsev}}]{pechersky2017stochastic}%
  \BibitemOpen
  \bibfield  {author} {\bibinfo {author} {\bibfnamefont {E.}~\bibnamefont
  {Pechersky}}, \bibinfo {author} {\bibfnamefont {G.}~\bibnamefont {Via}},\
  and\ \bibinfo {author} {\bibfnamefont {A.}~\bibnamefont {Yambartsev}},\
  }\bibfield  {title} {\bibinfo {title} {Stochastic ising model with plastic
  interactions},\ }\href@noop {} {\bibfield  {journal} {\bibinfo  {journal}
  {Statistics \& Probability Letters}\ }\textbf {\bibinfo {volume} {123}},\
  \bibinfo {pages} {100} (\bibinfo {year} {2017})}\BibitemShut {NoStop}%
\bibitem [{\citenamefont {Montbri{\'o}}\ \emph {et~al.}(2015)\citenamefont
  {Montbri{\'o}}, \citenamefont {Paz{\'o}},\ and\ \citenamefont
  {Roxin}}]{montbrio2015macroscopic}%
  \BibitemOpen
  \bibfield  {author} {\bibinfo {author} {\bibfnamefont {E.}~\bibnamefont
  {Montbri{\'o}}}, \bibinfo {author} {\bibfnamefont {D.}~\bibnamefont
  {Paz{\'o}}},\ and\ \bibinfo {author} {\bibfnamefont {A.}~\bibnamefont
  {Roxin}},\ }\bibfield  {title} {\bibinfo {title} {Macroscopic description for
  networks of spiking neurons},\ }\href@noop {} {\bibfield  {journal} {\bibinfo
   {journal} {Physical Review X}\ }\textbf {\bibinfo {volume} {5}},\ \bibinfo
  {pages} {021028} (\bibinfo {year} {2015})}\BibitemShut {NoStop}%
\bibitem [{\citenamefont {De~Masi}\ \emph {et~al.}(2015)\citenamefont
  {De~Masi}, \citenamefont {Galves}, \citenamefont {L\"{o}cherbach},\ and\
  \citenamefont {Presutti}}]{de2015hydrodynamic}%
  \BibitemOpen
  \bibfield  {author} {\bibinfo {author} {\bibfnamefont {A.}~\bibnamefont
  {De~Masi}}, \bibinfo {author} {\bibfnamefont {A.}~\bibnamefont {Galves}},
  \bibinfo {author} {\bibfnamefont {E.}~\bibnamefont {L\"{o}cherbach}},\ and\
  \bibinfo {author} {\bibfnamefont {E.}~\bibnamefont {Presutti}},\ }\bibfield
  {title} {\bibinfo {title} {Hydrodynamic limit for interacting neurons},\
  }\href {https://doi.org/10.1007/s10955-014-1145-1} {\bibfield  {journal}
  {\bibinfo  {journal} {J. Stat. Phys.}\ }\textbf {\bibinfo {volume} {158}},\
  \bibinfo {pages} {866} (\bibinfo {year} {2015})}\BibitemShut {NoStop}%
\bibitem [{\citenamefont {Delarue}\ \emph {et~al.}(2015)\citenamefont
  {Delarue}, \citenamefont {Inglis}, \citenamefont {Rubenthaler},\ and\
  \citenamefont {Tanr\'{e}}}]{delarue2015particle}%
  \BibitemOpen
  \bibfield  {author} {\bibinfo {author} {\bibfnamefont {F.}~\bibnamefont
  {Delarue}}, \bibinfo {author} {\bibfnamefont {J.}~\bibnamefont {Inglis}},
  \bibinfo {author} {\bibfnamefont {S.}~\bibnamefont {Rubenthaler}},\ and\
  \bibinfo {author} {\bibfnamefont {E.}~\bibnamefont {Tanr\'{e}}},\ }\bibfield
  {title} {\bibinfo {title} {Particle systems with a singular mean-field
  self-excitation. {A}pplication to neuronal networks},\ }\href
  {https://doi.org/10.1016/j.spa.2015.01.007} {\bibfield  {journal} {\bibinfo
  {journal} {Stochastic Process. Appl.}\ }\textbf {\bibinfo {volume} {125}},\
  \bibinfo {pages} {2451} (\bibinfo {year} {2015})}\BibitemShut {NoStop}%
\bibitem [{\citenamefont {Fournier}\ and\ \citenamefont
  {L\"{o}cherbach}(2016)}]{fournier2016toy}%
  \BibitemOpen
  \bibfield  {author} {\bibinfo {author} {\bibfnamefont {N.}~\bibnamefont
  {Fournier}}\ and\ \bibinfo {author} {\bibfnamefont {E.}~\bibnamefont
  {L\"{o}cherbach}},\ }\bibfield  {title} {\bibinfo {title} {On a toy model of
  interacting neurons},\ }\href {https://doi.org/10.1214/15-AIHP701} {\bibfield
   {journal} {\bibinfo  {journal} {Ann. Inst. Henri Poincar\'{e} Probab.
  Stat.}\ }\textbf {\bibinfo {volume} {52}},\ \bibinfo {pages} {1844} (\bibinfo
  {year} {2016})}\BibitemShut {NoStop}%
\bibitem [{\citenamefont {Delattre}\ and\ \citenamefont
  {Fournier}(2016)}]{delattre2016statistical}%
  \BibitemOpen
  \bibfield  {author} {\bibinfo {author} {\bibfnamefont {S.}~\bibnamefont
  {Delattre}}\ and\ \bibinfo {author} {\bibfnamefont {N.}~\bibnamefont
  {Fournier}},\ }\bibfield  {title} {\bibinfo {title} {Statistical inference
  versus mean field limit for {H}awkes processes},\ }\href
  {https://doi.org/10.1214/16-EJS1142} {\bibfield  {journal} {\bibinfo
  {journal} {Electron. J. Stat.}\ }\textbf {\bibinfo {volume} {10}},\ \bibinfo
  {pages} {1223} (\bibinfo {year} {2016})}\BibitemShut {NoStop}%
\bibitem [{\citenamefont {Chevallier}(2017)}]{chevallier2017mean}%
  \BibitemOpen
  \bibfield  {author} {\bibinfo {author} {\bibfnamefont {J.}~\bibnamefont
  {Chevallier}},\ }\bibfield  {title} {\bibinfo {title} {Mean-field limit of
  generalized {H}awkes processes},\ }\href
  {https://doi.org/10.1016/j.spa.2017.02.012} {\bibfield  {journal} {\bibinfo
  {journal} {Stochastic Process. Appl.}\ }\textbf {\bibinfo {volume} {127}},\
  \bibinfo {pages} {3870} (\bibinfo {year} {2017})}\BibitemShut {NoStop}%
\bibitem [{\citenamefont {Farkhooi}\ and\ \citenamefont
  {Stannat}(2017)}]{farkhooi2017complete}%
  \BibitemOpen
  \bibfield  {author} {\bibinfo {author} {\bibfnamefont {F.}~\bibnamefont
  {Farkhooi}}\ and\ \bibinfo {author} {\bibfnamefont {W.}~\bibnamefont
  {Stannat}},\ }\bibfield  {title} {\bibinfo {title} {Complete mean-field
  theory for dynamics of binary recurrent networks},\ }\href@noop {} {\bibfield
   {journal} {\bibinfo  {journal} {Physical review letters}\ }\textbf {\bibinfo
  {volume} {119}},\ \bibinfo {pages} {208301} (\bibinfo {year}
  {2017})}\BibitemShut {NoStop}%
\bibitem [{\citenamefont {Huang}\ and\ \citenamefont
  {Lin}(2023)}]{huang2023new}%
  \BibitemOpen
  \bibfield  {author} {\bibinfo {author} {\bibfnamefont {C.-H.}\ \bibnamefont
  {Huang}}\ and\ \bibinfo {author} {\bibfnamefont {C.-C.~K.}\ \bibnamefont
  {Lin}},\ }\bibfield  {title} {\bibinfo {title} {New biophysical rate-based
  modeling of long-term plasticity in mean-field neuronal population models},\
  }\href@noop {} {\bibfield  {journal} {\bibinfo  {journal} {Computers in
  Biology and Medicine}\ }\textbf {\bibinfo {volume} {163}},\ \bibinfo {pages}
  {107213} (\bibinfo {year} {2023})}\BibitemShut {NoStop}%
\bibitem [{\citenamefont {Lorenzi}\ \emph {et~al.}(2023)\citenamefont
  {Lorenzi}, \citenamefont {Geminiani}, \citenamefont {Zerlaut}, \citenamefont
  {De~Grazia}, \citenamefont {Destexhe}, \citenamefont {Gandini
  Wheeler-Kingshott}, \citenamefont {Palesi}, \citenamefont {Casellato},\ and\
  \citenamefont {D’Angelo}}]{lorenzi2023multi}%
  \BibitemOpen
  \bibfield  {author} {\bibinfo {author} {\bibfnamefont {R.~M.}\ \bibnamefont
  {Lorenzi}}, \bibinfo {author} {\bibfnamefont {A.}~\bibnamefont {Geminiani}},
  \bibinfo {author} {\bibfnamefont {Y.}~\bibnamefont {Zerlaut}}, \bibinfo
  {author} {\bibfnamefont {M.}~\bibnamefont {De~Grazia}}, \bibinfo {author}
  {\bibfnamefont {A.}~\bibnamefont {Destexhe}}, \bibinfo {author}
  {\bibfnamefont {C.~A.}\ \bibnamefont {Gandini Wheeler-Kingshott}}, \bibinfo
  {author} {\bibfnamefont {F.}~\bibnamefont {Palesi}}, \bibinfo {author}
  {\bibfnamefont {C.}~\bibnamefont {Casellato}},\ and\ \bibinfo {author}
  {\bibfnamefont {E.}~\bibnamefont {D’Angelo}},\ }\bibfield  {title}
  {\bibinfo {title} {A multi-layer mean-field model of the cerebellum embedding
  microstructure and population-specific dynamics},\ }\href@noop {} {\bibfield
  {journal} {\bibinfo  {journal} {PLOS Computational Biology}\ }\textbf
  {\bibinfo {volume} {19}},\ \bibinfo {pages} {e1011434} (\bibinfo {year}
  {2023})}\BibitemShut {NoStop}%
\bibitem [{\citenamefont {Duchet}\ \emph {et~al.}(2023)\citenamefont {Duchet},
  \citenamefont {Bick},\ and\ \citenamefont {Byrne}}]{duchet2023mean}%
  \BibitemOpen
  \bibfield  {author} {\bibinfo {author} {\bibfnamefont {B.}~\bibnamefont
  {Duchet}}, \bibinfo {author} {\bibfnamefont {C.}~\bibnamefont {Bick}},\ and\
  \bibinfo {author} {\bibfnamefont {{\'A}.}~\bibnamefont {Byrne}},\ }\bibfield
  {title} {\bibinfo {title} {Mean-field approximations with adaptive coupling
  for networks with spike-timing-dependent plasticity},\ }\href@noop {}
  {\bibfield  {journal} {\bibinfo  {journal} {Neural computation}\ }\textbf
  {\bibinfo {volume} {35}},\ \bibinfo {pages} {1481} (\bibinfo {year}
  {2023})}\BibitemShut {NoStop}%
\bibitem [{\citenamefont {Robert}\ and\ \citenamefont
  {Vignoud}(2021{\natexlab{c}})}]{robert2021spontaneous}%
  \BibitemOpen
  \bibfield  {author} {\bibinfo {author} {\bibfnamefont {P.}~\bibnamefont
  {Robert}}\ and\ \bibinfo {author} {\bibfnamefont {G.}~\bibnamefont
  {Vignoud}},\ }\bibfield  {title} {\bibinfo {title} {On the spontaneous
  dynamics of synaptic weights in stochastic models with pair-based stdp},\
  }\href@noop {} {\bibfield  {journal} {\bibinfo  {journal} {arXiv preprint
  arXiv:2111.07919}\ } (\bibinfo {year} {2021}{\natexlab{c}})}\BibitemShut
  {NoStop}%
\bibitem [{\citenamefont {Sznitman}(1991)}]{sznitman_topic}%
  \BibitemOpen
  \bibfield  {author} {\bibinfo {author} {\bibfnamefont {A.-S.}\ \bibnamefont
  {Sznitman}},\ }\bibfield  {title} {\bibinfo {title} {Topics in propagation of
  chaos},\ }in\ \href {https://doi.org/10.1007/BFb0085169} {\emph {\bibinfo
  {booktitle} {\'{E}cole d'\'{E}t\'{e} de {P}robabilit\'{e}s de {S}aint-{F}lour
  {XIX}---1989}}},\ \bibinfo {series} {Lecture Notes in Math.}, Vol.\ \bibinfo
  {volume} {1464}\ (\bibinfo  {publisher} {Springer, Berlin},\ \bibinfo {year}
  {1991})\ pp.\ \bibinfo {pages} {165--251}\BibitemShut {NoStop}%
\bibitem [{\citenamefont {Dolcemascolo}\ \emph {et~al.}(2020)\citenamefont
  {Dolcemascolo}, \citenamefont {Miazek}, \citenamefont {Veltz}, \citenamefont
  {Marino},\ and\ \citenamefont {Barland}}]{dolcemascolo2020effective}%
  \BibitemOpen
  \bibfield  {author} {\bibinfo {author} {\bibfnamefont {A.}~\bibnamefont
  {Dolcemascolo}}, \bibinfo {author} {\bibfnamefont {A.}~\bibnamefont
  {Miazek}}, \bibinfo {author} {\bibfnamefont {R.}~\bibnamefont {Veltz}},
  \bibinfo {author} {\bibfnamefont {F.}~\bibnamefont {Marino}},\ and\ \bibinfo
  {author} {\bibfnamefont {S.}~\bibnamefont {Barland}},\ }\bibfield  {title}
  {\bibinfo {title} {Effective low-dimensional dynamics of a mean-field coupled
  network of slow-fast spiking lasers},\ }\href@noop {} {\bibfield  {journal}
  {\bibinfo  {journal} {Physical Review E}\ }\textbf {\bibinfo {volume}
  {101}},\ \bibinfo {pages} {052208} (\bibinfo {year} {2020})}\BibitemShut
  {NoStop}%
\bibitem [{\citenamefont {Pastor-Satorras}\ \emph {et~al.}(2015)\citenamefont
  {Pastor-Satorras}, \citenamefont {Castellano}, \citenamefont {Van~Mieghem},\
  and\ \citenamefont {Vespignani}}]{pastor2015epidemic}%
  \BibitemOpen
  \bibfield  {author} {\bibinfo {author} {\bibfnamefont {R.}~\bibnamefont
  {Pastor-Satorras}}, \bibinfo {author} {\bibfnamefont {C.}~\bibnamefont
  {Castellano}}, \bibinfo {author} {\bibfnamefont {P.}~\bibnamefont
  {Van~Mieghem}},\ and\ \bibinfo {author} {\bibfnamefont {A.}~\bibnamefont
  {Vespignani}},\ }\bibfield  {title} {\bibinfo {title} {Epidemic processes in
  complex networks},\ }\href@noop {} {\bibfield  {journal} {\bibinfo  {journal}
  {Reviews of modern physics}\ }\textbf {\bibinfo {volume} {87}},\ \bibinfo
  {pages} {925} (\bibinfo {year} {2015})}\BibitemShut {NoStop}%
\bibitem [{\citenamefont {Lozano}\ \emph {et~al.}(2019)\citenamefont {Lozano},
  \citenamefont {Lipsman}, \citenamefont {Bergman}, \citenamefont {Brown},
  \citenamefont {Chabardes}, \citenamefont {Chang}, \citenamefont {Matthews},
  \citenamefont {McIntyre}, \citenamefont {Schlaepfer}, \citenamefont
  {Schulder} \emph {et~al.}}]{lozano2019deep}%
  \BibitemOpen
  \bibfield  {author} {\bibinfo {author} {\bibfnamefont {A.~M.}\ \bibnamefont
  {Lozano}}, \bibinfo {author} {\bibfnamefont {N.}~\bibnamefont {Lipsman}},
  \bibinfo {author} {\bibfnamefont {H.}~\bibnamefont {Bergman}}, \bibinfo
  {author} {\bibfnamefont {P.}~\bibnamefont {Brown}}, \bibinfo {author}
  {\bibfnamefont {S.}~\bibnamefont {Chabardes}}, \bibinfo {author}
  {\bibfnamefont {J.~W.}\ \bibnamefont {Chang}}, \bibinfo {author}
  {\bibfnamefont {K.}~\bibnamefont {Matthews}}, \bibinfo {author}
  {\bibfnamefont {C.~C.}\ \bibnamefont {McIntyre}}, \bibinfo {author}
  {\bibfnamefont {T.~E.}\ \bibnamefont {Schlaepfer}}, \bibinfo {author}
  {\bibfnamefont {M.}~\bibnamefont {Schulder}}, \emph {et~al.},\ }\bibfield
  {title} {\bibinfo {title} {Deep brain stimulation: current challenges and
  future directions},\ }\href@noop {} {\bibfield  {journal} {\bibinfo
  {journal} {Nature Reviews Neurology}\ }\textbf {\bibinfo {volume} {15}},\
  \bibinfo {pages} {148} (\bibinfo {year} {2019})}\BibitemShut {NoStop}%
\bibitem [{\citenamefont {McKean}(1966)}]{mckean1966class}%
  \BibitemOpen
  \bibfield  {author} {\bibinfo {author} {\bibfnamefont {H.~P.}\ \bibnamefont
  {McKean}},\ }\bibfield  {title} {\bibinfo {title} {A class of markov
  processes associated with nonlinear parabolic equations},\ }\href@noop {}
  {\bibfield  {journal} {\bibinfo  {journal} {Proceedings of the National
  Academy of Sciences}\ }\textbf {\bibinfo {volume} {56}},\ \bibinfo {pages}
  {1907} (\bibinfo {year} {1966})}\BibitemShut {NoStop}%
\bibitem [{\citenamefont {Bressloff}(2010)}]{bressloff2010metastable}%
  \BibitemOpen
  \bibfield  {author} {\bibinfo {author} {\bibfnamefont {P.~C.}\ \bibnamefont
  {Bressloff}},\ }\bibfield  {title} {\bibinfo {title} {Metastable states and
  quasicycles in a stochastic wilson-cowan model of neuronal population
  dynamics},\ }\href@noop {} {\bibfield  {journal} {\bibinfo  {journal}
  {Physical Review E}\ }\textbf {\bibinfo {volume} {82}},\ \bibinfo {pages}
  {051903} (\bibinfo {year} {2010})}\BibitemShut {NoStop}%
\bibitem [{\citenamefont {Benayoun}\ \emph {et~al.}(2010)\citenamefont
  {Benayoun}, \citenamefont {Cowan}, \citenamefont {van Drongelen},\ and\
  \citenamefont {Wallace}}]{benayoun2010avalanches}%
  \BibitemOpen
  \bibfield  {author} {\bibinfo {author} {\bibfnamefont {M.}~\bibnamefont
  {Benayoun}}, \bibinfo {author} {\bibfnamefont {J.~D.}\ \bibnamefont {Cowan}},
  \bibinfo {author} {\bibfnamefont {W.}~\bibnamefont {van Drongelen}},\ and\
  \bibinfo {author} {\bibfnamefont {E.}~\bibnamefont {Wallace}},\ }\bibfield
  {title} {\bibinfo {title} {Avalanches in a {Stochastic} {Model} of {Spiking}
  {Neurons}},\ }\href {https://doi.org/10.1371/journal.pcbi.1000846} {\bibfield
   {journal} {\bibinfo  {journal} {PLOS Computational Biology}\ }\textbf
  {\bibinfo {volume} {6}},\ \bibinfo {pages} {1} (\bibinfo {year} {2010})},\
  \bibinfo {note} {publisher: Public Library of Science}\BibitemShut {NoStop}%
\bibitem [{\citenamefont {Litwin-Kumar}\ and\ \citenamefont
  {Doiron}(2014)}]{litwin2014formation}%
  \BibitemOpen
  \bibfield  {author} {\bibinfo {author} {\bibfnamefont {A.}~\bibnamefont
  {Litwin-Kumar}}\ and\ \bibinfo {author} {\bibfnamefont {B.}~\bibnamefont
  {Doiron}},\ }\bibfield  {title} {\bibinfo {title} {Formation and maintenance
  of neuronal assemblies through synaptic plasticity},\ }\href@noop {}
  {\bibfield  {journal} {\bibinfo  {journal} {Nature Communications}\ }\textbf
  {\bibinfo {volume} {5}},\ \bibinfo {pages} {1} (\bibinfo {year}
  {2014})}\BibitemShut {NoStop}%
\bibitem [{\citenamefont {Ocker}\ \emph {et~al.}(2017)\citenamefont {Ocker},
  \citenamefont {Hu}, \citenamefont {Buice}, \citenamefont {Doiron},
  \citenamefont {Josi{\'c}}, \citenamefont {Rosenbaum},\ and\ \citenamefont
  {Shea-Brown}}]{ocker2017statistics}%
  \BibitemOpen
  \bibfield  {author} {\bibinfo {author} {\bibfnamefont {G.~K.}\ \bibnamefont
  {Ocker}}, \bibinfo {author} {\bibfnamefont {Y.}~\bibnamefont {Hu}}, \bibinfo
  {author} {\bibfnamefont {M.~A.}\ \bibnamefont {Buice}}, \bibinfo {author}
  {\bibfnamefont {B.}~\bibnamefont {Doiron}}, \bibinfo {author} {\bibfnamefont
  {K.}~\bibnamefont {Josi{\'c}}}, \bibinfo {author} {\bibfnamefont
  {R.}~\bibnamefont {Rosenbaum}},\ and\ \bibinfo {author} {\bibfnamefont
  {E.}~\bibnamefont {Shea-Brown}},\ }\bibfield  {title} {\bibinfo {title} {From
  the statistics of connectivity to the statistics of spike times in neuronal
  networks},\ }\href@noop {} {\bibfield  {journal} {\bibinfo  {journal}
  {Current opinion in neurobiology}\ }\textbf {\bibinfo {volume} {46}},\
  \bibinfo {pages} {109} (\bibinfo {year} {2017})}\BibitemShut {NoStop}%
\bibitem [{\citenamefont {Jabin}\ \emph {et~al.}(2024)\citenamefont {Jabin},
  \citenamefont {Schmutz},\ and\ \citenamefont {Zhou}}]{jabin2024non}%
  \BibitemOpen
  \bibfield  {author} {\bibinfo {author} {\bibfnamefont {P.-E.}\ \bibnamefont
  {Jabin}}, \bibinfo {author} {\bibfnamefont {V.}~\bibnamefont {Schmutz}},\
  and\ \bibinfo {author} {\bibfnamefont {D.}~\bibnamefont {Zhou}},\ }\bibfield
  {title} {\bibinfo {title} {Non-exchangeable networks of integrate-and-fire
  neurons: spatially-extended mean-field limit of the empirical measure},\
  }\href@noop {} {\bibfield  {journal} {\bibinfo  {journal} {arXiv preprint
  arXiv:2409.06325}\ } (\bibinfo {year} {2024})}\BibitemShut {NoStop}%
\bibitem [{\citenamefont {Cormier}\ \emph {et~al.}(2020)\citenamefont
  {Cormier}, \citenamefont {Tanr{\'e}},\ and\ \citenamefont
  {Veltz}}]{cormier2020long}%
  \BibitemOpen
  \bibfield  {author} {\bibinfo {author} {\bibfnamefont {Q.}~\bibnamefont
  {Cormier}}, \bibinfo {author} {\bibfnamefont {E.}~\bibnamefont {Tanr{\'e}}},\
  and\ \bibinfo {author} {\bibfnamefont {R.}~\bibnamefont {Veltz}},\ }\bibfield
   {title} {\bibinfo {title} {Long time behavior of a mean-field model of
  interacting neurons},\ }\href@noop {} {\bibfield  {journal} {\bibinfo
  {journal} {Stochastic Process. Appl.}\ }\textbf {\bibinfo {volume} {130}},\
  \bibinfo {pages} {2553} (\bibinfo {year} {2020})}\BibitemShut {NoStop}%
\bibitem [{\citenamefont {Clark}\ and\ \citenamefont
  {Abbott}(2024)}]{clark2024theory}%
  \BibitemOpen
  \bibfield  {author} {\bibinfo {author} {\bibfnamefont {D.~G.}\ \bibnamefont
  {Clark}}\ and\ \bibinfo {author} {\bibfnamefont {L.}~\bibnamefont {Abbott}},\
  }\bibfield  {title} {\bibinfo {title} {Theory of coupled neuronal-synaptic
  dynamics},\ }\href@noop {} {\bibfield  {journal} {\bibinfo  {journal}
  {Physical Review X}\ }\textbf {\bibinfo {volume} {14}},\ \bibinfo {pages}
  {021001} (\bibinfo {year} {2024})}\BibitemShut {NoStop}%
\bibitem [{\citenamefont {Billingsley}(1999)}]{billingsley_convergence_1999}%
  \BibitemOpen
  \bibfield  {author} {\bibinfo {author} {\bibfnamefont {P.}~\bibnamefont
  {Billingsley}},\ }\href {https://doi.org/10.1002/9780470316962} {\emph
  {\bibinfo {title} {Convergence of probability measures}}},\ \bibinfo
  {edition} {2nd}\ ed.,\ Wiley Series in Probability and Statistics:
  Probability and Statistics\ (\bibinfo  {publisher} {John Wiley \& Sons, Inc.,
  New York},\ \bibinfo {year} {1999})\ pp.\ \bibinfo {pages} {x+277},\ \bibinfo
  {note} {a Wiley-Interscience Publication}\BibitemShut {NoStop}%
\bibitem [{\citenamefont {Kechris}(1995)}]{kechris_class_1995}%
  \BibitemOpen
  \bibfield  {author} {\bibinfo {author} {\bibfnamefont {A.~S.}\ \bibnamefont
  {Kechris}},\ }\href {https://doi.org/10.1007/978-1-4612-4190-4} {\emph
  {\bibinfo {title} {Classical descriptive set theory}}},\ \bibinfo {series}
  {Graduate Texts in Mathematics}, Vol.\ \bibinfo {volume} {156}\ (\bibinfo
  {publisher} {Springer-Verlag, New York},\ \bibinfo {year} {1995})\ pp.\
  \bibinfo {pages} {xviii+402}\BibitemShut {NoStop}%
\bibitem [{\citenamefont {Hale}(1980)}]{hale1980ordinary}%
  \BibitemOpen
  \bibfield  {author} {\bibinfo {author} {\bibfnamefont {J.~K.}\ \bibnamefont
  {Hale}},\ }\href@noop {} {\emph {\bibinfo {title} {Ordinary differential
  equations}}},\ \bibinfo {edition} {2nd}\ ed.\ (\bibinfo  {publisher} {Robert
  E. Krieger Publishing Co., Inc., Huntington, N.Y.},\ \bibinfo {year} {1980})\
  pp.\ \bibinfo {pages} {xvi+361}\BibitemShut {NoStop}%
\bibitem [{\citenamefont {Ambrosio}\ \emph {et~al.}(2000)\citenamefont
  {Ambrosio}, \citenamefont {Fusco},\ and\ \citenamefont
  {Pallara}}]{ambrosio_functions_2000}%
  \BibitemOpen
  \bibfield  {author} {\bibinfo {author} {\bibfnamefont {L.}~\bibnamefont
  {Ambrosio}}, \bibinfo {author} {\bibfnamefont {N.}~\bibnamefont {Fusco}},\
  and\ \bibinfo {author} {\bibfnamefont {D.}~\bibnamefont {Pallara}},\
  }\href@noop {} {\emph {\bibinfo {title} {Functions of bounded variation and
  free discontinuity problems}}},\ Oxford Mathematical Monographs\ (\bibinfo
  {publisher} {The Clarendon Press, Oxford University Press, New York},\
  \bibinfo {year} {2000})\ pp.\ \bibinfo {pages} {xviii+434}\BibitemShut
  {NoStop}%
\end{thebibliography}%

\end{document}